\numberwithin{equation}{section}
\begin{document}
\theoremstyle{plain}
\newtheorem{thm}{Theorem}[section]
\newtheorem{lem}[thm]{Lemma}
\newtheorem{prop}[thm]{Proposition}
\newtheorem{cor}[thm]{Corollary}
\theoremstyle{definition}
\newtheorem{assum}[thm]{Assumption}
\newtheorem{notation}[thm]{Notation}
\newtheorem{defn}[thm]{Definition}
\newtheorem{clm}[thm]{Claim}
\newtheorem{ex}[thm]{Example}
\theoremstyle{remark}
\newtheorem{rem}[thm]{Remark}
\newcommand{\unit}{\mathbb I}
\newcommand{\ali}[1]{{\mathfrak A}_{[ #1 ,\infty)}}
\newcommand{\alm}[1]{{\mathfrak A}_{(-\infty, #1 ]}}
\newcommand{\nn}[1]{\lV #1 \rV}
\newcommand{\br}{{\mathbb R}}
\newcommand{\dm}{{\rm dom}\mu}
\newcommand{\lb}{l_{\bb}(n,n_0,k_R,k_L,\lal,\bbD,\bbG,Y)}
\newcommand{\Ad}{\mathop{\mathrm{Ad}}\nolimits}
\newcommand{\Proj}{\mathop{\mathrm{Proj}}\nolimits}
\newcommand{\RRe}{\mathop{\mathrm{Re}}\nolimits}
\newcommand{\RIm}{\mathop{\mathrm{Im}}\nolimits}
\newcommand{\Wo}{\mathop{\mathrm{Wo}}\nolimits}
\newcommand{\Prim}{\mathop{\mathrm{Prim}_1}\nolimits}
\newcommand{\Primz}{\mathop{\mathrm{Prim}}\nolimits}
\newcommand{\ClassA}{\mathop{\mathrm{ClassA}}\nolimits}
\newcommand{\Class}{\mathop{\mathrm{Class}}\nolimits}
\def\qed{{\unskip\nobreak\hfil\penalty50
\hskip2em\hbox{}\nobreak\hfil$\square$
\parfillskip=0pt \finalhyphendemerits=0\par}\medskip}
\def\proof{\trivlist \item[\hskip \labelsep{\bf Proof.\ }]}
\def\endproof{\null\hfill\qed\endtrivlist\noindent}
\def\proofof[#1]{\trivlist \item[\hskip \labelsep{\bf Proof of #1.\ }]}
\def\endproofof{\null\hfill\qed\endtrivlist\noindent}
\newcommand{\caA}{{\mathcal A}}
\newcommand{\caB}{{\mathcal B}}
\newcommand{\caC}{{\mathcal C}}
\newcommand{\caD}{{\mathcal D}}
\newcommand{\caE}{{\mathcal E}}
\newcommand{\caF}{{\mathcal F}}
\newcommand{\caG}{{\mathcal G}}
\newcommand{\caH}{{\mathcal H}}
\newcommand{\caI}{{\mathcal I}}
\newcommand{\caJ}{{\mathcal J}}
\newcommand{\caK}{{\mathcal K}}
\newcommand{\caL}{{\mathcal L}}
\newcommand{\caM}{{\mathcal M}}
\newcommand{\caN}{{\mathcal N}}
\newcommand{\caO}{{\mathcal O}}
\newcommand{\caP}{{\mathcal P}}
\newcommand{\caQ}{{\mathcal Q}}
\newcommand{\caR}{{\mathcal R}}
\newcommand{\caS}{{\mathcal S}}
\newcommand{\caT}{{\mathcal T}}
\newcommand{\caU}{{\mathcal U}}
\newcommand{\caV}{{\mathcal V}}
\newcommand{\caW}{{\mathcal W}}
\newcommand{\caX}{{\mathcal X}}
\newcommand{\caY}{{\mathcal Y}}
\newcommand{\caZ}{{\mathcal Z}}
\newcommand{\bbA}{{\mathbb A}}
\newcommand{\bbB}{{\mathbb B}}
\newcommand{\bbC}{{\mathbb C}}
\newcommand{\bbD}{{\mathbb D}}
\newcommand{\bbE}{{\mathbb E}}
\newcommand{\bbF}{{\mathbb F}}
\newcommand{\bbG}{{\mathbb G}}
\newcommand{\bbH}{{\mathbb H}}
\newcommand{\bbI}{{\mathbb I}}
\newcommand{\bbJ}{{\mathbb J}}
\newcommand{\bbK}{{\mathbb K}}
\newcommand{\bbL}{{\mathbb L}}
\newcommand{\bbM}{{\mathbb M}}
\newcommand{\bbN}{{\mathbb N}}
\newcommand{\bbO}{{\mathbb O}}
\newcommand{\bbP}{{\mathbb P}}
\newcommand{\bbQ}{{\mathbb Q}}
\newcommand{\bbR}{{\mathbb R}}
\newcommand{\bbS}{{\mathbb S}}
\newcommand{\bbT}{{\mathbb T}}
\newcommand{\bbU}{{\mathbb U}}
\newcommand{\bbV}{{\mathbb V}}
\newcommand{\bbW}{{\mathbb W}}
\newcommand{\bbX}{{\mathbb X}}
\newcommand{\bbY}{{\mathbb Y}}
\newcommand{\bbZ}{{\mathbb Z}}
\newcommand{\str}{^*}
\newcommand{\lv}{\left \vert}
\newcommand{\rv}{\right \vert}
\newcommand{\lV}{\left \Vert}
\newcommand{\rV}{\right \Vert}
\newcommand{\la}{\left \langle}
\newcommand{\ra}{\right \rangle}
\newcommand{\ltm}{\left \{}
\newcommand{\rtm}{\right \}}
\newcommand{\lcm}{\left [}
\newcommand{\rcm}{\right ]}
\newcommand{\ket}[1]{\lv #1 \ra}
\newcommand{\bra}[1]{\la #1 \rv}
\newcommand{\lmk}{\left (}
\newcommand{\rmk}{\right )}
\newcommand{\al}{{\mathcal A}}
\newcommand{\md}{M_d({\mathbb C})}
\newcommand{\id}{\mathop{\mathrm{id}}\nolimits}
\newcommand{\Tr}{\mathop{\mathrm{Tr}}\nolimits}
\newcommand{\Ran}{\mathop{\mathrm{Ran}}\nolimits}
\newcommand{\Ker}{\mathop{\mathrm{Ker}}\nolimits}
\newcommand{\spn}{\mathop{\mathrm{span}}\nolimits}
\newcommand{\Mat}{\mathop{\mathrm{M}}\nolimits}
\newcommand{\UT}{\mathop{\mathrm{UT}}\nolimits}
\newcommand{\DT}{\mathop{\mathrm{DT}}\nolimits}
\newcommand{\GL}{\mathop{\mathrm{GL}}\nolimits}
\newcommand{\spa}{\mathop{\mathrm{span}}\nolimits}
\newcommand{\supp}{\mathop{\mathrm{supp}}\nolimits}
\newcommand{\rank}{\mathop{\mathrm{rank}}\nolimits}
\newcommand{\idd}{\mathop{\mathrm{id}}\nolimits}
\newcommand{\ran}{\mathop{\mathrm{Ran}}\nolimits}
\newcommand{\dr}{ \mathop{\mathrm{d}_{{\mathbb R}^k}}\nolimits} 
\newcommand{\dc}{ \mathop{\mathrm{d}_{\cc}}\nolimits} \newcommand{\drr}{ \mathop{\mathrm{d}_{\rr}}\nolimits} 
\newcommand{\zin}{\mathbb{Z}}
\newcommand{\rr}{\mathbb{R}}
\newcommand{\cc}{\mathbb{C}}
\newcommand{\ww}{\mathbb{W}}
\newcommand{\nan}{\mathbb{N}}\newcommand{\bb}{\mathbb{B}}
\newcommand{\aaa}{\mathbb{A}}\newcommand{\ee}{\mathbb{E}}
\newcommand{\pp}{\mathbb{P}}
\newcommand{\wks}{\mathop{\mathrm{wk^*-}}\nolimits}
\newcommand{\he}{\hat {\mathbb E}}
\newcommand{\ikn}{{\caI}_{k,n}}
\newcommand{\mk}{{\Mat_k}}
\newcommand{\mnz}{\Mat_{n_0}}
\newcommand{\mn}{\Mat_{n}}
\newcommand{\mkk}{\Mat_{k_R+k_L+1}}
\newcommand{\mnzk}{\mnz\otimes \mkk}
\newcommand{\hbb}{H^{k,\bb}_{m,p,q}}
\newcommand{\gb}[1]{\Gamma^{(R)}_{#1,\bb}}
\newcommand{\cgv}[1]{\caG_{#1,\vv}}
\newcommand{\gv}[1]{\Gamma^{(R)}_{#1,\vv}}
\newcommand{\gvt}[1]{\Gamma^{(R)}_{#1,\vv(t)}}
\newcommand{\gbt}[1]{\Gamma^{(R)}_{#1,\bb(t)}}
\newcommand{\cgb}[1]{\caG_{#1,\bb}}
\newcommand{\cgbt}[1]{\caG_{#1,\bb(t)}}
\newcommand{\gvp}[1]{G_{#1,\vv}}
\newcommand{\gbp}[1]{G_{#1,\bb}}
\newcommand{\gbpt}[1]{G_{#1,\bb(t)}}
\newcommand{\Pbm}[1]{\Phi_{#1,\bb}}
\newcommand{\Pvm}[1]{\Phi_{#1,\bb}}
\newcommand{\mb}{m_{\bb}}
\newcommand{\E}[1]{\widehat{\mathbb{E}}^{(#1)}}
\newcommand{\lal}{{\boldsymbol\lambda}}
\newcommand{\rar}{{\boldsymbol r}}
\newcommand{\oo}{{\boldsymbol\omega}}
\newcommand{\vv}{{\boldsymbol v}}
\newcommand{\bbm}{{\boldsymbol m}}
\newcommand{\braket}[2]{\langle#1,#2\rangle}
\newcommand{\abs}[1]{\left\vert#1\right\vert}
\newtheorem{nota}{Notation}[section]
\def\qed{{\unskip\nobreak\hfil\penalty50
\hskip2em\hbox{}\nobreak\hfil$\square$
\parfillskip=0pt \finalhyphendemerits=0\par}\medskip}
\def\proof{\trivlist \item[\hskip \labelsep{\bf Proof.\ }]}
\def\endproof{\null\hfill\qed\endtrivlist\noindent}
\def\proofof[#1]{\trivlist \item[\hskip \labelsep{\bf Proof of #1.\ }]}
\def\endproofof{\null\hfill\qed\endtrivlist\noindent}
\newcommand{\todo}[1]{{\bf [Todo: #1]}}

\newcommand{\qq}{\mathbb{Q}}
\newcommand{\NN}{\mathbb{N}}
\newcommand{\CC}{\mathbb{C}}
\newcommand{\ZZ}{\mathbb{Z}}

\newcommand{\1}{\mathbbm{1}} 
\newcommand{\spec}[1]{\mathrm{sp}(#1)}

\newcommand{\sig}{\sigma}
\newcommand{\ep}{\varepsilon}
\newcommand{\del}{\delta}
\newcommand{\ga}{\gamma}
\newcommand{\om}{\omega}

\newcommand{\set}[1]{\left\{ #1 \right\} }
\newcommand{\ip}[1]{\langle #1 \rangle}
\newcommand{\norm}[1]{\Vert #1 \Vert }

\newcommand{\m}[1]{\mathbb{#1}}
\newcommand{\mc}[1]{\mathcal{#1}}
\newcommand{\mf}[1]{\mathfrak{#1}}

\newtheorem{theorem}{Theorem}[section]
\newtheorem{lemma}[theorem]{Lemma}
\newtheorem{proposition}[theorem]{Proposition}
\newtheorem{corollary}[theorem]{Corollary}
\newtheorem{definition}[theorem]{Definition}
 
\newcommand{\vphi}{\varphi}
\newcommand{\specc}{\mathrm{sp}}
\newcommand{\third}{\bigg{(}\frac{1}{3} \bigg{)}}
\newcommand{\fnorm}{\norm{\Phi^1(\ep)}_{F_\vphi} }
\newcommand{\floor}[1]{\lfloor #1 \rfloor}
\newcommand{\diam}[1]{\text{diam}(#1)}
\newcommand{\Aloc}{\mathcal{A}_{\text{loc}}}
\newcommand{\ALa}{\mathcal{A}_{\Lambda}}

\newcommand{\A}{\mathcal{A}}
\newcommand{\EE}{\mathbb{E}}

\title{Automorphic equivalence within gapped phases in the bulk}
\author{
{\sc Alvin Moon}\footnote{Supported in part by National Science Foundation Grant DMS 1813149.}\\
{\small Department of Mathematics}\\
{\small University of California, Davis. Davis, CA, 95616 USA} \\
and\\
{\sc Yoshiko Ogata}\footnote{Supported in part by
the Grants-in-Aid for
Scientific Research, JSPS 16K05171.}\\
{\small Graduate School of Mathematical Sciences}\\
{\small The University of Tokyo, Komaba, Tokyo, 153-8914, Japan}
}
\maketitle
\begin{abstract}
We develop a new adiabatic theorem for unique gapped ground states which does not require the gap for local Hamiltonians.
We instead require a gap in the bulk and a smoothness of expectation values of sub-exponentially localized observables in the unique gapped ground state $\varphi_s(A)$.
This requirement is weaker than the requirement of the gap of the local Hamiltonians, since a uniform spectral gap for finite dimensional ground states implies a gap in the bulk for unique gapped ground states, 
as well as the smoothness.

\end{abstract}

\section{Introduction}
Hastings's  \cite{Has} \cite{hw} adiabatic method is a powerful tool in the analysis of 
gapped Hamiltonians in quantum many-body systems. Seminal mathematical developments from \cite{BMNS}, \cite{NSY}, \cite{Y} and onwards have established a strong mathematical framework of adiabatic theory for quantum many-body systems. The adiabatic theorems from these works state that for a smooth path of gapped Hamiltonians, there is an automorphic equivalence between ground state spaces along the path. Furthermore, these automorphisms are quasi-local. 

This framework has proven to be broadly applicable to many situations. In \cite{hm}, the long standing problem of explaining the quantization of the Hall conductance was finally solved with this method. Using the idea in \cite{hm}, the Kubo formula was derived in \cite{bdf}.
 
Another use of the adiabatic theorem is the analysis of symmetry protected topological (SPT) phase, in \cite{tri} and \cite{reflection}. In \cite{tri} and \cite{reflection}, indices for SPT phases which extend the indices by Pollmann et.al. \cite{po},\cite{po2} were introduced. The adiabatic theorem was used to show the stability of these indices. 
See  \cite{Mo} for the extension of \cite{tri}
to interactions with unbounded interaction range with fast decay. 

All of the adiabatic theorems developed so far require a uniform spectral gap for local Hamiltonians.
Therefore, even if what we are interested in is the bulk, the use of known adiabatic theorems requires 
conditions on the gap in finite boxes. This is conceptually unsatisfactory because bulk-classification of gapped Hamiltonians
can be coarser than the classification in finite volume \cite{Ogata3}. 
In this paper, we develop a new adiabatic theorem for unique gapped ground states which does not require the gap for local Hamiltonians.
We instead require a gap in the bulk and a smoothness of expectation values of sub-exponentially localized observables in the unique gapped ground state $\varphi_s(A)$.
This requirement is weaker than the requirement of the gap of the local Hamiltonians, since a uniform spectral gap for finite dimensional ground states implies a gap in the bulk for unique gapped ground states, 
as well as the smoothness. {(See Remark \ref{imply}.)}
%
Under such conditions, we show that there is a smooth path of
quasi-local automorphisms $\alpha_s$, such that $\omega_s=\omega_0\circ \alpha_s$.
This $\alpha_s$ is the same as the one given in the literatures \cite{BMNS}, \cite{NSY}.

Although the result is analogous to those of finite systems, there is a crucial difference for the proof. For 
the finite system $\caA_\Lambda$, there is a Hamiltonian $H_s({\Lambda})$ in the $C^*$-algebra
$\caA_\Lambda$. By considering a differential equation satisfied by the spectral projection $P_s(\Lambda)$ of the Hamiltonian $H_s(\Lambda)$ corresponding
to the lowest eigenvalue, we may explicitly define in this case the automorphisms connecting the ground state spaces. 
In contrast, for infinite systems, we do not have a Hamiltonian $H_s$ in the $C^*$-algebra of quantum spin systems. Of course we can consider the bulk Hamiltonian $H_s$, but $H_s$ depends on the GNS representation, and the meaning of $\frac{d}{ds} H_s$ is ambiguous. Therefore, we have to find an alternative way to prove our adiabatic theorem.

%

Let us now give a more precise description of our result.
We start by summarizing the standard setup of quantum spin systems \cite{BR1,BR2}.
Let $\nu\in\nan$ and $d\in\nan$. Throughout this article, we fix these numbers.
We denote the algebra of $d\times d$ matrices by $\Mat_{d}$.

We denote the set of all finite subsets in ${\bbZ}^\nu$ by ${\mathfrak S}_{\bbZ^\nu}$.
For each $X\in {\mathfrak S}_{\bbZ^\nu}$, $\diam{X}$ denotes the diameter of $X$.
For $X,Y\subset \bbZ^\nu$, we denote by $d(X,Y)$ the distance between them.
The number of elements in a finite set $\Lambda\subset {\bbZ^\nu}$ is denoted by
$|\Lambda|$. For each $n\in\bbN$, we denote $[-n,n]^\nu\cap \bbZ^\nu$ by $\Lambda_n$.
The complement of $\Lambda\subset\bbZ^\nu$ in $\bbZ^\nu$ is denoted by $\Lambda^c$.

For each $z\in\bbZ^\nu$,  let $\caA_{\{z\}}$ be an isomorphic copy of $\Mat_{d}$, and for any finite subset $\Lambda\subset\bbZ^\nu$, let $\caA_{\Lambda} = \otimes_{z\in\Lambda}\caA_{\{z\}}$, which is the local algebra of observables in $\Lambda$. 
For finite $\Lambda$, the algebra $\caA_{\Lambda} $ can be regarded as the set of all bounded operators acting on
the Hilbert space $\otimes_{z\in\Lambda}{\bbC}^{d}$.
We use this identification freely.
If $\Lambda_1\subset\Lambda_2$, the algebra $\caA_{\Lambda_1}$ is naturally embedded in $\caA_{\Lambda_2}$ by tensoring its elements with the identity. 
The algebra $\caA$, representing the quantum spin system on $\bbZ^\nu$
is given as the inductive limit of the algebras $\caA_{\Lambda}$ with $\Lambda\in{\mathfrak S}_{\bbZ^\nu}$. 
Note that $\caA_{\Lambda}$ for $\Lambda\in {\mathfrak S}_{\bbZ^\nu}$
can be regarded naturally as a subalgebra of
$\caA$.
We denote the set of local observables by $\caA_{\rm loc}=\bigcup_{\Lambda\in{\mathfrak S}_{\bbZ^\nu}}\caA_{\Lambda}
$.

A uniformly bounded interaction on $\caA$ 
is a map $\Psi: {\mathfrak S}_{\bbZ^{\nu}}\to \caA_{\rm loc}$ such that
\begin{align}
\Psi(X)=\Psi(X)^*\in \caA_{X},\quad X\in {\mathfrak S}_{\bbZ^{\nu}},
\end{align}
and 
\begin{align}
\sup_{X\in {\mathfrak S}_{\bbZ^{\nu}}}\lV \Psi(X)\rV<\infty.
\end{align}
It is of finite range with interaction length less than or equal to $R\in\nan$ if 
$\Psi(X)=0$ for any $X\in {\mathfrak S}_{\bbZ^{\nu}}$
whose diameter is larger than $R$.
We denote by $\Psi_n$ for each $n\in\nan$
the interaction given by
\begin{align}
\Psi_n(X):=\left\{
\begin{gathered}
\Psi(X),\quad \text{if}\quad X\subset \Lambda_n,\\
0,\quad \text{otherwise}.
\end{gathered}
\right.
\end{align}

For a uniformly bounded and finite range interaction $\Psi$ and $\Lambda\in {\mathfrak S}_{\bbZ^{\nu}}$
define the local Hamiltonian
\begin{align}
\lmk H_\Psi\rmk_\Lambda
:=\sum_{X\subset\Lambda} \Psi(X),
\end{align}
and denote the dynamics
\begin{align}
\tau_{\Psi,\Lambda}^t (A):=e^{it\lmk H_\Psi\rmk_\Lambda}Ae^{-it\lmk H_\Psi\rmk_\Lambda},
\quad t\in \bbR,\quad A\in\caA.
\end{align}
By the uniform boundedness and finite rangeness of $\Psi$, 
 for each $A\in\caA$, the following limit exists: 
\begin{align}
\lim_{\Lambda\to\bbZ^{\nu}} \tau_{\Psi,\Lambda}^t\lmk
A\rmk=:
\tau_{\Psi}^t\lmk A\rmk,\quad t\in\bbR,
\end{align}
and defines the dynamics $\tau_{\Psi}$ on $\caA$.
Note that $\tau_{\Psi_n}=\tau_{\Psi, \Lambda_n}$.
We denote by $\delta_\Psi$ the generator of $\tau_{\Psi}$.

For a uniformly bounded and finite range interaction $\Psi$,
a state $\varphi$ on $\caA$ is called a \mbox{$\tau_{\Psi}$-ground} state
if the inequality
$
-i\,\varphi(A^*{\delta_{\Psi}}(A))\ge 0
$
holds
for any element $A$ in the domain $\caD({\delta_{\Psi}})$ of ${\delta_\Psi}$.
Let $\varphi$ be a $\tau_\Psi$-ground state, with the GNS triple $(\caH_\varphi,\pi_\varphi,\Omega_\varphi)$.
Then there exists a unique positive operator $H_{\varphi,\Psi}$ on $\caH_\varphi$ such that
$e^{itH_{\varphi,\Psi}}\pi_\varphi(A)\Omega_\varphi=\pi_\varphi(\tau^t_\Psi(A))\Omega_\varphi$,
for all $A\in\caA$ and $t\in\mathbb R$.
We call this $H_{\varphi,\Psi}$ the bulk Hamiltonian associated with $\varphi$.
Note that $\Omega_\varphi$ is an eigenvector of $H_{\varphi,\Psi}$ with eigenvalue $0$. See \cite{BR2} for the general theory.

Let $\bbE_{N}:\caA\to \caA_{\Lambda_N}$ be the conditional expectation with respect to the trace state.
Let us consider the following subset of $\caA$. (See \cite{bdn} and \cite{ma} for analogous definitions.)
\begin{defn}
Let $f:(0,\infty)\to (0,\infty)$ be a continuous decreasing function 
with $\lim_{t\to\infty}f(t)=0$.
For each $A\in\caA$, let
\begin{align}
\lV A\rV_f:=\lV A\rV
+ \sup_{N\in \nan}\lmk\frac{\lV
A-\bbE_{N}(A)
\rV}
{f(N)}
\rmk.
\end{align}
We denote by $\caD_f$ the set of all $A\in\caA$ such that
$\lV A\rV_f<\infty$.
\end{defn}
Properties of $\caD_f$ are collected in Appendix \ref{dfsec}.
The set $\caD_f$ is a $*$-algebra which is a Banach space with respect to 
the norm $\lV\cdot\rV_f$ (see Lemma \ref{alg}).

\begin{assum}\label{assump}
Let  $\Phi (\cdot~ ; s) : \mathfrak{S}_{\mathbb{Z}^\nu } \to \mathcal{A}_{\rm loc}$ be a family of uniformly bounded, finite range interactions parameterized by $s\in [0,1]$. We assume the following:
\begin{description}
\item[(i)]
For each $X\in{\mathfrak S}_{\bbZ^\nu}$, the map
$[0,1]\ni s\to \Phi(X;s)\in\caA_{X}$ is continuous and piecewise $C^1$.
We denote by $\dot{\Phi}(X;s)$ 
the corresponding derivatives.
The interaction obtained by differentiation is denoted by $\dot\Phi(s)$, for each $s\in[0,1]$.
\item[(ii)]
There is a number $R\in\nan$
such that $X \in {\mathfrak S}_{\bbZ^\nu}$ and $\diam{X}\ge R$ imply $\Phi(X;s)=0$, for all $s\in[0,1]$.
\item[(iii)] Interactions are bounded as follows
\begin{align}
\sup_{s\in[0,1]}\sup_{X\in {\mathfrak S}_{\bbZ^\nu}}
\lmk
\lV
\Phi\lmk X;s\rmk
\rV+|X|\lV
\dot{\Phi} \lmk X;s\rmk
\rV
\rmk<\infty.
\end{align}
\item[(iv)]
Setting
\begin{align}
b(\varepsilon):=\sup_{Z\in{\mathfrak S}_{\bbZ^\nu}}
\sup_{s,s_0 \in[0,1],0<| s-s_0|<\varepsilon}
\lV
\frac{\Phi(Z;s)-\Phi(Z;s_0)}{s-s_0}-\dot{\Phi}(Z;s_0)
\rV
\end{align}
for each $\varepsilon>0$, we have
$\lim_{\varepsilon\to 0} b(\varepsilon)=0$.

\item[(v)] For each $s\in[0,1]$, there exists a unique $\tau_{\Phi(s)}$-ground state
$\varphi_s$. 
\item[(vi)] 
There exists a $\gamma>0$ such that
$\sigma(H_{\varphi_s,\Phi(s)})\setminus\{0\}\subset [2\gamma,\infty)$ for
all $s\in[0,1]$, where  $\sigma(H_{\varphi_s,\Phi(s)})$ is the spectrum of $H_{\varphi_s,\Phi(s)}$.
\item[(vii)]
There exists $0<\beta<1$ satisfying the following:
Set $\zeta(t):=e^{-t^{ \beta}}$.
Then for each $A\in D_\zeta$, 
$\varphi_s(A)$ is differentiable with respect to $s$, and there is a constant
$C_\zeta$ such that:
\begin{align}\label{dcon}
\lv
\dot{\varphi_s}(A)
\rv
\le C_\zeta\lV A\rV_\zeta,
\end{align}
for any $A\in D_\zeta$.

\end{description}
\end{assum}
The main theorem of this paper is that
under the Assumption \ref{assump}, there is a strongly continuous path
 of automorphisms $[0,1]\ni s\mapsto \alpha_s$ such that
 $
\varphi_s=\varphi_0\circ\alpha_s,\quad s\in[0,1].
$

In fact, this $\alpha_s$ is the same one as in \cite{BMNS} and \cite{NSY}, which is given through
some differential equation.
Let us recall it.

We use the function $\omega_1$ introduced in \cite{NSY}.
Set
\begin{align}
a_n:=\frac{a_1}{n\ln (n)^2},\quad n\ge 2,
\end{align}
and choose $a_1$ so that $\sum_{n=1}^\infty a_n=\frac 12$.
Let $\omega_1(t)\in L^1(\bbR)$ be the function on $\bbR$ defined by
\begin{align}
\omega_1(t):=
\left\{
\begin{gathered}
c,\quad t=0,\\
c\prod_{n=1}^{\infty} \lmk
\frac{\sin(a_n t)}{a_n t}
\rmk^2,\quad t\neq 0
\end{gathered}
\right.
\end{align}
with normalization factor $c>0$ such that 
\begin{align}
\int dt\omega_1(t) =1.
\end{align}
As shown in \cite{BMNS} and \cite{NSY}, $\omega_1$ is indeed an even nonnegative $L^1$-function and
\begin{align}
&\omega_1(t)
\le
c_{1}\frac{t}{\ln (t)^2}e^{-\frac{\eta  t}{\ln (t)^2}},\quad t>e,
\label{omegabound}\\
&
W_1(x):=\int_x^\infty dt
\omega_1(t) 
\le
\left\{ 
\begin{gathered}
c_1\lmk \frac{x}{\ln (x)^2}\rmk^2 e^{-\frac{\eta x}{\ln (x)^2}},\quad x>e^9,\\
1,\quad x\le e^9
\end{gathered}
\right.\label{Omegabound}
\end{align}
for constants $\eta = 2 a_1 \in (\frac{2}{7}, 1)$ and $c_{1} = (27/14) ce^4$.
We set $\omega_\gamma(t):=\gamma \omega_1(\gamma t)$, where $\gamma>0$ is from Assumption \ref{assump}, and 
$W_\gamma(x):=W_{1}(\gamma x)$, for $x\in \bbR_+$.
The function $\omega_\gamma$ is an even nonnegative $L^1$-function
with \begin{align}\label{omeganom}
\int dt\omega_\gamma(t) =1.
\end{align}
We also have
\begin{align}\label{Wint}
W_\gamma(x)=\int_{x}^\infty dt \omega_\gamma(t),\quad x\in \bbR_{+}.
\end{align}
Furthermore, the Fourier transform of $\omega_\gamma$ is supported in the interval $[-\gamma ,\gamma ]$.(See \cite{NSY}.)

For each $\Lambda\in{\mathfrak S}_{\bbZ^\nu}$,
let $U_\Lambda$ be the solution of the differential equation
\begin{align}\label{udf}
-i\frac{d}{ds} U_{\Lambda}(s)=D_{\Lambda}(s) U_{\Lambda}(s),\quad U_\Lambda(0)=\unit.
\end{align}
Here, $D_{\Lambda}(s)$ is defined by
\begin{align}\label{dkdef}
D_{\Lambda}(s):=
\int_{-\infty}^\infty dt\;
\omega_\gamma(t)\;\;
\int_0^t du \tau^u_{\Phi(s),\Lambda} 
\lmk \frac{d}{ds}\lmk H_{\Phi(s)}\rmk_{\Lambda}\rmk,\quad s\in[0,1].
\end{align}
We set 
\begin{align}
\alpha_{s,\Lambda}(A):=U_{\Lambda}(s)^* A U_{\Lambda}(s),
\quad A\in\caA,\quad s\in[0,1].
\end{align}
From \cite{BMNS} and \cite{NSY}, the thermodynamic limit 
\begin{align}\label{alphadef}
\alpha_{s}(A)=\lim_{\Lambda\to\bbZ^{\nu}} \alpha_{s,\Lambda}(A),\quad A\in\caA,\quad s\in[0,1],
\end{align}
exists and defines a strongly continuous path
 of automorphisms $[0,1]\ni s\mapsto \alpha_s$.
We also have the limit of the inverse
\begin{align}
\alpha_{s}^{-1}(A)=\lim_{\Lambda\to\bbZ^{\nu}} \alpha_{s,\Lambda}^{-1}(A),\quad A\in\caA,\quad s\in[0,1].
\end{align}
See \cite{NSY}.
Our main theorem is as follows.
\begin{thm}\label{main}
Under the Assumption \ref{assump}, we have
\[
\varphi_s=\varphi_0\circ\alpha_s,\quad s\in[0,1],
\] 
for $\alpha_{s}$ given in (\ref{alphadef}).
\end{thm}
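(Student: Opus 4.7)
Let $\psi_s := \varphi_s \circ \alpha_s^{-1}$, so that $\psi_0 = \varphi_0$ and the conclusion becomes $\psi_s = \varphi_0$ for all $s\in[0,1]$. Since states are norm-continuous and $\caA_{\rm loc}$ is norm-dense in $\caA$, it suffices to show $\psi_s(A)=\varphi_0(A)$ for every $A\in\caA_{\rm loc}$. My plan is to establish that for each such $A$ the map $s\mapsto \psi_s(A)$ is piecewise $C^1$ on $[0,1]$ with vanishing derivative on each smoothness subinterval of $\Phi$, then to glue the pieces together using the continuity of $\alpha_s$ and of $\varphi_s$.

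For the differentiation I would invoke the Lieb--Robinson-type bounds on the quasi-adiabatic evolution from \cite{BMNS, NSY}, which rest on the sub-exponential decay (\ref{omegabound})--(\ref{Omegabound}) of $\omega_\gamma$. These give that $\alpha_s^{-1}(A)\in \caD_\zeta$ uniformly in $s$, and that $s\mapsto \alpha_s^{-1}(A)$ is norm-differentiable with $\tfrac{d}{ds}\alpha_s^{-1}(A) = \lim_\Lambda i[D_\Lambda(s), \alpha_s^{-1}(A)]$, the limit being in norm. Combined with Assumption \ref{assump}(vii), which provides the existence of $\dot\varphi_s(B)$ and the bound $\lv\dot\varphi_s(B)\rv\le C_\zeta\nn{B}_\zeta$ for $B\in\caD_\zeta$, the chain rule yields
\begin{align*}
\tfrac{d}{ds}\psi_s(A) = \dot\varphi_s\lmk\alpha_s^{-1}(A)\rmk + i\,\varphi_s\lmk[D(s),\alpha_s^{-1}(A)]\rmk,
\end{align*}
where $[D(s),B] := \lim_\Lambda [D_\Lambda(s), B]$ is the thermodynamic-limit derivation, well defined on $\caD_\zeta$ by the quasi-locality of $D_\Lambda(s)$.

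The crux is the bulk Kubo/spectral-flow identity
\begin{align*}
\dot\varphi_s(B) = -i\,\varphi_s\lmk[D(s),B]\rmk,\qquad B\in\caD_\zeta,
\end{align*}
from which $\tfrac{d}{ds}\psi_s(A) = 0$ follows immediately. I would prove this inside the GNS triple $(\caH_{\varphi_s},\pi_{\varphi_s},\Omega_{\varphi_s})$. The key spectral fact is that the Fourier transform of $\omega_\gamma$ is supported in $[-\gamma,\gamma]$ while Assumption \ref{assump}(vi) places $\sigma(H_{\varphi_s,\Phi(s)})\setminus\{0\}\subset [2\gamma,\infty)$, so that $\widehat{\omega_\gamma}(H_{\varphi_s,\Phi(s)})$ coincides with the rank-one projection $P_s$ onto $\bbC\Omega_{\varphi_s}$, and $\int dt\,\omega_\gamma(t)\int_0^t e^{iuH_{\varphi_s,\Phi(s)}}\,du$ collapses to $i\,H_{\varphi_s,\Phi(s)}^{-1}(\unit - P_s)$ on the complement of the ground-state subspace. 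Lifting $D_\Lambda(s)$ into the GNS representation and taking $\Lambda\to\bbZ^\nu$, the vector $\pi_{\varphi_s}([D(s),B])\Omega_{\varphi_s}$ is identified with the first-order change of $\Omega_{\varphi_s}$ produced by the infinitesimal perturbation $\dot\Phi(s)$ of the interaction, via the resolvent formula above. The ``vacuum-subtracted'' fluctuation of $\dot H(s)$ needed here is well defined in $\caH_{\varphi_s}$ thanks to the clustering of $\varphi_s$ implied by the bulk gap, and Assumption \ref{assump}(vii) is exactly what is required to match the abstract derivative $\dot\varphi_s$ with this GNS perturbation-theoretic derivative.

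The main obstacle is precisely this last step: making the GNS perturbation identification rigorous, justifying the renormalized subtraction of the formally divergent mean of $\dot H(s)$, and interchanging the several limits $\Lambda\to\bbZ^\nu$ and $s'\to s$ with the $t$- and $u$-integrations defining $D(s)$. The quantitative control needed for these interchanges is supplied by Assumption \ref{assump}(iii)--(iv) and the $\caD_\zeta$ regularity of $\alpha_s^{-1}(A)$. Once this identity is secured, the chain-rule computation collapses to $\tfrac{d}{ds}\psi_s(A)=0$ on each open subinterval where $\Phi$ is $C^1$; the continuity of $s\mapsto\psi_s(A)$, which follows from the strong continuity of $\alpha_s$ and from (vii), glues the pieces across the finitely many breakpoints, giving $\psi_s(A)=\psi_0(A)=\varphi_0(A)$ for all $A\in\caA_{\rm loc}$ and completing the proof of Theorem \ref{main}.
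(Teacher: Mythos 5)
Your outer skeleton matches the paper's: differentiate $s\mapsto\varphi_s\circ\alpha_s^{-1}(A)$ for $A\in\caA_{\rm loc}$, split the derivative by the chain rule into $\dot\varphi_s(\alpha_s^{-1}(A))$ plus $\varphi_s$ applied to $\frac{d}{ds}\alpha_s^{-1}(A)$, and show the two pieces cancel. But the entire content of the theorem is concentrated in the step you label ``the crux'', namely the identity $\dot\varphi_s(B)=-i\,\varphi_s([D(s),B])$ on a suitable dense $*$-subalgebra, and your proposal does not prove it. The route you sketch --- lifting $D_\Lambda(s)$ to the GNS space and identifying $\pi_{\varphi_s}([D(s),B])\Omega_{\varphi_s}$ with ``the first-order change of $\Omega_{\varphi_s}$ produced by $\dot\Phi(s)$'' after a renormalized subtraction of the divergent mean of $\dot H(s)$ --- is exactly the finite-volume argument that the infinite-volume setting forbids: the GNS triples $(\caH_s,\pi_s,\Omega_s)$ vary with $s$ with no canonical identification between the Hilbert spaces, so a derivative of $\Omega_s$ (or of $H_{\varphi_s,\Phi(s)}$) is not a defined object, and Assumption \ref{assump} (vii) only gives differentiability of the scalar functions $s\mapsto\varphi_s(A)$ with a $\lV\cdot\rV_\zeta$-bound; it supplies no vector-level perturbation formula. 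This is precisely the obstruction the paper points out in the introduction, so the gap is not a technical interchange-of-limits issue but the absence of the main new idea.

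The paper closes this gap without ever differentiating anything in the GNS representation. It introduces $I_s(A)=\int dt\,\omega_\gamma(t)\tau^t_{\Phi(s)}(A)$ and proves Proposition \ref{iff}, $\dot\varphi_s(I_s(A))=0$ for $A\in\caD_f$: the spectral gap plus the support of $\hat\omega_\gamma$ give $\pi_s(I_s(A))\Omega_s=\varphi_s(A)\Omega_s$ (Lemma \ref{key}), hence the decoupling $\varphi_s(B^*I_s(A))=\varphi_s(B^*)\varphi_s(A)$; differentiating this decoupled product (Lemma \ref{dff}) and then removing the test observable $B$ by an approximate-unit construction inside the left ideal $\caL_s=\{A:\varphi_s(A^*A)=0\}$ (Lemmas \ref{lld}, \ref{una}) yields the proposition. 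Separately, differentiating the ground-state equation $\varphi_s\circ\delta_{\Phi(s)}=0$ in $s$ trades $\dot\varphi_s\circ\delta_{\Phi(s)}$ for $-\varphi_s\circ\delta_{\dot\Phi(s)}$, and combining with Proposition \ref{iff} and the Duhamel identity for $A-I_s(A)$ gives the explicit formula (\ref{yon}) for $\dot\varphi_s$ on $\caD_f$, which is the rigorous substitute for your Kubo/spectral-flow identity; the cancellation against $\varphi_s\circ\frac{d}{ds}\alpha_s^{-1}(A)$ then uses only the evenness of $\omega_\gamma$ and the $\tau_{\Phi(s)}$-invariance of $\varphi_s$. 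Unless you can supply an argument of this kind (or some other purely state-level proof of your identity), the proposal does not constitute a proof.
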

Our motivation to develop this bulk version of automorphic equivalence was the 
index theorems for SPT-phases \cite{tri} and \cite{reflection}.
In \cite{tri} and \cite{reflection}, the path of interactions was required to
have a uniform spectral gap for corresponding local Hamiltonians.
It is a bit unpleasant that we have to ask for the existence of the gap for local Hamiltonians
while what we really would like to investigate is the bulk.
From our Theorem \ref{main}, combined with Theorem 2.6, and the proof of Proposition 3.5 of \cite{tri},
we obtain the following version of the index theorem for the time reversal symmetry.
\begin{thm}
Let  $\Phi (\cdot~ ; s) : \mathfrak{S}_{\mathbb{Z}^\nu } \to \mathcal{A}_{\rm loc}$
be a path of time-reversal interactions satisfying Assumption \ref{assump}.
Then  $\bbZ_2$-index defined in Definition 3.3 of \cite{tri} is constant along the path.
\end{thm}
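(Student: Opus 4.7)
The plan is to combine Theorem \ref{main} with the index machinery of \cite{tri}, treating the present result as essentially a corollary. Since Assumption \ref{assump} holds by hypothesis, Theorem \ref{main} applies directly and produces a strongly continuous path of automorphisms $[0,1]\ni s\mapsto \alpha_s$ of $\caA$ satisfying $\varphi_s=\varphi_0\circ\alpha_s$, where $\alpha_s$ is the thermodynamic limit (\ref{alphadef}) of the finite-volume conjugations by $U_\Lambda(s)$.

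Next I would verify that $\alpha_s$ intertwines the antilinear time-reversal $*$-automorphism $\Theta$ of $\caA$. By hypothesis, each $\Phi(X;s)$, and hence by differentiation each $\dot\Phi(X;s)$, is fixed by $\Theta$; therefore the finite-volume Hamiltonians $(H_{\Phi(s)})_\Lambda$ and their $s$-derivatives are time-reversal invariant. Antilinearity of $\Theta$ yields $\Theta\circ\tau^u_{\Phi(s),\Lambda}=\tau^{-u}_{\Phi(s),\Lambda}\circ\Theta$, and combined with the fact that $\omega_\gamma$ is even this implies that $D_\Lambda(s)$ defined in (\ref{dkdef}) is itself fixed by $\Theta$. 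Consequently $U_\Lambda(s)$ commutes with $\Theta$, and passing to the thermodynamic limit gives $\Theta\circ\alpha_s=\alpha_s\circ\Theta$.

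With the automorphic equivalence and its time-reversal compatibility in hand, the argument reduces to invoking the existing index-invariance technology. First I would apply Theorem 2.6 of \cite{tri} to obtain the quasi-local factorization of $\alpha_s$ that makes it admissible in the sense of Definition 3.3 of \cite{tri}. Then I would reproduce, essentially verbatim, the argument in the proof of Proposition 3.5 of \cite{tri}, which shows that under such a $\Theta$-compatible quasi-local automorphic equivalence between unique gapped ground states the $\bbZ_2$-index is preserved. Since this comparison can be performed between $\varphi_0$ and $\varphi_s$ for any $s\in[0,1]$, the index is constant along the path.

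The genuine difficulty — producing the automorphic equivalence $\varphi_s=\varphi_0\circ\alpha_s$ without assuming a uniform spectral gap of the local Hamiltonians — has already been resolved by Theorem \ref{main}. The remaining work is bookkeeping: checking that $\Theta$ passes through the construction of $D_\Lambda(s)$ and $U_\Lambda(s)$, and observing that the proof of Proposition 3.5 of \cite{tri} uses the local-Hamiltonian gap only through the existence and quasi-local structure of an automorphism implementing $\varphi_s=\varphi_0\circ\alpha_s$, both of which are now supplied by Theorem \ref{main} and Theorem 2.6 of \cite{tri}.
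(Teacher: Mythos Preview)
Your approach matches the paper's exactly: the paper simply records that the result follows from Theorem \ref{main} combined with Theorem 2.6 and the proof of Proposition 3.5 of \cite{tri}, and your sketch fills in the expected details. One small slip in the bookkeeping: from $\Theta\circ\tau^u_{\Phi(s),\Lambda}=\tau^{-u}_{\Phi(s),\Lambda}\circ\Theta$ and the evenness of $\omega_\gamma$ one actually obtains $\Theta(D_\Lambda(s))=-D_\Lambda(s)$, not $+D_\Lambda(s)$; it is precisely this minus sign together with the antilinearity of $\Theta$ (which flips the $i$ in (\ref{udf})) that makes $\Theta(U_\Lambda(s))$ solve the same initial value problem as $U_\Lambda(s)$, giving $\Theta\circ\alpha_s=\alpha_s\circ\Theta$ as you claim.
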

From our Theorem \ref{main}, combined with Theorem 2.9 of \cite{reflection}, and the proof of Proposition 3.5 of \cite{tri},
we obtain the following version of the index theorem for the reflection symmetry.
\begin{thm}
Let  $\Phi (\cdot~ ; s) : \mathfrak{S}_{\mathbb{Z}^\nu } \to \mathcal{A}_{\rm loc}$
be a path of reflection invariant interactions satisfying Assumption \ref{assump}.
Then  $\bbZ_2$-index defined in Definition 3.3 of \cite{reflection} is constant along the path.
\end{thm}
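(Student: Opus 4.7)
The plan is to reduce the statement to known stability results by using Theorem \ref{main} as the replacement for the finite-volume gap hypothesis that appeared in the original index theorem. First, since the path $\Phi(\cdot\,;s)$ satisfies Assumption \ref{assump}, Theorem \ref{main} directly supplies a strongly continuous family of automorphisms $\alpha_s$ given by (\ref{alphadef}) such that $\varphi_s = \varphi_0\circ\alpha_s$ for every $s\in[0,1]$. Thus the unique gapped ground states along the path are all automorphically equivalent to $\varphi_0$, and by construction $\alpha_s$ arises from the same generators $D_\Lambda(s)$ used in the finite-gap adiabatic theorems of \cite{BMNS,NSY}, so it enjoys the same quasi-local (in fact faster-than-any-polynomial) Lieb--Robinson-type decay estimates that those papers establish for $\alpha_{s,\Lambda}$ and its thermodynamic limit.

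Next I would verify that $\alpha_s$ is covariant with respect to the reflection symmetry. Because $\Phi(X;s)$ is reflection-invariant in $X$, both $(H_{\Phi(s)})_\Lambda$ (for reflection-symmetric $\Lambda$) and its derivative $\frac{d}{ds}(H_{\Phi(s)})_\Lambda$ are invariant under the reflection automorphism $\Theta$. Since the spectral smoothing in (\ref{dkdef}) is performed via the dynamics $\tau^u_{\Phi(s),\Lambda}$, which itself commutes with $\Theta$, the generator $D_\Lambda(s)$ is reflection-invariant, so $U_\Lambda(s)$ and hence $\alpha_{s,\Lambda}$ commute with $\Theta$. Passing to the thermodynamic limit yields $\alpha_s\circ\Theta = \Theta\circ\alpha_s$.

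With these two ingredients—automorphic equivalence of the ground states via a reflection-covariant, quasi-local $\alpha_s$—the remainder of the argument is the one already used in the literature. The proof of Proposition 3.5 of \cite{tri} shows how, for a quasi-local symmetry-preserving $\alpha_s$ relating two symmetric gapped ground states, one may transport the defining split/unitary data used to construct the $\bbZ_2$-index and verify that the transported data realizes the same class. Then Theorem 2.9 of \cite{reflection} provides the reflection analogue: stability of the index in Definition 3.3 of \cite{reflection} under such a symmetric quasi-local equivalence. Applying this to $\alpha_s$ at each $s$ and using the strong continuity of $s\mapsto \alpha_s$, together with the discrete (indeed $\bbZ_2$) nature of the invariant, concludes that the index is constant on $[0,1]$.

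The main obstacle I anticipate is checking that the decay rate of $\alpha_s$ produced by the construction (\ref{udf})--(\ref{alphadef}) under Assumption \ref{assump} is actually sufficient for the hypotheses of Theorem 2.9 of \cite{reflection}; in the finite-volume gap setting this is standard from \cite{BMNS,NSY}, but here one must trace through the estimates to confirm that nothing in those quasi-locality bounds secretly used the local gap rather than the bulk gap and smoothness (vi)--(vii). Modulo that bookkeeping, the proof is essentially a symmetric-automorphism transport argument built on top of Theorem \ref{main}.
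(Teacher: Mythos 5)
Your proposal follows essentially the same route as the paper, which proves this statement simply by combining Theorem \ref{main} with Theorem 2.9 of \cite{reflection} and the proof of Proposition 3.5 of \cite{tri}, the reflection-covariance and quasi-locality of $\alpha_s$ being exactly the ingredients those cited arguments supply. Your worry about the decay estimates secretly using a local gap is already addressed in the paper, which notes that the Lieb--Robinson-type bound (\ref{lra}) from \cite{BMNS} does not use the spectral gap assumption.
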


The rest of the paper is devoted to the proof of Theorem \ref{main}.

\section{Proof of the Theorem \ref{main}}\label{thmprf}
Throughout this Section, we will always assume
Assumption \ref{assump}.
For $s\in[0,1]$ and $A\in\caA$, we set
\begin{align}
I_s(A):=
\int dt \;\omega_\gamma(t) \tau_{\Phi(s)}^t(A) .
\end{align}
The integral can be understood as a Bochner integral of $(\caA, \lV\cdot\rV)$.

We need the following 
Lemma for the proof.
\begin{lem}\label{pre}
Fix $0<\beta=\beta_5<\beta_4<\beta_3<\beta_2<\beta_1<1$ and set
$f(t):=t^{-1}\exp(-t^{\beta_1})$, 
$f_0(t):=\exp(-t^{\beta_1})$,
$f_1(t):=\exp(-t^{\beta_2})$, $f_2(t):=t^{-2(\nu+2)}\exp(-t^{\beta_3})$, $g(t):=\exp(-t^{\beta_4})$, $\zeta(t):=\exp(-t^{\beta_5})$.
(Here $\beta$ is the one in (vii) of Assumption \ref{assump}.)
Then we have the following.
\begin{enumerate}
\item For any $s\in[0,1]$, we have
\[
\alpha_s^{-1}(\caA_{\rm loc})\subset \caD_{f}
\subset \caD_{f_0}\subset \caD_{f_1}\subset\caD_{f_2}\subset \caD_g\subset \caD_\zeta.
\]
\item 
We have $\tau_{\Phi(s)}^t\lmk\caD_{f}\rmk\subset\caD_{{f_1}}$ and
there is a non-negative non-decreasing function on $\bbR_{\ge 0}$, $b_{f,{f_1}}(t)$ such that
\begin{align}\label{b11}
\int\; dt\; \omega_\gamma(t)\; |t|\cdot b_{f,{f_1}}(|t|)<\infty,
\end{align}
\begin{align}\label{b1t}
&
\sup_{s\in[0,1]}\lV
\tau_{\Phi(s)}^t\lmk
A
\rmk
\rV_{{f_1}}
\le
b_{f,{f_1}}(|t|)\lV A\rV_f,\quad A\in\caD_f.
\end{align}
\item
We have $\caD_{\zeta}\subset D(\delta_{\Phi(s)})\cap D(\delta_{\dot\Phi(s)})$ for any $s\in[0,1]$.
\item
There is a constant $C^{(1)}_{f_2,\zeta}>0$
such that 
\begin{align}
&\sup_{s\in[0,1]}\lV
\delta_{\Phi(s)}\lmk A\rmk
\rV_{\zeta},\;\sup_{N\in\nan}\sup_{s\in[0,1]}\lV
\delta_{\Phi_N(s)}\lmk A\rmk
\rV_{\zeta}
\le
C^{(1)}_{f_2,\zeta}\lV A\rV_{f_2}\label{deltap}\\
&\sup_{s\in[0,1]}\lV
\delta_{\dot{\Phi}(s)}\lmk A\rmk
\rV_{\zeta},\;
\sup_{N\in\nan}\sup_{s\in[0,1]}\lV
\delta_{\dot{\Phi}_N(s)}\lmk A\rmk
\rV_{\zeta}
\le
C^{(1)}_{f_2,\zeta}\lV A\rV_{f_2}\label{deldadp}\\
&\sup_{s,s_0 \in[0,1],0<| s-s_0|\le \varepsilon}\lV
\delta_{\frac{\Phi(s)-\Phi(s_0)}{s-s_0}-\dot{\Phi}(s_0)}
\lmk A\rmk
\rV_{\zeta},\;
\sup_{N\in\nan}\sup_{s,s_0 \in[0,1],0<| s-s_0|\le \varepsilon}\lV
\delta_{\frac{\Phi_N(s)-\Phi_N(s_0)}{s-s_0}-\dot{\Phi}_N(s_0)}
\lmk A\rmk
\rV_{\zeta}\nonumber\\
&\le
b(\varepsilon)C^{(1)}_{f_2,\zeta}\lV A\rV_{f_2}\label{dddb}
\end{align}
for all $A\in\caD_{f_2}$.
(Here the meaning of the inequality is that each term on the left hand side
is bounded by the right hand side. We use this notation
throughout this article.)
In particular, $\delta_{\Phi(s)}(\caD_{f_2})\subset \caD_\zeta$,
for any $s\in[0,1]$.
(Recall $b(\varepsilon)$ in Assumption \ref{assump} {\it (iv)}.)
\item
For any $A\in \caD_f$, and $(s',u',s'',s''')\in [0,1]\times\bbR\times [0,1]\times[0,1]$,
we have
$\tau_{\Phi(s'')}^{-u'}\circ \alpha_{s'''}^{-1}(A)\in\caD_{f_{2}}\subset \caD_{\zeta}\subset  D(\delta_{\Phi(s')})\cap D(\delta_{\dot\Phi(s')})$
and $\delta_{{\Phi}(s')}\circ
\tau_{\Phi(s'')}^{-u'}\circ \alpha_{s'''}^{-1}(A),
\delta_{\dot{\Phi}(s')}\circ
\tau_{\Phi(s'')}^{-u'}\circ \alpha_{s'''}^{-1}(A)\in\caD_{\zeta}$.
For any  compact intervals $[a,b]$, $[c,d]$ of $\bbR$
and $A\in\caD_f$, the maps:
\begin{align}
[a,b]\times [0,1]\times [0,1]\times[c,d]\times [0,1]\times[0,1]
\ni (u,s,s',u',s'',s''')\mapsto
 \tau_{\Phi(s)}^{u}\circ\delta_{{\Phi}(s')}\circ
\tau_{\Phi(s'')}^{-u'}\circ \alpha_{s'''}^{-1}(A)
\in\caA,
\end{align}
and 
\begin{align}
[a,b]\times [0,1]\times [0,1]\times[c,d]\times [0,1]\times[0,1]
\ni (u,s,s',u',s'',s''')\mapsto
 \tau_{\Phi(s)}^{u}\circ\delta_{\dot{\Phi}(s')}\circ
\tau_{\Phi(s'')}^{-u'}\circ \alpha_{s'''}^{-1}(A)
\in\caA
\end{align}
are uniformly continuous with respect to $\lV\cdot\rV$, and maps
\begin{align}\label{392a}
 [0,1]\times[c,d]\times [0,1]\times[0,1]
\ni (s',u',s'',s''')\mapsto
\delta_{{\Phi}(s')}\circ
\tau_{\Phi(s'')}^{-u'}\circ \alpha_{s'''}^{-1}(A)
\in\caD_{\zeta}
\end{align}
\begin{align}\label{392}
 [0,1]\times[c,d]\times [0,1]\times[0,1]
\ni (s',u',s'',s''')\mapsto
\delta_{\dot{\Phi}(s')}\circ
\tau_{\Phi(s'')}^{-u'}\circ \alpha_{s'''}^{-1}(A)
\in\caD_{\zeta}
\end{align}
are uniformly continuous with respect to $\lV\cdot\rV_\zeta$.
\item
For any $A\in \caD_f$,
$\alpha_s^{-1}(A)$ is differentiable with respect to
$\lV\cdot\rV$ and
\begin{align}
\frac{d}{ds}\alpha_s^{-1}(A)
=\int dt \omega_\gamma(t) \int_0^t du
\tau_{\Phi(s)}^u\circ\delta_{\dot{\Phi}(s)}\lmk
\tau_{\Phi(s)}^{-u}\lmk \alpha_{s}^{-1}(A)\rmk
\rmk
\end{align}
The right hand side can be understood as a Bochner integral of ($\caA$, $\lV \cdot \rV$).
\item
For any $A\in \caD_f$, the integral 
\begin{align}\label{vswe}
\int dt \omega_\gamma(t) \int_0^t du
\tau_{\Phi(s)}^u\circ\delta_{\dot{\Phi}(s)}\lmk
\tau_{\Phi(s)}^{-u}(A)\rmk
\end{align}
\begin{align}
\int dt \;\omega_\gamma(t)\int_0^t du\;
\tau_{\Phi(s)}^{t-u}\circ \lmk \delta_{\dot\Phi(s)}\rmk\circ
\tau_{\Phi(s)}^u(A)
\end{align}
are well-defined as 
a Bochner integral with respect to
($\caA$, $\lV \cdot \rV$).

\item
For any $A \in \caD_f$ and $s\in[0,1]$, we have $I_s(A) \in \caD_{f_1}$.
\item
For each $A\in\caA$,
$\bbR\times [0,1]\ni (u,s)\to \tau_{\Phi(s)}^{u}(A)\in \caA$ is 
continuous with respect to the norm $\lV \cdot\rV$.
\item
For any $A\in\caD_f$, the integrals
\begin{align}\label{zetabd}
\int dt \omega_\gamma(t) 
\int_0^t du \delta_{\Phi(s)}\circ\tau_{\Phi(s)}^{u}(A),\quad
\int_0^t du \delta_{\Phi(s)}\circ\tau_{\Phi(s)}^{u}(A),
\end{align}
 are well-defined as Bochner integrals with respect to
$(\caD_{\zeta}, \lV \cdot\rV_\zeta)$.

\end{enumerate}
\end{lem}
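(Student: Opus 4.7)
The plan is to work through items (1)--(10) in order, each being a Lieb--Robinson-type quasi-locality estimate. The chain of auxiliary functions $f > f_0 > f_1 > f_2 > g > \zeta$ is engineered so that every basic operation (conditional expectation truncation, finite-range dynamics, quasi-local automorphism, finite-range derivation) moves us at most one step to the right in the chain; the polynomial prefactor $N^{-2(\nu+2)}$ built into $f_2$ is placed exactly to absorb the $O(N^\nu)$ counting that appears when a finite-range derivation is summed over a ball of radius $N$. Item (1) then reduces to comparing decay rates of the $f_i$ (each ratio is bounded on $[1,\infty)$ by the strict ordering of the $\beta_i$), while the inclusion $\alpha_s^{-1}(\caA_{\rm loc})\subset\caD_f$ follows from the Lieb--Robinson bound for the quasi-local generator $D_\Lambda(s)$ of $\alpha_s$, whose quasi-locality is driven by the tail $W_\gamma$ of (\ref{Omegabound}). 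For (2), a standard Lieb--Robinson estimate for the finite-range dynamics $\tau_{\Phi(s)}^t$ bounds $\lV\tau_{\Phi(s)}^t(A) - \bbE_N(\tau_{\Phi(s)}^t(A))\rV$ by an exponentially decaying function of $N-v|t|$ times $\lV A\rV$ plus $\lV A-\bbE_{N/2}(A)\rV$; the decay margin from $\beta_1>\beta_2$ yields (\ref{b1t}) with $b_{f,f_1}(|t|)$ of polynomial growth in $|t|$, which is $\omega_\gamma$-integrable against $|t|$ by the sub-exponential decay (\ref{omegabound}).

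For items (3)--(5), I apply $\delta_{\Phi(s)}$ term by term: since $\Phi(X;s)$ vanishes when $\operatorname{diam} X > R$, $\delta_{\Phi(s)}(A)$ is a norm-convergent sum of commutators with bounded local terms, and $\delta_{\Phi(s)}(A) - \bbE_N(\delta_{\Phi(s)}(A))$ is dominated by $O(N^\nu)$ commutators of size $\lV A-\bbE_{N-2R}(A)\rV$, a factor which the polynomial prefactor in $f_2$ absorbs, giving (\ref{deltap}). The identical computation applied to $\dot\Phi(s)$ (using the $|X|$-weighted bound in Assumption \ref{assump}(iii)) yields (\ref{deldadp}); applied to the difference quotient $(\Phi(s)-\Phi(s_0))/(s-s_0)-\dot\Phi(s_0)$ together with the coefficient bound from Assumption \ref{assump}(iv), it yields (\ref{dddb}). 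The uniform continuity statements in (5) combine these quantitative bounds with standard Duhamel continuity of $(u,s)\mapsto\tau_{\Phi(s)}^u$ and continuity of $s\mapsto\alpha_s^{-1}$ (the latter established in \cite{NSY}); the $\caD_\zeta$-valued continuity comes for free because the bounds are uniform in all parameters.

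For item (6), I differentiate (\ref{udf}) at finite volume to obtain $\frac{d}{ds}\alpha_{s,\Lambda}^{-1}(A) = i\alpha_{s,\Lambda}^{-1}([D_\Lambda(s),A])$, unfold $D_\Lambda(s)$ via (\ref{dkdef}) and Duhamel's formula for $\tau_{\Phi(s),\Lambda}^u$, and pass to the thermodynamic limit using (2) to dominate the integrand uniformly in $\Lambda$ by an $\omega_\gamma$-integrable function of $t$. Items (7) and (10) then follow because $u\mapsto\tau_{\Phi(s)}^u\circ\delta_{\dot\Phi(s)}\circ\tau_{\Phi(s)}^{-u}(A)$ and $u\mapsto\delta_{\Phi(s)}\circ\tau_{\Phi(s)}^u(A)$ are continuous in $\lV\cdot\rV$ and $\lV\cdot\rV_\zeta$ respectively by (5), and bounded on $[-|t|,|t|]$, so the inner Bochner integrals are legitimate and the outer integrals against $\omega_\gamma(t)\,dt$ are controlled by (\ref{b11}). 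Item (8) is the special case of (2) applied to the smeared dynamics $I_s$, and (9) is standard Duhamel.

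The main obstacle is item (6): although finite-volume differentiability is elementary, justifying the interchange of $d/ds$, the thermodynamic limit, and the $\omega_\gamma$-integral requires that $b_{f,f_1}(|t|)$ from (2) really be $\omega_\gamma$-integrable against $|t|$, which is precisely where the relative positions of the $\beta_i$ and the sub-exponential decay (\ref{omegabound}) of $\omega_\gamma$ enter essentially. A secondary obstacle is the book-keeping in item (4): one must verify that the $N^{-2(\nu+2)}$ margin in $f_2$ really does absorb all powers of $N$ arising both from the volume of the commutator support and from the asymmetric structure of the conditional expectation when computing $\delta_{\Phi(s)}(A) - \bbE_N(\delta_{\Phi(s)}(A))$, and that the same margin continues to work when $\Phi(s)$ is replaced by $\dot\Phi(s)$ or by the difference quotient appearing in (\ref{dddb}).
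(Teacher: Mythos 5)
Your overall architecture is the paper's: the graded chain $\caD_f\subset\caD_{f_0}\subset\cdots\subset\caD_\zeta$, Lieb--Robinson localization for $\tau^t_{\Phi(s)}$ and for $\alpha_s$ (the latter via the quasi-locality of $D_\Lambda(s)$ coming from the tail $W_\gamma$), volume-counting bounds for the finite-range derivations absorbed by the prefactor in $f_2$, and dominated convergence for the $s$-derivative. Your shortcut for item 8 (integrating the bound of item 2 under the Bochner integral instead of re-running the localization argument with $W_\gamma$, as the paper does) is legitimate and slightly simpler. But there are two concrete problems. First, in item 6 your finite-volume derivative is wrong: from (\ref{udf}) one gets $\frac{d}{ds}\alpha_{s,\Lambda}^{-1}(A)=i\,[D_\Lambda(s),\alpha_{s,\Lambda}^{-1}(A)]$, not $i\,\alpha_{s,\Lambda}^{-1}([D_\Lambda(s),A])$; the two agree only if $D_\Lambda(s)$ were invariant under the flow. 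Unfolding your version of $D_\Lambda(s)$ via (\ref{dkdef}) would produce, in the thermodynamic limit, $\alpha_s^{-1}$ applied to $\int dt\,\omega_\gamma(t)\int_0^t du\,\tau^u_{\Phi(s)}\circ\delta_{\dot\Phi(s)}\circ\tau^{-u}_{\Phi(s)}(A)$, which is not the asserted formula, where $\alpha_s^{-1}(A)$ sits inside the integrand. With the corrected commutator your plan (unfold, dominate uniformly in $\Lambda$, pass to the limit) is exactly the paper's route, but note the domination and the limit require the finite-volume analogues of items 2 and 4 and convergence of $\alpha_{s,\Lambda_n}^{-1}$, $\tau^{\pm u}_{\Phi(s),\Lambda_n}$, $\delta_{\dot\Phi_n(s)}$ to their infinite-volume counterparts in the weighted norms, not merely in operator norm.

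Second, in item 2 the claim that $b_{f,f_1}(|t|)$ has polynomial growth is false, and cannot be repaired: for $N\lesssim v|t|$ the Lieb--Robinson bound gives no localization, so the seminorm $\sup_N \Vert\tau^t_{\Phi(s)}(A)-\bbE_N(\tau^t_{\Phi(s)}(A))\Vert/f_1(N)$ already contributes a term of order $\Vert A\Vert\,e^{(c\,v|t|)^{\beta_2}}$; the margin $\beta_1>\beta_2$ only controls the regime $N\gtrsim v|t|$. Thus $b_{f,f_1}$ grows stretched-exponentially, and (\ref{b11}) holds only because $\omega_1$ decays almost exponentially as in (\ref{omegabound}); your closing paragraph acknowledges this interplay, but the stated polynomial rate contradicts it and, taken literally, trivializes precisely the estimate on which items 6, 7, 8 and 10 rest. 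Relatedly, in item 5 the assertion that the $\caD_\zeta$-valued uniform continuity ``comes for free because the bounds are uniform'' is too quick: operator-norm continuity of $s\mapsto\alpha_s^{-1}(A)$ (which is what \cite{NSY} provides) together with uniform boundedness in $\caD_\zeta$ does not imply $\Vert\cdot\Vert_\zeta$-continuity. You need either weighted-norm continuity of each factor --- continuity of $\alpha_{s}^{-1}(A)$ in $\Vert\cdot\Vert_{f_1}$, of $\tau^{-u'}_{\Phi(s'')}$ on $\caD_{f_1}$ in $\Vert\cdot\Vert_{f_2}$, and of $\delta_{\dot\Phi(s')}$ on $\caD_{f_2}$ in $\Vert\cdot\Vert_\zeta$, assembled by an $\varepsilon$-net argument, which is what the paper does --- or an explicit interpolation step (uniform bounds in a strictly smaller space such as $\caD_g$ plus operator-norm convergence imply $\Vert\cdot\Vert_\zeta$-convergence). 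As written, this step is a gap, though a reparable one.
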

The proof of Lemma \ref{pre} is given in Section \ref{lems}.
Throughout Section \ref{thmprf} and Section \ref{if} (but not in Section \ref{lems}),
we fix $0<\beta_5<\beta_4<\beta_3<\beta_2<\beta_1<1$ and set
$f, f_0, f_1,f_2,g, \zeta$, given in Lemma \ref{pre}, and apply Lemma \ref{pre}.

In Section \ref{if}, we prove the following:
\begin{prop}\label{iff}
For any $A\in \mathcal{D}_f$, we have
\begin{align}
\dot{\varphi_s} \lmk
I_s(A)\rmk=0
,\quad s\in[0,1].
\end{align}
\end{prop}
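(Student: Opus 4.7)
The plan is to combine a spectral filter identity in the GNS representation with state positivity, and then to upgrade a Cauchy--Schwarz bound by a second-order argument. First I would establish the two identities
\begin{equation*}
\pi_{\varphi_s}(I_s(A))\,\Omega_{\varphi_s} = \varphi_s(A)\,\Omega_{\varphi_s},\qquad \pi_{\varphi_s}(I_s(A)^*)\,\Omega_{\varphi_s} = \overline{\varphi_s(A)}\,\Omega_{\varphi_s}.
\end{equation*}
In the GNS representation, $\pi_{\varphi_s}(\tau^t_{\Phi(s)}(A))\Omega_{\varphi_s} = e^{itH_{\varphi_s,\Phi(s)}}\pi_{\varphi_s}(A)\Omega_{\varphi_s}$, using $H_{\varphi_s,\Phi(s)}\Omega_{\varphi_s}=0$, so by functional calculus
\begin{equation*}
\pi_{\varphi_s}(I_s(A))\,\Omega_{\varphi_s} = \hat\omega_\gamma(-H_{\varphi_s,\Phi(s)})\,\pi_{\varphi_s}(A)\,\Omega_{\varphi_s}.
\end{equation*}
Since $\hat\omega_\gamma$ is supported in $[-\gamma,\gamma]$ with $\hat\omega_\gamma(0) = \int\omega_\gamma = 1$, and $\sigma(H_{\varphi_s,\Phi(s)}) \subset \{0\}\cup[2\gamma,\infty)$ by Assumption \ref{assump}(vi), the operator $\hat\omega_\gamma(-H_{\varphi_s,\Phi(s)})$ equals the spectral projection onto $\ker H_{\varphi_s,\Phi(s)}$; uniqueness of the ground state (Assumption \ref{assump}(v)) makes this the rank-one $|\Omega_{\varphi_s}\rangle\langle\Omega_{\varphi_s}|$, yielding the first identity. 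The second is obtained by the same argument applied to $A^*$, noting $I_s(A)^* = I_s(A^*)$ since $\omega_\gamma$ is real.

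Next, I would reduce to showing $\dot\varphi_s(B) = 0$ for $B := I_s(A) - \varphi_s(A)\,\1 \in \caD_{f_1}\subset\caD_\zeta$ (by Lemma \ref{pre}(8) and (1)). Differentiating the constant identity $\varphi_s(\1)\equiv 1$ gives $\dot\varphi_s(\1)=0$, hence $\dot\varphi_s(I_s(A)) = \dot\varphi_s(B)$. The two spectral identities give $\varphi_s(B^*B) = \|\pi_{\varphi_s}(B)\Omega_{\varphi_s}\|^2 = 0$ and similarly $\varphi_s(BB^*)=0$. Since $\caD_\zeta$ is a Banach $*$-algebra (Lemma \ref{alg}), both $B^*B$ and $BB^*$ lie in $\caD_\zeta$, so Assumption \ref{assump}(vii) makes the functions $s'\mapsto\varphi_{s'}(B^*B)$ and $s'\mapsto\varphi_{s'}(BB^*)$ differentiable at $s$. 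Both are nonnegative, by positivity of the states on positive elements, and attain the global minimum value $0$ at $s'=s$. Therefore their derivatives at $s$ vanish: $\dot\varphi_s(B^*B) = \dot\varphi_s(BB^*) = 0$.

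The main difficulty is the final passage from these second-moment vanishings to $\dot\varphi_s(B) = 0$. In the finite-dimensional analogue one expands $\varphi_{s+h}(B) = \langle\psi_{s+h},B\psi_{s+h}\rangle$ to second order using $B\psi_s = B^*\psi_s = 0$, obtaining directly $\varphi_{s+h}(B) = O(h^2)$ and hence a vanishing derivative; in the bulk setting, the distinct GNS spaces $\caH_{\varphi_{s'}}$ rule out such a naive Taylor expansion. A crude Cauchy--Schwarz $|\varphi_{s'}(B)|^2 \le \varphi_{s'}(B^*B)$ combined with only first-order differentiability from Assumption \ref{assump}(vii) yields $|\varphi_{s'}(B)| = o(\sqrt{|s'-s|})$, which is too weak to conclude $\dot\varphi_s(B)=0$. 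The plan for the hard step is to upgrade to a quadratic estimate $\varphi_{s'}(B^*B) = O((s'-s)^2)$ by exploiting the two-parameter positive function $(s',s'')\mapsto\varphi_{s'}((I_{s''}(A)-\varphi_{s''}(A))^*(I_{s''}(A)-\varphi_{s''}(A)))$, which vanishes on the whole diagonal $s' = s''$, combined with the enhanced regularity of the interaction path from Assumption \ref{assump}(i)--(iv) and the control of $s$-derivatives of time-evolved elements supplied by Lemma \ref{pre}(2), (5), (7). With this quadratic bound, a symmetric second-order Cauchy--Schwarz analysis using both $\varphi_s(B^*B)=0$ and $\varphi_s(BB^*)=0$ will produce $\varphi_{s+h}(B) = o(h)$, forcing $\dot\varphi_s(B) = 0$ and completing the proof.
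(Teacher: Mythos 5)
Your opening steps are sound and match the paper: the spectral-filter identity $\pi_s(I_s(A))\Omega_s=\varphi_s(A)\Omega_s$ is exactly the paper's Lemma \ref{key}, the reduction to $B:=I_s(A)-\varphi_s(A)\unit$ and the observation that $B\in\caL_s\cap\caL_s^*\cap\caD_{f_1}$ are Lemma \ref{lld}, and the conclusion $\dot{\varphi_s}(B^*B)=\dot{\varphi_s}(BB^*)=0$ from minimality of a nonnegative differentiable function is correct. The genuine gap is the final step, which you present only as a plan, and the plan cannot be carried out from the stated hypotheses. First, the claimed upgrade $\varphi_{s'}(B^*B)=O((s'-s)^2)$ is not available: Assumption \ref{assump} (vii) gives only one derivative of $s'\mapsto\varphi_{s'}(C)$ with a $\lV C\rV_\zeta$ bound, and a nonnegative $C^1$ function vanishing together with its derivative at $s$ is merely $o(|s'-s|)$ (think of $|s'-s|^{3/2}$); no second derivative, Lipschitz continuity of $\dot\varphi_{s'}$, or convexity is assumed, and the two-parameter function vanishing on the diagonal only re-encodes information you already have. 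Second, and more fatally, even if the quadratic bound held, Cauchy--Schwarz in the state $\varphi_{s+h}$ gives $|\varphi_{s+h}(B)|\le\varphi_{s+h}(B^*B)^{1/2}=O(|h|)$, which is not $o(h)$ and hence does not force $\dot{\varphi_s}(B)=0$; invoking $BB^*$ as well adds nothing through plain Cauchy--Schwarz. So the decisive cancellation that in finite volume comes from the first-order expansion of the ground-state vector is never reproduced, and the argument as proposed does not close.

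The paper takes a different route precisely to avoid this obstruction. It first proves the decoupling $\varphi_s(B^*I_s(A))=\varphi_s(B^*)\varphi_s(A)$ (Lemma \ref{decomi}), shows $s\mapsto I_s(A)$ is norm-differentiable with derivative $V_s(A)$ (Lemma \ref{lemv}), and differentiates the decoupling identity in $s$ to get, for $A,B\in\caD_f$,
\begin{align*}
\dot{\varphi_s}\lmk B^*I_s(A)\rmk+\int dt\,\omega_\gamma(t)\int_0^t du\,
\varphi_s\lmk B^*\tau_{\Phi(s)}^{t-u}\circ\delta_{\dot\Phi(s)}\circ\tau_{\Phi(s)}^{u}(A)\rmk
=\dot{\varphi_s}(B^*)\varphi_s(A)+\varphi_s(B^*)\dot{\varphi_s}(A)
\end{align*}
(Lemma \ref{dff}). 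It then constructs an approximate identity $u_N=(1+h(N)\bbE_N(CC^*))^{-1}h(N)\bbE_N(CC^*)$ for the left ideal $\caL_s$, adapted to $C=I_s(A)-\varphi_s(A)\unit$ (Lemma \ref{una}), so that $u_N^*(I_s(A)-\varphi_s(A)\unit)\to I_s(A)-\varphi_s(A)\unit$ in $\lV\cdot\rV_g$ while $\varphi_s(u_N^*\,\cdot\,)\to 0$ on the extra terms because $u_N$ approaches $\caL_s$. Taking $B=u_N$ in the identity above and letting $N\to\infty$ (with dominated convergence and the $\lV\cdot\rV_\zeta$-continuity of $\dot\varphi_s$) makes every term except $\dot{\varphi_s}(I_s(A)-\varphi_s(A)\unit)$ vanish, which is how the first-order cancellation is actually achieved. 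If you want to salvage your approach, you would need either an assumption of second-order regularity of $s\mapsto\varphi_s$ on $\caD_\zeta$ or some substitute for the ideal/approximate-identity mechanism; with Assumption \ref{assump} as stated, the Cauchy--Schwarz route does not suffice.
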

Note that by {\it 8.} of Lemma \ref{pre}, $I_s(A)$ belongs to
$\caD_{f_1}\subset \caD_{\zeta}$, and that 
$\dot{\varphi_s} \lmk
I_s(A)\rmk$ in Proposition \ref{iff} is well-defined by (vii) of Assumption \ref{assump}.

We now prove Theorem \ref{main} using this proposition.
In order to prove the Theorem, it suffices to show
\begin{align}\label{dz}
\frac d{ds} \lmk {\varphi_s}\circ\alpha_s^{-1}(X)\rmk=0,
\end{align}
for any $X\in\caA_{\rm loc}$.
Note that from Assumption \ref{assump} (vii), and {\it 1.} of Lemma \ref{pre},
the function $[0,1]\ni s\to \varphi_s\circ\alpha_{s_0}^{-1}(X)$ is differentiable for any  $X\in\caA_{\rm loc}$ and $s_0\in[0,1]$.
Furthermore, from {\it 6.} of Lemma \ref{pre}, $[0,1]\ni s\mapsto \alpha_{s}^{-1}(X)\in\caA$
is differentiable with respect to the norm for any  $X\in\caA_{\rm loc}\subset \caD_f$.
Therefore, for any $X\in\caA_{\rm loc}$,
$[0,1]\ni s\to {\varphi_s}\circ\alpha_s^{-1}(X)$ is differentiable, the left hand side of
(\ref{dz}) makes sense, and we have
\begin{align}\label{bbbbs}
\frac d{ds} \lmk {\varphi_s}\circ\alpha_s^{-1}(X)\rmk
=\dot{\varphi_s}\circ\alpha_s^{-1}(X)
+\varphi_s\circ \frac{d}{ds}\alpha_s^{-1}(X),\quad
X\in\caA_{\rm loc}.
\end{align}
For the proof of (\ref{dz}), we use the following Lemma.
\begin{lem}\label{nini}
For any $A\in \mathcal{D}_f$, 
\begin{align}\label{aisa}
A-I_s(A)
=-\int dt \omega_\gamma(t) 
\int_0^t du \delta_{\Phi(s)}\circ\tau_{\Phi(s)}^{u}(A).
\end{align}
%
%
The integrand of the right hand side is continuous with respect to $\lV \cdot\rV_\zeta$
and the integral can be understood as the Bochner integral of
$(\caD_{\zeta}, \lV \cdot\rV_\zeta)$.
\end{lem}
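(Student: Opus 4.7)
The plan is to reduce the identity to the fundamental theorem of calculus for the strongly continuous one-parameter group $\tau_{\Phi(s)}$, combined with the normalization $\int dt\, \omega_\gamma(t) = 1$ from (\ref{omeganom}). By part \emph{1.} of Lemma \ref{pre}, $A\in \caD_f \subset \caD_\zeta$, and by part \emph{3.}, $\caD_\zeta \subset D(\delta_{\Phi(s)})$, so every object appearing is well-defined. The secondary assertion that the double integral is a Bochner integral in $(\caD_\zeta,\lV\cdot\rV_\zeta)$ with $\lV\cdot\rV_\zeta$-continuous integrand is precisely the content of part \emph{10.} of Lemma \ref{pre}, so I treat it as given.

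For the identity itself I first work in $(\caA,\lV\cdot\rV)$. By (\ref{omeganom}),
\[
A - I_s(A) = \int dt\, \omega_\gamma(t)\bigl(A - \tau_{\Phi(s)}^t(A)\bigr)
\]
as a Bochner integral in $\caA$. For each fixed $t\in\bbR$, since $A\in D(\delta_{\Phi(s)})$ and $\tau_{\Phi(s)}$ is a strongly continuous one-parameter group with generator $\delta_{\Phi(s)}$ on its domain, the fundamental theorem of calculus yields
\[
\tau_{\Phi(s)}^t(A) - A = \int_0^t du\, \delta_{\Phi(s)}\bigl(\tau_{\Phi(s)}^u(A)\bigr),
\]
again a Bochner integral in $\caA$. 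The left side here is bounded in $\lV\cdot\rV$ by $2\lV A\rV$ uniformly in $t$, so the outer integral against $\omega_\gamma(t)\,dt$ is trivially convergent; substituting pointwise in $t$ gives (\ref{aisa}) as an equality in $\caA$.

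To upgrade to $\caD_\zeta$, I invoke part \emph{10.} of Lemma \ref{pre}, which asserts that both the inner integral and the outer double integral on the right are well-defined Bochner integrals in $(\caD_\zeta, \lV\cdot\rV_\zeta)$. Since the inclusion $\caD_\zeta \hookrightarrow \caA$ is norm-decreasing ($\lV B \rV \le \lV B\rV_\zeta$), it commutes with the Bochner integral; the two interpretations of the integral therefore agree as elements of $\caA$. Both $A$ and $I_s(A)$ (the latter in $\caD_{f_1}\subset \caD_\zeta$ by part \emph{8.} of Lemma \ref{pre}) lie in $\caD_\zeta$, so the identity (\ref{aisa}) lifts to $\caD_\zeta$. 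The only minor obstacle is this final compatibility of Bochner integrals in two Banach spaces related by a continuous inclusion, which is a standard fact; all the substantive analytic work has been pushed into Lemma \ref{pre}.
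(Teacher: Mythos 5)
Your proof is correct and follows essentially the same route as the paper: the normalization $\int dt\,\omega_\gamma(t)=1$, the Duhamel/fundamental-theorem-of-calculus identity $\tau_{\Phi(s)}^t(A)-A=\int_0^t du\,\delta_{\Phi(s)}(\tau_{\Phi(s)}^u(A))$ (justified via \emph{1.--3.} of Lemma \ref{pre}, here using $A\in\caD_f\subset\caD_\zeta\subset D(\delta_{\Phi(s)})$), and then integration against $\omega_\gamma(t)\,dt$, with the $\caD_\zeta$-Bochner statement delegated to Lemma \ref{pre} (the paper cites \emph{5.} for the $\lV\cdot\rV_\zeta$-continuity of the integrand and \emph{10.} for the integrability, whereas you fold both into \emph{10.}, a negligible citation difference). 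Your final remark on compatibility of the Bochner integrals under the continuous embedding $\caD_\zeta\hookrightarrow\caA$ is a harmless extra precision that the paper leaves implicit.
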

\begin{proof}

The latter part is {\it 5., 10.} of lemma \ref{pre}. To show (\ref{aisa}), 
recall the Duhamel formula
\begin{align}\label{duhamel1}
A-\tau^t_{\Phi(s)}(A)
=\int_0^t du\;
\lmk -\delta_{\Phi(s)}\rmk\circ
\tau_{\Phi(s)}^u(A),\quad A\in\caD_f.
\end{align}
Here we used the fact that 
$\tau_{\Phi(s_0)}^u\lmk \caD_f\rmk\subset
\caD_{f_1}\subset \caD_\zeta\subset D\lmk \delta_{\Phi(s)}\rmk$, which follows from {\it 2.,1,.3.}
of Lemma \ref{pre}.

We multiply (\ref{duhamel1}) by $\omega_\gamma(t)$ and integrate over $t\in\bbR$.
Then recalling (\ref{omeganom}), we obtain 
\begin{align}
	\begin{split}
&A-I_{s}(A)=
\int dt \;\omega_\gamma(t)A-\int dt \;\omega_\gamma(t)\tau^t_{\Phi(s)}(A)\\
&=\int dt \;\omega_\gamma(t)\int_0^t du\;
\lmk -\delta_{\Phi(s)}\rmk\circ
\tau_{\Phi(s)}^u(A),\quad A\in\caD_f.
	\end{split}
\end{align}
\end{proof}
In order to show (\ref{dz}), we need to know $\dot{\varphi_s}$ on $\mathcal{D}_f$.
From Proposition \ref{iff} and Lemma \ref{nini},
for any $A\in \mathcal{D}_f$,
we have
\begin{align}\label{san}
\lmk \dot{\varphi_s}\rmk(A)
=\lmk \dot{\varphi_s}\rmk(A)-\lmk
\dot{\varphi_s}\rmk\lmk
I_s(A)
\rmk
=-\int dt \omega_\gamma(t) 
\int_0^t du \dot{\varphi_s}\lmk
\delta_{\Phi(s)}\circ\tau_{\Phi(s)}^{u}(A)\rmk.
\end{align}
Here we used the Bochner integrability of the right hand side of (\ref{aisa}) with respect to
 $\lV \cdot\rV_\zeta$, and the continuity of $\dot\varphi_s$
(\ref{dcon}) with respect to $\lV \cdot\rV_\zeta$.

As $\varphi_s$ is the $\tau_{\Phi(s)}$-ground state, we have
\begin{align}
\varphi_s\circ\delta_{\Phi(s)}(B)=0,\quad B\in \mathcal{D}_{f_1},\quad s\in[0,1].
\end{align}
(Recall that $\caD_{f_1}\subset \caD_{\zeta}\subset D(\delta_{\Phi(s)})$, from {\it 1., 3.} of
Lemma \ref{pre}.)
Differentiating this by $s$, 
 we obtain
 \begin{align}\label{52}
\dot{\varphi_s}\circ\delta_{\Phi(s)}(B)+\varphi_s\circ\delta_{\dot{\Phi}(s)}(B)
=0,\quad B\in \mathcal{D}_{f_1},\quad s\in[0,1].
\end{align}
More precisely, note that 
\begin{align}
\delta_{\Phi(s)}\lmk\caD_{f_1}\rmk
\subset \delta_{\Phi(s)}\lmk\caD_{f_2}\rmk
\subset \caD_\zeta,\quad s\in[0,1],
\end{align}
by Lemma \ref{pre}, {\it 1., 4.}.
Therefore, for $B\in \caD_{f_1}$, we have $\delta_{\Phi(s)}(B)\in\caD_\zeta$, 
$s\in[0,1]$, and for any $s,s_0\in[0,1]$ with $s\neq s_0$,
 we have
\begin{align}
	\begin{split}
&\lv-\lmk
\dot{\varphi_{s_0}}\circ\delta_{\Phi(s_0)}(B)+\varphi_{s_0}\circ\delta_{\dot{\Phi}(s_0)}(B)
\rmk\rv\\
&=
\lv\frac{\varphi_s\circ\delta_{\Phi(s)}(B)-\varphi_{s_0}\circ\delta_{\Phi(s_0)}(B)}{s-s_0}
-\lmk
\dot{\varphi_{s_0}}\circ\delta_{\Phi(s_0)}(B)+\varphi_{s_0}\circ\delta_{\dot{\Phi}(s_0)}(B)
\rmk\rv\\
&\le
\lv\varphi_s\lmk
\frac{
\delta_{\Phi(s)}(B)-\delta_{\Phi(s_0)}(B)}{s-s_0}- \delta_{\dot{\Phi}(s_0)}(B)\rmk\rv
+\lv
\frac{\varphi_s\circ\delta_{\Phi(s_0)}(B)-\varphi_{s_0}\circ\delta_{\Phi(s_0)}(B)}{s-s_0}
-\lmk
\dot{\varphi_{s_0}}\circ\delta_{\Phi(s_0)}(B)
\rmk\rv\\
&+\lv\lmk \varphi_s-\varphi_{s_0}\rmk\lmk
 \delta_{\dot{\Phi}(s_0)}(B)
\rmk\rv.
	\end{split}
\end{align}
As $\delta_{\Phi(s_0)}(B)\in\caD_\zeta$, the second and the third terms of the last line 
converge to $0$ as $s\to s_0$.
The first term of the last line can be bounded as
\begin{align}
	\begin{split}
&\lv\varphi_s\lmk
\frac{
\delta_{\Phi(s)}(B)-\delta_{\Phi(s_0)}(B)}{s-s_0}- \delta_{\dot{\Phi}(s_0)}(B)\rmk\rv
\le
\lV
\frac{\delta_{\Phi(s)}(B)-\delta_{\Phi(s_0)}(B)}{s-s_0}- \delta_{\dot{\Phi}(s_0)}(B)
\rV
\\
&\le
b(|s-s_0|)C_{f_2,\zeta}^{(1)}\lV B\rV_{f_2}
\to 0,\quad s\to s_0,
	\end{split}
\end{align}
and goes to $0$ as $s\to s_0$.
Here, in the last line, we used {\it 4.} of Lemma \ref{pre} and
recalled $\caD_{f_1}\subset\caD_{f_2}$, from {\it 1.} of Lemma \ref{pre}, and
(iv) of Assumption \ref{assump}.
Hence we obtain (\ref{52}).

From this and (\ref{san}), for $A\in\caD_{f}$, recalling 
$\tau_{\Phi(s)}^{u}(A)\in\caD_{f_1}$ by 
{\it 2.} of Lemma \ref{pre},
we have
\begin{align}\label{yon}
\lmk \dot{\varphi_s}\rmk(A)
=\int dt \omega_\gamma(t) 
\int_0^t du 
\varphi_s\circ\delta_{\dot{\Phi}(s)}\lmk\tau_{\Phi(s)}^{u}(A)\rmk.
\end{align}
For any $X\in\caA_{\rm loc}$, recall that 
$\alpha_s^{-1}(X)\in\alpha_s^{-1}\lmk\caA_{\rm loc}\rmk\subset\caD_f\subset \caD_{\zeta}$
by {\it 1.} of Lemma \ref{pre}.
From (\ref{bbbbs}), (\ref{yon}) and {\it 6.} of Lemma \ref{pre}, we have 
\begin{align}
&\frac d{ds} \lmk {\varphi_s}\circ\alpha_s^{-1}(X)\rmk
=\dot{\varphi_s}\circ\alpha_s^{-1}(X)
+\varphi_s\circ \frac{d}{ds}\alpha_s^{-1}(X)\nonumber\\
&=\int dt \omega_\gamma(t) 
\int_0^t du 
\varphi_s\circ\delta_{\dot{\Phi}(s)}\lmk\tau_{\Phi(s)}^{u}\circ\alpha_s^{-1}(X)\rmk
+\int dt \omega_\gamma(t) \int_0^t du
\varphi_s\lmk\tau_{\Phi(s)}^u\circ\delta_{\dot{\Phi}(s)}\lmk
\tau_{\Phi(s)}^{-u}\lmk \alpha_{s}^{-1}(X)\rmk
\rmk\rmk=0
\end{align}
Here we used the fact that $\omega_\gamma$ is an even function, and that 
$\varphi_s$ is $\tau_{\Phi(s)}$-invariant because it is the $\tau_{\Phi(s)}$-ground state.

Hence we have proven the Theorem \ref{main}.

\section{Proof of Proposition \ref{iff}}\label{if}
Throughout this Section, we keep
Assumption \ref{assump}. We also continue to use the
same $0<\beta=\beta_5<\beta_4<\beta_3<\beta_2<\beta_1<1$ and set
$f, f_0, f_1,f_2,g, \zeta$, as given in Lemma \ref{pre}.

Let $(\caH_s,\pi_s,\Omega_s)$ be the GNS triple of $\varphi_s$.
Let $H_s:=H_{\varphi_{s},\Phi(s)}$ be the associated bulk Hamiltonian.
The key property of $I_s$ we use is the following.
\begin{lem}\label{key}
For any $A\in \caA$, we have
\begin{align}
\pi_s\lmk I_s\lmk A\rmk\rmk\Omega_s
=\varphi_s(A) \Omega_s.
\end{align}
\end{lem}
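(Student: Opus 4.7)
The plan is to use the spectral gap hypothesis together with the support property of the Fourier transform of $\omega_\gamma$ to show that the operator $\int dt\, \omega_\gamma(t) e^{itH_s}$, acting on $\caH_s$, is precisely the spectral projection onto $\ker H_s$, and then invoke uniqueness of the ground state to identify this with the rank-one projection $|\Omega_s\rangle\langle \Omega_s|$.

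First, I would pass the representation $\pi_s$ through the Bochner integral defining $I_s(A)$. Since $\pi_s:\caA\to B(\caH_s)$ is contractive and norm continuous, and $t\mapsto \omega_\gamma(t)\tau_{\Phi(s)}^t(A)$ is Bochner integrable in $(\caA,\lV\cdot\rV)$, we may write
\begin{align}
\pi_s\lmk I_s(A)\rmk\Omega_s=\int dt\;\omega_\gamma(t)\,\pi_s\lmk \tau_{\Phi(s)}^t(A)\rmk\Omega_s.
\end{align}
Next I would use the defining property of the bulk Hamiltonian, namely $\pi_s(\tau_{\Phi(s)}^t(A))\Omega_s=e^{itH_s}\pi_s(A)e^{-itH_s}\Omega_s$, combined with $H_s\Omega_s=0$, to rewrite this as
\begin{align}
\pi_s\lmk I_s(A)\rmk\Omega_s=\lmk\int dt\;\omega_\gamma(t)\,e^{itH_s}\rmk\pi_s(A)\Omega_s.
\end{align}

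The operator in parentheses is $F(H_s)$ for $F(\lambda):=\int dt\,\omega_\gamma(t) e^{it\lambda}$, i.e.\ the Fourier transform of $\omega_\gamma$ evaluated at $-\lambda$. By the support property recalled in the excerpt, $F$ is supported in $[-\gamma,\gamma]$, and by the spectral gap hypothesis (vi) of Assumption \ref{assump}, $\sigma(H_s)\subset\{0\}\cup[2\gamma,\infty)$. Therefore $F$ vanishes on the part of the spectrum lying in $[2\gamma,\infty)$, while $F(0)=\int dt\,\omega_\gamma(t)=1$ by (\ref{omeganom}). By the bounded Borel functional calculus this means $F(H_s)=P_s$, where $P_s$ is the spectral projection of $H_s$ onto $\{0\}=\ker H_s$.

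Finally I would use uniqueness of the $\tau_{\Phi(s)}$-ground state (Assumption \ref{assump} (v)): in the GNS representation of a unique ground state, $\ker H_s$ is one-dimensional and spanned by $\Omega_s$, so $P_s=|\Omega_s\rangle\langle\Omega_s|$ (with $\lV\Omega_s\rV=1$). Hence
\begin{align}
\pi_s\lmk I_s(A)\rmk\Omega_s=P_s\pi_s(A)\Omega_s=\la \Omega_s,\pi_s(A)\Omega_s\ra\,\Omega_s=\varphi_s(A)\,\Omega_s,
\end{align}
which is the claim. The main conceptual step is the identification $F(H_s)=P_s$ via the gap + Fourier support mechanism; everything else is bookkeeping about interchanging $\pi_s$ with the Bochner integral and translating ``unique ground state'' into ``$\dim\ker H_s=1$'', both of which are standard.
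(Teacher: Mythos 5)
Your proposal is correct and follows essentially the same route as the paper: the paper's proof likewise pushes $\pi_s$ through the Bochner integral, uses the covariance property of the bulk Hamiltonian, and identifies $\hat\omega_\gamma(H_s)$ (up to the $\sqrt{2\pi}$ normalization) with $\ket{\Omega_s}\bra{\Omega_s}$ via the Fourier support in $[-\gamma,\gamma]$, the gap assumption (vi), the uniqueness assumption (v), and the normalization (\ref{omeganom}). No discrepancies to report.
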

\begin{proof}
As the Fourier transform $\hat \omega_\gamma$
of $\omega_\gamma$ has support in $[-\gamma,\gamma]$, (v) and (vi) of Assumption \ref{assump} and (\ref{omeganom}) implies:
\begin{align}\label{fourie}
\hat \omega_\gamma\lmk H_s\rmk=\frac{1}{\sqrt{2\pi}}
\ket{\Omega_s}\bra{\Omega_s}.
\end{align}
From the definition of $I_s$, substituting (\ref{fourie}), we have
\begin{align}
	\begin{split}
&\pi_s\lmk I_s\lmk A\rmk\rmk\Omega_s
=\int dt \;\omega_\gamma(t) 
\pi_s\lmk \tau_{\Phi(s)}^t(A)\rmk
\Omega_s\\
&=\int dt \;\omega_\gamma(t) 
 e^{itH_{s}}\pi_s(A)
\Omega_s
={\sqrt{2\pi}}\hat{\omega}_\gamma(H_s)\pi_s(A)\Omega_s
=\varphi_s(A)\Omega_s.
	\end{split}
\end{align}

\end{proof}

From this, we immediately obtain the following decoupling.
\begin{lem}\label{decomi}
For any $A,B\in\caA$ and $s\in[0,1]$,
we have
\begin{align}\label{go}
\varphi_s\lmk
B^*I_s(A)
\rmk
=\varphi_s(B^*)\varphi_s(A).
\end{align}
\end{lem}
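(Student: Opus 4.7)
The plan is to reduce the statement immediately to Lemma \ref{key} by passing to the GNS representation $(\caH_s,\pi_s,\Omega_s)$ of $\varphi_s$. First I would write
\[
\varphi_s\bigl(B^* I_s(A)\bigr)
=\bigl\langle \Omega_s,\, \pi_s(B^*)\pi_s(I_s(A))\,\Omega_s\bigr\rangle
=\bigl\langle \pi_s(B)\Omega_s,\, \pi_s(I_s(A))\Omega_s\bigr\rangle,
\]
using that $\pi_s$ is a $*$-homomorphism and that $\Omega_s$ is the cyclic GNS vector implementing $\varphi_s$. Then I apply Lemma \ref{key}, which says exactly that $\pi_s(I_s(A))\Omega_s=\varphi_s(A)\,\Omega_s$, to rewrite the right-hand factor. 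Since $\varphi_s(A)\in\bbC$ is a scalar, it pulls out of the inner product, leaving
\[
\varphi_s\bigl(B^* I_s(A)\bigr)
=\varphi_s(A)\,\bigl\langle \pi_s(B)\Omega_s,\Omega_s\bigr\rangle
=\varphi_s(A)\,\bigl\langle \Omega_s,\pi_s(B^*)\Omega_s\bigr\rangle
=\varphi_s(A)\,\varphi_s(B^*),
\]
which is the claim.

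There is essentially no obstacle here: the content of the lemma is already contained in Lemma \ref{key}, and the computation above is a one-line consequence of applying it inside the GNS inner product. The real work — exploiting that $\hat\omega_\gamma$ is supported in $[-\gamma,\gamma]$ together with the bulk spectral gap assumption (vi) to identify $\hat\omega_\gamma(H_s)$ with the projection onto $\Omega_s$ — has already been carried out in the proof of Lemma \ref{key}, so the decoupling in (\ref{go}) is immediate.
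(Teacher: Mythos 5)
Your proof is correct and is exactly the argument the paper intends: the paper derives Lemma \ref{decomi} "immediately" from Lemma \ref{key}, and your GNS computation with the scalar $\varphi_s(A)$ pulled out of the inner product is precisely that one-line deduction.
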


\begin{lem}\label{lemv}
For each $s\in[0,1]$ and $A\in\caD_f$, the integrand of
\begin{align}\label{vs0}
V_{s}(A):=
\int dt \;\omega_\gamma(t)\int_0^t du\;
\tau_{\Phi(s)}^{t-u}\circ \lmk \delta_{\dot\Phi(s)}\rmk\circ
\tau_{\Phi(s)}^u(A),
\end{align}
 is continuous 
 and the integral can be understood as a Bochner integral in Banach space
 $(\caA,\lV\cdot \rV)$.
For any $A\in D_f$,
$[0,1]\ni s\to I_s(A)\in\caA$ is differentiable with respect to $\lV\cdot\rV$ and 
\begin{align}
\frac{d}{ds}I_s(A)=V_{s}(A).
\end{align}
%
\end{lem}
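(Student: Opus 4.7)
The plan is to split the proof into two parts: first, establish continuity and Bochner integrability of the integrand of $V_s(A)$; second, compute $\frac{d}{ds}I_s(A)$ by justifying differentiation under the integral sign via dominated convergence.

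For continuity of $(t,u)\mapsto \tau_{\Phi(s)}^{t-u}\circ\delta_{\dot\Phi(s)}\circ\tau_{\Phi(s)}^u(A)$, I apply item {\it 5.} of Lemma \ref{pre} with the specialization $s'=s''=s$, $u'=-u$, $s'''=0$. For the norm estimate giving Bochner integrability in $(\caA,\lV\cdot\rV)$, note that $\tau^{t-u}_{\Phi(s)}$ is an isometry, so the integrand's norm equals $\lV\delta_{\dot\Phi(s)}(\tau_{\Phi(s)}^u(A))\rV$. By item {\it 2.} of Lemma \ref{pre}, $\lV\tau_{\Phi(s)}^u(A)\rV_{f_1}\le b_{f,f_1}(|u|)\lV A\rV_f$, and the continuous inclusion $\caD_{f_1}\subset \caD_{f_2}$ (which follows from $f_1\le C f_2$ on $(0,\infty)$, due to $\beta_2>\beta_3$) combined with (\ref{deldadp}) yields
\[
\lV\delta_{\dot\Phi(s)}(\tau_{\Phi(s)}^u(A))\rV\le \lV\delta_{\dot\Phi(s)}(\tau_{\Phi(s)}^u(A))\rV_\zeta\le C\lV A\rV_f\,b_{f,f_1}(|u|).
\]
Integrating over $u\in[0,|t|]$ using monotonicity of $b_{f,f_1}$ and then against $\omega_\gamma(t)$, Bochner integrability follows from (\ref{b11}).

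For differentiability, I apply the Duhamel formula to compare the two dynamics. Computing $\frac{d}{du}[\tau_{\Phi(s_0)}^{-u}\circ\tau_{\Phi(s)}^u(A)]$ and integrating yields
\[
\tau_{\Phi(s)}^t(A)-\tau_{\Phi(s_0)}^t(A)
=\int_0^t du\,\tau_{\Phi(s_0)}^{t-u}\circ \delta_{\Phi(s)-\Phi(s_0)}\circ \tau_{\Phi(s)}^u(A),
\]
which is justified because $\tau_{\Phi(s)}^u(A)\in \caD_{f_2}\subset D(\delta_{\Phi(s)-\Phi(s_0)})$ by items {\it 2., 3.} of Lemma \ref{pre}. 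Dividing by $s-s_0$ and decomposing the difference quotient as $\frac{\Phi(s)-\Phi(s_0)}{s-s_0}=\dot\Phi(s_0)+r(s,s_0)$, the error from $r$ is controlled in $\lV\cdot\rV_\zeta$ by (\ref{dddb}) (which tends to zero as $s\to s_0$ since $b(|s-s_0|)\to 0$), while the remaining term is handled by the uniform-$\lV\cdot\rV_\zeta$-continuity of the map in (\ref{392}), applied with $s'=s_0$, $u'=-u$, $s''=s$, $s'''=0$. This yields pointwise-in-$t$ convergence
\[
\frac{\tau_{\Phi(s)}^t(A)-\tau_{\Phi(s_0)}^t(A)}{s-s_0}\longrightarrow \int_0^t du\,\tau_{\Phi(s_0)}^{t-u}\circ \delta_{\dot\Phi(s_0)}\circ \tau_{\Phi(s_0)}^u(A)\quad \text{as } s\to s_0,
\]
and the same estimates provide a uniform-in-$s$ (for $s$ near $s_0$) dominating function $C'\lV A\rV_f\,|t|\,b_{f,f_1}(|t|)$, which is $\omega_\gamma$-integrable by (\ref{b11}).

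Dominated convergence in $(\caA,\lV\cdot\rV)$ now allows passing the limit $s\to s_0$ through the $\int dt\,\omega_\gamma(t)$ integral defining $I_s(A)$, giving $\frac{d}{ds}I_s(A)\big|_{s=s_0}=V_{s_0}(A)$. The main obstacle is the uniform-in-$u\in[0,|t|]$ analysis needed for the pointwise-in-$t$ convergence of the $u$-integral: one must control continuity of $s\mapsto \delta_{\dot\Phi(s_0)}\circ\tau_{\Phi(s)}^u(A)$ uniformly in $u$ on compact intervals (provided by the uniform continuity in (\ref{392})), while simultaneously handling the interaction-derivative error via $b(|s-s_0|)$ from Assumption \ref{assump}~(iv)---all without direct access to a Hamiltonian in $\caA$.
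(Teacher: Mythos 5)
Your proof is correct and follows essentially the same route as the paper: the Duhamel formula, the estimates collected in Lemma \ref{pre} (items {\it 2.}, {\it 4.}, {\it 5.}), and Lebesgue dominated convergence via (\ref{b11}). The only cosmetic difference is that your Duhamel interpolation places the $s$-dependent evolution on the inner leg, so the cross term is handled by the $\lV\cdot\rV_\zeta$-uniform continuity of (\ref{392}) (with $s'''=0$, $\alpha_0^{-1}=\mathrm{id}$) instead of the plain norm continuity of item {\it 9.} used in the paper; both tools are available and the resulting bounds are the same.
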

\begin{proof}
Let $A\in\caD_f$.
That the integrand of (\ref{vs0}) is continuous 
 and the integral can be understood as a Bochner integral in Banach space
$(\caA,\lV\cdot \rV)$,
follow from  {\it 5.} and {\it 7.,} of Lemma \ref{pre}, respectively.

Next, recall the Duhamel formula
\begin{align}\label{duhamel}
\tau^t_{\Phi(s)}(A)-\tau^t_{\Phi(s_0)}(A)
=\int_0^t du\;
\tau_{\Phi(s)}^{t-u}\circ \lmk \delta_{\Phi(s)}-\delta_{\Phi(s_0)}\rmk\circ
\tau_{\Phi(s_0)}^u(A),\quad A\in \caD_f.
\end{align}
Here we used the fact that 
$\tau_{\Phi(s_0)}^u\lmk \caD_f\rmk\subset \caD_\zeta\subset D\lmk \delta_{\Phi(s)}\rmk$, which follows from {\it 2., 1., 3.,} of
Lemma \ref{pre}.
By {\it 5.} of Lemma \ref{pre}, 
 the integrand on the right hand side is continuous 
 and the integral can be understood as a Bochner integral in Banach space
 $(\caA,\lV\cdot \rV)$.
%
 
We multiply (\ref{duhamel}) by $\omega_\gamma(t)$ and integrate over $t\in\bbR$.
Then we obtain 
\begin{align}
	\begin{split}
&I_s(A)-I_{s_0}(A)=
\int dt \;\omega_\gamma(t)\tau^t_{\Phi(s)}(A)-\int dt \;\omega_\gamma(t)\tau^t_{\Phi(s_0)}(A)\\
&=\int dt \;\omega_\gamma(t)\int_0^t du\;
\tau_{\Phi(s)}^{t-u}\circ \lmk \delta_{\Phi(s)}-\delta_{\Phi(s_0)}\rmk\circ
\tau_{\Phi(s_0)}^u(A),\quad A\in\caD_f.
	\end{split}
\end{align}
By {\it 5.} of Lemma \ref{pre}, all the integrands are 
 continuous 
 and the integral can be understood as a Bochner integral in Banach space
$(\caA,\lV\cdot \rV)$.
For any $A\in \caD_f$,
\begin{align}
	\begin{split}
&\lV\frac{I_s(A)-I_{s_0}(A)}{s-s_0}-V_{s_0}(A)\rV\\
&\le
\int dt \;\omega_\gamma(t)\int_{[0,t]} du\;
\lV
\tau_{\Phi(s)}^{t-u}\circ \lmk \frac{\delta_{\Phi(s)}-\delta_{\Phi(s_0)}}{s-s_0}\rmk\circ
\tau_{\Phi(s_0)}^u(A)
-\tau_{\Phi(s_0)}^{t-u}\circ \lmk \delta_{\dot\Phi(s_0)}\rmk\circ
\tau_{\Phi(s_0)}^u(A)
\rV\\
&\le
\int dt \;\omega_\gamma(t)\int_{[0,t]} du\;
\lmk
\begin{gathered}
\lV
\lmk\tau_{\Phi(s)}^{t-u}-\tau_{\Phi(s_0)}^{t-u}
\rmk\circ
\lmk \delta_{\dot\Phi(s_0)}\rmk\circ
\tau_{\Phi(s_0)}^u(A)
\rV\\
+
\lV
\tau_{\Phi(s)}^{t-u}\circ
 \lmk \frac{\delta_{\Phi(s)}-\delta_{\Phi(s_0)}}{s-s_0}- \lmk \delta_{\dot\Phi(s_0)}\rmk\rmk\circ
\tau_{\Phi(s_0)}^u(A)
\rV
\end{gathered}
\rmk.
	\end{split}
\end{align}
Here and after, $\int_{[0,t]}du$ always indicates
Lebesgue integral (i.e. without sign) over the measurable set $[0,t]$.  
From {\it 9.} of Lemma \ref{pre},
for each $t,u$, we have
\begin{align}
\lim_{s\to s_0}\lV
\lmk\tau_{\Phi(s)}^{t-u}-\tau_{\Phi(s_0)}^{t-u}
\rmk\circ
\lmk \delta_{\dot\Phi(s_0)}\rmk\circ
\tau_{\Phi(s_0)}^u(A)
\rV=0,\quad A\in \caD_f.
\end{align}
By {\it 4.} of Lemma \ref{pre}, for each $t,u$, we have
\begin{align}
\lim_{s\to s_0}\lV
\tau_{\Phi(s)}^{t-u}\circ
 \lmk \frac{\delta_{\Phi(s)}-\delta_{\Phi(s_0)}}{s-s_0}- \lmk \delta_{\dot\Phi(s_0)}\rmk\rmk\circ
\tau_{\Phi(s_0)}^u(A)
\rV
\le \limsup_{s\to s_0}b(|s-s_0|)C_{f_2,\zeta}^{(1)}\lV \tau_{\Phi(s_0)}^u(A)\rV_{f_2}=0,\quad
A\in \caD_f.
\end{align}
Here we used $\tau_{\Phi(s_0)}^u(A)\in \caD_{f_{1}}\subset \caD_{f_{2}}$ which follows from 
Lemma \ref{pre}, {\it 1.,2}.
Furthermore, from {\it 2., 4.} of Lemma \ref{pre}, for $ A\in\caD_{f}$,
\begin{align}
&\lV
\lmk\tau_{\Phi(s)}^{t-u}-\tau_{\Phi(s_0)}^{t-u}
\rmk\circ
\lmk \delta_{\dot\Phi(s_0)}\rmk\circ
\tau_{\Phi(s_0)}^u(A)
\rV
\le
2C_{f_2,\zeta}^{(1)}\lV \tau_{\Phi(s_0)}^u(A)\rV_{f_2}\notag \\
&\le
2C_{f_2,\zeta}^{(1)}
\lmk 1+\sup_N\frac{f_1(N)}{f_2(N)}\rmk\lV \tau_{\Phi(s_0)}^u(A)\rV_{f_1}
\le
2C_{f_2,\zeta}^{(1)}b_{f,f_1}(|u|)
\lmk 1+\sup_N\frac{f_1(N)}{f_2(N)}\rmk\lV A\rV_{f}.\label{ttta}
\end{align}
Note that from $0<\beta_3<\beta_2<1$, we have
$\sup_N\frac{f_1(N)}{f_2(N)}<\infty$.
Similarly, from {\it 2., 4.} of Lemma \ref{pre},
\begin{align}
\lV
\tau_{\Phi(s)}^{t-u}\circ
 \lmk \frac{\delta_{\Phi(s)}-\delta_{\Phi(s_0)}}{s-s_0}- \lmk \delta_{\dot\Phi(s_0)}\rmk\rmk\circ
\tau_{\Phi(s_0)}^u(A)
\rV
\le
b(1)C_{f_2,\zeta}^{(1)}b_{f,f_1}(|u|)
\lmk 1+\sup_N\frac{f_1(N)}{f_2(N)}\rmk\lV A\rV_f.
\end{align}
Combining this (\ref{b11}) in {\it 2.} of Lemma \ref{pre},
from Lebesgue's convergence theorem, we obtain
\begin{align}
\lim_{s\to s_0}\lV\frac{I_s(A)-I_{s_0}(A)}{s-s_0}-V_{s_0}(A)\rV=0,\quad
A\in \caD_f.
\end{align}
\end{proof}
\begin{lem}\label{dff}
For any $A,B\in \mathcal{D}_f$ and $s\in[0,1]$, 
$A,B^{*},B^*I_s(A)$ belong to $\caD_\zeta$ and
 we have
\begin{align}\label{ccca}
&\dot{\varphi_s}\lmk B^*I_s(A)\rmk+
\int dt \;\omega_\gamma(t)\int_0^t du 
\varphi_s\lmk
B^* \tau_{{\Phi(s)}}^{t-u}\circ
\delta_{\dot{\Phi}(s)}\circ \tau_{\Phi(s)}^{u}(A)
\rmk\nonumber\\
&=
\dot{\varphi_s}(B^*)\varphi_s(A)
+\varphi_s(B^*)\dot{\varphi_s}(A).
\end{align}
\end{lem}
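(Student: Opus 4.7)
The plan is to differentiate the decoupling identity of Lemma \ref{decomi}, namely
\[
\varphi_s\lmk B^*\,I_s(A)\rmk = \varphi_s(B^*)\,\varphi_s(A),\quad s\in[0,1],
\]
with respect to $s$. Before doing so, note that $A,B\in\caD_f\subset \caD_\zeta$ by item~1 of Lemma \ref{pre}, and since $\caD_f$ is a $*$-algebra (Lemma \ref{alg}) we also have $B^*\in\caD_f\subset\caD_\zeta$; combining this with $I_s(A)\in \caD_{f_1}\subset \caD_\zeta$ from item~8 of Lemma \ref{pre} and the algebra property of $\caD_\zeta$ yields $B^*I_s(A)\in\caD_\zeta$. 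In particular $\dot{\varphi_s}(B^*I_s(A))$, $\dot{\varphi_s}(A)$, and $\dot{\varphi_s}(B^*)$ all make sense via Assumption \ref{assump}~(vii).

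The right-hand side $\varphi_s(B^*)\varphi_s(A)$ is an ordinary product of two scalar-valued differentiable functions of $s$, whose derivative is $\dot{\varphi_s}(B^*)\varphi_s(A)+\varphi_s(B^*)\dot{\varphi_s}(A)$. For the left-hand side I use the splitting
\[
\frac{\varphi_s(B^*I_s(A))-\varphi_{s_0}(B^*I_{s_0}(A))}{s-s_0}=\varphi_s\!\lmk B^*\,\frac{I_s(A)-I_{s_0}(A)}{s-s_0}\rmk+\frac{(\varphi_s-\varphi_{s_0})(B^*I_{s_0}(A))}{s-s_0}.
\]
The second summand tends to $\dot{\varphi_{s_0}}(B^*I_{s_0}(A))$ by Assumption \ref{assump}~(vii), since $B^*I_{s_0}(A)\in\caD_\zeta$. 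For the first summand, Lemma \ref{lemv} gives $Y_s:=(I_s(A)-I_{s_0}(A))/(s-s_0)\to V_{s_0}(A)$ in $\nn{\cdot}$, hence
\[
\lv\varphi_s(B^*Y_s)-\varphi_{s_0}(B^*V_{s_0}(A))\rv\le \nn{B}\,\nn{Y_s-V_{s_0}(A)}+\lv(\varphi_s-\varphi_{s_0})(B^*V_{s_0}(A))\rv.
\]
The first term vanishes by norm convergence. The second vanishes by weak-$*$ continuity of $s\mapsto\varphi_s$, which I obtain from pointwise continuity on the dense subset $\caA_{\rm loc}\subset\caD_\zeta$ (Assumption \ref{assump}~(vii)) together with $\lV\varphi_s\rV=1$ and the density of $\caA_{\rm loc}$ in $\caA$.

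Combining the two limits gives $\frac{d}{ds}\varphi_s(B^*I_s(A))=\dot{\varphi_s}(B^*I_s(A))+\varphi_s(B^*V_s(A))$. Pulling the bounded linear functional $\varphi_s(B^*\,\cdot\,)$ inside the Bochner integral on $(\caA,\nn{\cdot})$ that defines $V_s(A)$ (Lemma \ref{lemv}) rewrites $\varphi_s(B^*V_s(A))$ as precisely the iterated integral appearing in (\ref{ccca}). Equating the two derivatives yields the claimed identity. The main obstacle I expect is this product-rule step, which mixes differentiability of the state $\varphi_s$ (available only on $\caD_\zeta$) with the norm-derivative $V_s(A)$ of $I_s(A)$ (whose value may not a priori lie in $\caD_\zeta$); the split above sidesteps this by invoking $\dot{\varphi_s}$ only on the element $B^*I_{s_0}(A)\in\caD_\zeta$, letting $V_{s_0}(A)$ enter only through the state $\varphi_s$ itself, where weak-$*$ continuity suffices.
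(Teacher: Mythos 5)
Your proposal is correct and follows essentially the same route as the paper: differentiate the decoupling identity of Lemma \ref{decomi}, compute the derivative of $s\mapsto\varphi_s(B^*I_s(A))$ by splitting off the difference quotient of $I_s(A)$ (handled by Lemma \ref{lemv}) and the increment of $\varphi_s$ on the fixed element $B^*I_{s_0}(A)\in\caD_\zeta$ (handled by Assumption \ref{assump} (vii)), then pull $\varphi_s(B^*\cdot)$ through the Bochner integral defining $V_s(A)$. The only difference is cosmetic: you spell out the weak-$*$ continuity of $s\mapsto\varphi_s$ (dense subset plus $\lV\varphi_s\rV=1$) needed for the term $(\varphi_s-\varphi_{s_0})(B^*V_{s_0}(A))$, a point the paper leaves implicit.
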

\begin{proof}
For any $A,B\in\caD_f\subset\caD_\zeta$ and $s_0\in[0,1]$,
$B^*I_{s_0}(A)$ belongs to $\caD_{f_1}\subset\caD_{\zeta}$
because of {\it 1., 8.,} of Lemma \ref{pre} and Lemma \ref{alg}.
Therefore, by (vii) of Assumption \ref{assump}, 
$[0,1]\ni s\mapsto \varphi_s\lmk B^*I_{s_0}(A)\rmk\in \bbC$ is differentiable.
For any
$s,s_0\in[0,1]$ with $s\neq s_0$, we have
\begin{align}\label{rrh}
&\frac{1}{s-s_0}\lmk \varphi_s\lmk B^*I_s(A)\rmk-\varphi_{s_0}\lmk B^*I_{s_0}(A)\rmk\rmk-
\varphi_{s_0}\lmk
B^{*}V_{s_0}(A)
\rmk-\dot{\varphi}_{s_0}\lmk B^*I_{s_0}(A)\rmk\\
&=\varphi_s\lmk B^*\lmk \frac{I_s(A)-I_{s_0}(A)}{s-s_0}-V_{s_0}(A)\rmk\rmk-\dot{\varphi_{s_0}}\lmk B^*I_{s_0}(A)\rmk
+\frac{1}{s-s_0}\lmk \varphi_s-\varphi_{s_0}\rmk\lmk B^*I_{s_0}(A)\rmk\notag \\
&+\lmk
\varphi_{s}-\varphi_{s_0}
\rmk
\lmk B^{*}
V_{s_0}(A)
\rmk. \notag
\end{align}
The right hand side goes to $0$ as $s\to s_0$, because of Lemma \ref{lemv} and the differentiability of 
 $[0,1]\ni s\mapsto \varphi_s\lmk B^*I_{s_0}(A)\rmk\in \bbC$.
 On the other hand, the first part of the left hand side of (\ref{rrh})
 is \begin{align}
 \frac{1}{s-s_0}\lmk \varphi_s\lmk B^*I_s(A)\rmk-\varphi_{s_0}\lmk B^*I_{s_0}(A)\rmk\rmk
 =\frac{1}{s-s_0}\lmk \varphi_s\lmk B^*\rmk \varphi_{s}\lmk A\rmk-
  \varphi_{s_{0}}\lmk B^*\rmk \varphi_{s_{0}}\lmk A\rmk
 \rmk,
 \end{align}
 because of Lemma \ref{decomi}
 and converges to 
 \begin{align}
 \dot{\varphi_{s_{0}}}(B^*)\varphi_{s_{0}}(A)
+\varphi_{s_{0}}(B^*)\dot{\varphi_{s_{0}}}(A),
 \end{align}
 as $s\to s_0$.
 Hence we obtain (\ref{ccca}).
 
\end{proof}

For each $s\in[0,1]$, we introduce the left ideal $\caL_s$ of $\caA$ by
\begin{align}
\caL_s:=\left\{
A\in\caA\mid
\varphi_s(A^*A)=0
\right\}.
\end{align}
\begin{lem}\label{lld}
For any $A\in \mathcal{D}_f$ and $s\in[0,1]$,
$I_s(A)-\varphi_s(A)\unit$ 
belongs to $\caL_s\cap\caL_s^*\cap \mathcal{D}_{f_1}$.
\end{lem}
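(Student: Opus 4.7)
The plan is to reduce all three containments to routine checks, exploiting the GNS realization of $\varphi_s$ together with the already-established Lemmas \ref{key} and \ref{pre}. Recall that the GNS construction gives $\caL_s=\{A\in\caA\mid \pi_s(A)\Omega_s=0\}$: indeed if $\pi_s(A)\Omega_s=0$ then $\varphi_s(A^*A)=\lV\pi_s(A)\Omega_s\rV^2=0$, and conversely. This reformulation lets us check membership in $\caL_s$ by a single vector equation rather than via positivity.

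First, for the $\caL_s$ part, apply Lemma \ref{key} directly to $A\in\caD_f\subset\caA$: this gives $\pi_s(I_s(A))\Omega_s=\varphi_s(A)\Omega_s=\pi_s(\varphi_s(A)\unit)\Omega_s$, so $\pi_s(I_s(A)-\varphi_s(A)\unit)\Omega_s=0$ and hence $I_s(A)-\varphi_s(A)\unit\in\caL_s$.

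For the $\caL_s^*$ part, note that $B\in\caL_s^*$ iff $B^*\in\caL_s$. Since $\tau_{\Phi(s)}^t$ is a $*$-automorphism, one has $I_s(A)^*=I_s(A^*)$, and since $\caD_f$ is a $*$-algebra (Lemma \ref{alg}), $A^*\in\caD_f$. Thus
\[
\bigl(I_s(A)-\varphi_s(A)\unit\bigr)^*=I_s(A^*)-\varphi_s(A^*)\unit,
\]
which lies in $\caL_s$ by the same application of Lemma \ref{key} to $A^*$. Hence $I_s(A)-\varphi_s(A)\unit\in\caL_s^*$. Finally, for membership in $\caD_{f_1}$, Lemma \ref{pre} {\it 8.} gives $I_s(A)\in\caD_{f_1}$, and scalar multiples of $\unit$ lie in $\caD_{f_1}$ trivially, so the difference belongs to $\caD_{f_1}$.

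There is essentially no obstacle; the content of the lemma is entirely packaged in Lemma \ref{key} (which uses the spectral gap of $H_s$ and the Fourier support of $\omega_\gamma$) and in the $*$-closure of $\caD_f$. The only minor subtlety worth flagging is remembering that $\caL_s^*$ is not itself a left ideal, so it must be handled by taking adjoints and reducing to the $\caL_s$ case rather than by a single vector-state computation.
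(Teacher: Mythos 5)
Your proof is correct and follows essentially the same route as the paper: both use Lemma \ref{key} (applied to $A$ and to $A^*$, via $I_s(A)^*=I_s(A^*)$) to get membership in $\caL_s\cap\caL_s^*$ through the GNS vector equation, and Lemma \ref{pre} \emph{8.} for membership in $\caD_{f_1}$. The only cosmetic difference is that you spell out the identification $\caL_s=\{A\mid\pi_s(A)\Omega_s=0\}$ and the $*$-closedness of $\caD_f$, which the paper leaves implicit.
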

\begin{proof}
Let $A\in \caD_f$.
Let $(\caH_s,\pi_s,\Omega_s)$ be the GNS triple of
$\varphi_s$.
That $I_s(A)-\varphi_s(A)\unit\in \mathcal{D}_{f_1}$ is Lemma \ref{pre} {\it 8.}.
To show $I_s(A)-\varphi_s(A)\unit\in \caL_s\cap \caL_s^{*}$,
recall Lemma \ref{key}.
From the latter Lemma, we obtain
\begin{align}
\pi_s\lmk
I_s(A)-\varphi_s(A)
\rmk\Omega_s=
\pi_s\lmk
I_s(A^*)-\varphi_s(A^*)
\rmk\Omega_s=0,
\end{align}
which
means $I_s(A)-\varphi_s(A)\unit\in \caL_s\cap \caL_s^*$, because
$I_s(A)^*=I_s(A^*)$.
\end{proof}
\begin{lem}\label{una}
For any $A\in \caL_s\cap \mathcal{D}_{f_1}$, there is a positive sequence
$u_{N,A}\in  \caA_{\Lambda_N}$,  $N\in\nan$ with $\lV u_{N,A}\rV\le 1$ such that
\begin{align}
\lV A(1-u_{N,A})\rV_g\to 0,
\end{align}
and
\begin{align}\label{psz}
\lim_{N\to\infty}\varphi_s(u_{N,A})= 0,
\end{align}
and
\begin{align}\label{psz2}
{\mathrm{dist}}\lmk
u_{N,A},\caL_s
\rmk
:=\inf_{x\in\caL_s}\lV x-u_{N,A}\rV\to 0,\quad N\to\infty.
\end{align}
\end{lem}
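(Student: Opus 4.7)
The construction will take $u_{N,A}$ to be a continuous functional-calculus cutoff of the positive local operator $A_N^*A_N$, where $A_N:=\bbE_N(A)\in\caA_{\Lambda_N}$. Since $A\in\caD_{f_1}$, one has $\lV A-A_N\rV\le\lV A\rV_{f_1}f_1(N)$, from which $|\varphi_s(A_N^*A_N)|\le 2\lV A\rV\lV A\rV_{f_1}f_1(N)\to 0$. I will fix a parameter $\delta_N>0$ to be chosen and let $h_\delta:[0,\infty)\to[0,1]$ denote the continuous piecewise-linear ramp with $h_\delta(0)=0$ and $h_\delta(t)=1$ for $t\ge\delta$. Setting
\[
u_{N,A}:=h_{\delta_N}(A_N^*A_N)\in\caA_{\Lambda_N},
\]
I automatically get $0\le u_{N,A}\le\unit$. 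Two spectral estimates drive the whole argument:
\[
\lV A_N(\unit-u_{N,A})\rV^2\le\sup_{t\ge 0}t(1-h_{\delta_N}(t))^2\le\delta_N,\qquad u_{N,A}\le \frac{A_N^*A_N}{\delta_N}.
\]

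These give immediately $\lV A(\unit-u_{N,A})\rV\le\lV A\rV_{f_1}f_1(N)+\sqrt{\delta_N}$ and $\varphi_s(u_{N,A})\le \varphi_s(A_N^*A_N)/\delta_N\le 2\lV A\rV\lV A\rV_{f_1}f_1(N)/\delta_N$. To handle the full $\lV\cdot\rV_g$ norm I will split the supremum over $M$ into $M\ge N$ and $M<N$. When $M\ge N$, $u_{N,A}\in\caA_{\Lambda_N}\subset\caA_{\Lambda_M}$, and the $\caA_{\Lambda_M}$-bimodule property of $\bbE_M$ gives
\[
A(\unit-u_{N,A})-\bbE_M(A(\unit-u_{N,A}))=(A-\bbE_M(A))(\unit-u_{N,A}),
\]
so the $M$-th term is at most $\lV A\rV_{f_1}f_1(M)/g(M)$, and the supremum over $M\ge N$ tends to zero because $\beta_2>\beta_4$. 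When $M<N$ I bound crudely by $2\lV A(\unit-u_{N,A})\rV/g(N-1)$. The choice $\delta_N:=\sqrt{f_1(N)\,g(N-1)^2}$ yields $\sqrt{\delta_N}/g(N-1)=\bigl(f_1(N)/g(N-1)^2\bigr)^{1/4}$ and $f_1(N)/\delta_N=\bigl(f_1(N)/g(N-1)^2\bigr)^{1/2}$, both tending to zero by the exponent comparison $\beta_2>\beta_4$, and hence secures $\lV A(\unit-u_{N,A})\rV_g\to 0$ together with $\varphi_s(u_{N,A})\to 0$.

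For the distance estimate I will use that the unique gapped ground state $\varphi_s$ is pure, so the GNS representation $\pi_s$ on $\caH_s$ is irreducible. Kadison's transitivity theorem then produces $y_N\in\caA$ with $\pi_s(y_N)\Omega_s=\pi_s(u_{N,A})\Omega_s$ and $\lV y_N\rV\le\lV \pi_s(u_{N,A})\Omega_s\rV$, so that $u_{N,A}-y_N\in\caL_s$ and
\[
\mathrm{dist}(u_{N,A},\caL_s)\le\lV y_N\rV\le \sqrt{\varphi_s(u_{N,A}^2)}\le\sqrt{\varphi_s(u_{N,A})}\to 0.
\]

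The main obstacle is calibrating $\delta_N$ to simultaneously make $A_N(\unit-u_{N,A})$ small as an operator (which wants large $\delta_N$) and keep $\varphi_s(u_{N,A})$ small (which wants small $\delta_N$), all while controlling the $\lV\cdot\rV_g$ tail at scale $N-1$. The strict ordering $\beta_5<\beta_4<\beta_3<\beta_2<\beta_1$ built into Lemma \ref{pre} provides exactly the decay margin needed. A secondary subtlety is the purity of $\varphi_s$ used in the last step, which should follow from uniqueness of the ground state together with the bulk gap in Assumption \ref{assump}.
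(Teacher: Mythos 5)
Your proof is correct, but it takes a partially different route from the paper's. The construction itself is morally the same: the paper also regularizes a localized version of $A^*A$, taking $u_{N,A}=(1+h(N)\bbE_N(A^*A))^{-1}h(N)\bbE_N(A^*A)$ with $h(t)=e^{t^{\beta'}}$, $\beta_4<\beta'<\beta_2$ (a soft resolvent cutoff at scale $1/h(N)$), whereas you apply a piecewise-linear ramp at scale $\delta_N$ to $\bbE_N(A)^*\bbE_N(A)$; in both cases the whole game is the same calibration of the cutoff scale against the decay mismatch between $f_1$ and $g$, and your choice $\delta_N=\sqrt{f_1(N)g(N-1)^2}$ together with the bimodule identity $\bbE_M(A(\unit-u_{N,A}))=\bbE_M(A)(\unit-u_{N,A})$ for $M\ge N$ reproduces the paper's $\lV\cdot\rV_g$ estimate. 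Where you genuinely diverge is (\ref{psz}) and (\ref{psz2}): the paper gets both at once by observing that its cutoff is within norm $h(N)f_1(N)\lV A^*A\rV_{f_1}$ of $(1+h(N)A^*A)^{-1}h(N)A^*A$, which lies in the left ideal $\caL_s$ because $A\in\caL_s$ (and $\varphi_s$ vanishes on $\caL_s$), while you prove $\varphi_s(u_{N,A})\to 0$ from the operator inequality $u_{N,A}\le A_N^*A_N/\delta_N$ and then obtain the distance estimate from purity of $\varphi_s$ plus Kadison's transitivity theorem. That route is legitimate: purity does follow from Assumption \ref{assump} (v) alone, since the set of $\tau_{\Phi(s)}$-ground states is a face of the state space (no gap needed), and the transitivity theorem with norm control gives $y_N$ with $\lV y_N\rV\le(1+\varepsilon)\lV\pi_s(u_{N,A})\Omega_s\rV$ — the factor $(1+\varepsilon)$, which the standard statement requires, is harmless here. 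Still, the paper's left-ideal argument is more elementary, and you could recover it in your setting at no cost: writing $u_{N,A}=g_N(A_N^*A_N)A_N^*A_N$ with $g_N(t)=h_{\delta_N}(t)/t$ and replacing one factor $A_N$ by $A$ shows ${\mathrm{dist}}(u_{N,A},\caL_s)\le\lV g_N(A_N^*A_N)A_N^*\rV\,\lV A_N-A\rV\le f_1(N)\lV A\rV_{f_1}/\sqrt{\delta_N}\to 0$, which removes the dependence on purity and Kadison transitivity altogether.
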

\begin{proof}
Choose $\beta_4<\beta'<\beta_2$ and set $h(t):=e^{t^{\beta'}}$.
Then we have
\begin{align}\label{hasy}
\lim_{N\to \infty}\frac{1}{g(N)\sqrt{h(N)}}=0,\quad
\lim_{N\to\infty} h(N) {f_1(N)}=0.
\end{align}
Let $A\in \caL_s\cap \mathcal{D}_{f_1}$.
Set 
\begin{align}
u_{N,A}:=
\lmk
1+h(N)\bbE_N(A^*A)
\rmk^{-1}
h(N)\bbE_N(A^*A).
\end{align}
Clearly, $\lV u_{N,A}\rV\le 1$, and $0\le  u_{N,A}\le 1$.
Then we have
\begin{align}
	\begin{split}
&\lV
u_{N,A}-\lmk
1+h(N)(A^*A)
\rmk^{-1}
h(N)(A^*A)
\rV\\
&=
\lV
\lmk
1+h(N)\bbE_N(A^*A)
\rmk^{-1}
h(N)\bbE_N(A^*A)
-\lmk
1+h(N)(A^*A)
\rmk^{-1}
h(N)(A^*A)
\rV\\
&=\lV
\lmk
1+h(N)\bbE_N(A^*A)
\rmk^{-1}
-\lmk
1+h(N)(A^*A)
\rmk^{-1}
\rV\\
&=
\lV
\lmk
1+h(N)\bbE_N(A^*A)
\rmk^{-1}
\lmk
h(N)\lmk
A^*A-\bbE_N(A^*A)
\rmk
\rmk\lmk
1+h(N)(A^*A)
\rmk^{-1}
\rV\\
&\le
h(N) {f_1(N)}\lV A^*A\rV_{f_1}\to 0,\quad N\to\infty, 
	\end{split}
\end{align}
from (\ref{hasy}).
As $\lmk
1+h(N)(A^*A)
\rmk^{-1}
h(N)(A^*A)\in\caL_s$, we obtain
(\ref{psz}), (\ref{psz2}).
We also have
\begin{align}
	\begin{split}
&\lV
A(1-u_{N,A})
\rV^2
\le
\lV
(1-u_{N,A})(A^*A-\bbE_N(A^*A))(1-u_{N,A})
\rV
+
\lV
(1-u_{N,A})(\bbE_N(A^*A))(1-u_{N,A})
\rV\\
&\le
\lV A^*A\rV_{f_1} {f_1(N)}
+\lV
(1+h(N)\bbE_N(A^*A))^{-1}\bbE_N(A^*A)(1+h(N)\bbE_N(A^*A))^{-1}
\rV\\
&=
\lV A^*A\rV_{f_1} {f_1(N)}
+\frac{1}{h(N)}\lV
(1+h(N)\bbE_N(A^*A))^{-1}h(N)\bbE_N(A^*A)(1+h(N)\bbE_N(A^*A))^{-1}
\rV\\
&\le
\lV A^*A\rV_{f_1} {f_1(N)}
+\frac1{h(N)}=:\varepsilon_N^2.
	\end{split}
\end{align}
For $M>N$, we have
\begin{align}
	\begin{split}
&\frac{\lV
A(1-u_{N,A})
-
\bbE_M\lmk
A(1-u_{N,A})
\rmk
\rV}
{g(M)}
=
 \frac{\lV
\lmk
A-\bbE_M\lmk
A\rmk\rmk
\lmk 1-u_{N,A}\rmk
\rV}
{g(M)}\\
&\le
\lV
A
\rV_{f_{1}}\sup_{M>N}
\lmk\frac{f_1(M)}{g(M)}\rmk=:\varepsilon^{'}_N\to 0,\quad N\to \infty.
	\end{split}
\end{align}
For $M\le N$,
we have 
\begin{align}
&\frac{\lV
A(1-u_{N,A})
-
\bbE_M\lmk
A\lmk 1-u_{N,A}\rmk
\rmk
\rV}
{g(M)}
\le
\frac{2\lV
A(1-u_{N,A})
\rV}{g(N)}\frac{g(N)}{g(M)}
\le
\frac{2\lV
A(1-u_{N,A})
\rV}{g(N)}
\le\frac{2\varepsilon_N}{g(N)}\to0,\quad N\to\infty,
\end{align}
from (\ref{hasy}) and $0<\beta_4<\beta_2<1$.
Hence we obtain,
\begin{align}
\lV A(1-u_{N,A})\rV_g\to \infty,
\end{align}
 proving the Lemma.
\end{proof}
Now we can prove Proposition \ref{iff}.

\begin{proofof}[Proposition \ref{iff}]

Fix $A\in\caD_f$, and $s\in[0,1]$.
By Lemma \ref{lld}, $I_s(A)-\varphi_s(A)\unit\in\caL_s\cap \caL_s^*\cap\caD_{f_1}$.
Applying Lemma \ref{una} to $\lmk I_s(A)-\varphi_s(A)\unit\rmk^*\in\caL_s\cap \caL_s^*\cap\caD_{f_1}$
we obtain a sequence 
$u_N\in \caA_{\Lambda_N}$, $N\in\bbN$ such that
$\lV u_N\rV\le 1$
\begin{align}
&\lV\lmk1-u_N\rmk^*
\lmk I_s(A)-\varphi_s(A)\unit\rmk
\rV_g
=\lV
\lmk I_s(A)-\varphi_s(A)\unit\rmk^*\lmk1-u_N\rmk
\rV_g
\to 0,\label{iv}\\
&{\mathrm{dist}}(u_N, \caL_s)\to 0,\label{dul}
\end{align}
as $N\to 0$.
Applying Lemma \ref{dff} to $u_N\in \caD_f$ and $A\in \caD_f$, we have
\begin{align}\label{dffa}
&\dot{\varphi_s}\lmk u_N^*\lmk I_s(A)-\varphi_s(A)\unit\rmk\rmk\nonumber\\
&=-
\int dt \;\omega_\gamma(t)\int_0^t du 
\varphi_s\lmk
u_N^* \tau_{{\Phi(s)}}^{t-u}\circ
\delta_{\dot{\Phi}(s)}\circ \tau_{\Phi(s)}^{u}(A)
\rmk+
\varphi_s(u_N^*)\dot{\varphi_s}(A).
\end{align}
By (\ref{dul}), 
we have $\lim_{N\to \infty}\varphi_s\lmk
u_N^* \tau_{{\Phi(s)}}^{t-u}\circ
\delta_{\dot{\Phi}(s)}\circ \tau_{\Phi(s)}^{u}(A)
\rmk=0$.
On the other hand, from {\it 2.,} and {\it 4.,} of Lemma \ref{pre}, since $\lV u_N\rV\le 1$, we have, as in (\ref{ttta}), the bound
\begin{align}
\lv
\varphi_s\lmk
u_N^* \tau_{{\Phi(s)}}^{t-u}\circ
\delta_{\dot{\Phi}(s)}\circ \tau_{\Phi(s)}^{u}(A)
\rmk
\rv
\le
C_{f_2,\zeta}^{(1)}b_{f,f_1}(|u|)
\lmk 1+\sup_N\frac{f_1(N)}{f_2(N)}\rmk\lV A\rV_{f}<\infty.
\end{align}
From {\it 2.} of Lemma \ref{pre}, 
\begin{align}\label{recycle}
\int dt \;\omega_\gamma(t)\int_{[0,t]} du 
b_{f,f_1}(|u|)<\infty.
\end{align}
Therefore, by Lebesgue's convergence theorem, we have
\begin{align}
\lim_{N\to\infty}\int dt \;\omega_\gamma(t)\int_0^t du 
\varphi_s\lmk
u_N^* \tau_{{\Phi(s)}}^{t-u}\circ
\delta_{\dot{\Phi}(s)}\circ \tau_{\Phi(s)}^{u}(A)
\rmk=0.
\end{align}
We also have $\lim_{N\to\infty}\varphi_s(u_N^*)\dot{\varphi_s}(A)=0$,
from (\ref{dul}).
Therefore, the right hand side of (\ref{dffa}) goes to $0$ as $N\to\infty$.
The left hand side of  (\ref{dffa}) goes to 
$\dot{\varphi_s}\lmk \lmk I_s(A)-\varphi_s(A)\unit\rmk\rmk$
as $N\to\infty$, because of the continuity (\ref{dcon}) of $\dot\varphi_s$
and (\ref{iv}).
Clearly, $\dot\varphi_s(\unit)=0$.
Therefore, we obtain
$\dot{\varphi_s} \lmk I_s(A)\rmk=0$.

\end{proofof}
\section{Technical Lemmas}\label{lems}
In this Section, we prove various lemmas used in this paper. 
We assume
(i), (ii), (iii) of
Assumption \ref{assump} throughout this section. 
For $t\in\bbR$, 
$[t]$ indicates the largest integer less than or equal to $t$.

\subsection{Properties of $\tau_{\Phi(s)}$}
First we recall several facts from \cite{BMNS} and \cite{NSY}.
Define positive functions $F(r)$ and 
 $F_1(r)$ on $\bbR_{\ge 0}$ by $F(r):=(1+r)^{-(\nu+1)}$, $F_1(r):=(1+r)^{-(\nu+1)}e^{-r}$.
For a path of interactions satisfying Assumption \ref{assump},
there exist
positive constants
$C_{1}'$, $v$ satisfying the following Lieb-Robinson bound:
For any $X,Y\in{\mathfrak S}_{\bbZ^\nu}$, $A\in\caA_X$,
$B\in\caA_Y$, $\Lambda\in{\mathfrak S}_{\bbZ^\nu}$, $s\in[0,1]$ and $t\in\bbR$, we have
\begin{align}\label{lr}
\lV
\left[
\tau^{t}_{\Phi(s)}(A),B
\right]
\rV,\quad
\lV
\left[
\tau^{t}_{\Phi(s),\Lambda}(A),B
\right]
\rV
\le
C_{1}'e^{v|t|}
\sum_{x\in X,y\in Y}F_1(d(x,y))\lV A\rV \lV B\rV.
\end{align}
We fix the constant $v$ and call it the Lieb-Robinson velocity.
From this and Corollary 4.4. of \cite{NSY} (Proposition \ref{NSY44}) we obtain the following.
\begin{lem}\label{41lem}
There is a positive constant $C_1>0$ such that
\begin{align}\label{pen}
\lV
\tau^t_{\Phi(s),\Lambda}(A)-\bbE_N\lmk
\tau^{t}_{\Phi(s),\Lambda}(A)
\rmk
\rV,\;
\lV
\tau^{t}_{\Phi(s)}(A)-\bbE_N\lmk
\tau^{t}_{\Phi(s)}(A)
\rmk
\rV
\le
C_{1}\lv \Lambda_M\rv e^{v|t|-(N-M)}\lV A\rV,
\end{align}
for any $M,N\in\nan$ with $M\le N$, $A\in\caA_{\Lambda_M}$ and $\Lambda\in \mathfrak{S}_{\bbZ^\nu}$.
\end{lem}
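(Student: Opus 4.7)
The plan is to combine the Lieb-Robinson bound (\ref{lr}) with the standard quasi-locality criterion from Proposition \ref{NSY44} (Corollary 4.4 of \cite{NSY}), which controls $\lV O - \bbE_N(O)\rV$ by the size of commutators $\lV [O,B]\rV$ for $B$ supported in $\Lambda_N^c$. Together with the geometric fact that $\Lambda_M$ and $\Lambda_N^c$ are separated by distance at least $N-M$, this will extract the desired decay factor $e^{-(N-M)}$.

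Concretely, I fix $A \in \caA_{\Lambda_M}$ and let $B \in \caA_Y$ for an arbitrary finite $Y\subset \Lambda_N^c$. Direct application of (\ref{lr}) gives
\begin{align*}
\lV [\tau^t_{\Phi(s),\Lambda}(A),B]\rV
\le C_1' e^{v|t|} \sum_{x\in\Lambda_M,\, y\in Y} F_1(d(x,y))\, \lV A\rV\, \lV B\rV.
\end{align*}
Since $x\in\Lambda_M$ and $y\in\Lambda_N^c$ force $d(x,y)\ge N-M$, I extract the exponential decay via $F_1(d(x,y)) = (1+d(x,y))^{-(\nu+1)} e^{-d(x,y)} \le e^{-(N-M)}(1+d(x,y))^{-(\nu+1)}$, and then use translation invariance to bound $\sum_{y\in\bbZ^\nu}(1+d(x,y))^{-(\nu+1)}$ by a finite constant $c_\nu$ depending only on $\nu$. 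Taking the supremum over finite $Y \subset \Lambda_N^c$ and $B \in \caA_Y$ with $\lV B\rV\le 1$, I arrive at
\begin{align*}
\sup\bigl\{\lV [\tau^t_{\Phi(s),\Lambda}(A),B]\rV : Y\subset \Lambda_N^c \text{ finite},\, B\in\caA_Y,\, \lV B\rV\le 1\bigr\}
\le C_1' c_\nu\, |\Lambda_M|\, e^{v|t|-(N-M)}\lV A\rV.
\end{align*}

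Proposition \ref{NSY44} then converts this commutator estimate into the bound $\lV \tau^t_{\Phi(s),\Lambda}(A) - \bbE_N(\tau^t_{\Phi(s),\Lambda}(A))\rV \le C_1\, |\Lambda_M|\, e^{v|t|-(N-M)} \lV A\rV$, where $C_1$ absorbs $C_1'$, $c_\nu$, and the universal constant from that proposition. The identical argument applies verbatim with $\tau^t_{\Phi(s),\Lambda}$ replaced by $\tau^t_{\Phi(s)}$, since (\ref{lr}) is stated uniformly for both the finite-volume and thermodynamic-limit dynamics, which disposes of the second inequality in (\ref{pen}).

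I do not anticipate any serious obstacle: every ingredient is standard. The only point requiring minor care is confirming that $c_\nu = \sum_{y\in\bbZ^\nu}(1+|y|)^{-(\nu+1)}$ is finite, which follows from polar summation in $\bbZ^\nu$ together with the exponent $\nu+1 > \nu$.
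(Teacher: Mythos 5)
Your proof is correct and is essentially the paper's own argument: the paper obtains Lemma \ref{41lem} precisely by combining the Lieb-Robinson bound (\ref{lr}) with Corollary 4.4 of \cite{NSY} (Proposition \ref{NSY44}), exactly as you do, with the factor $|\Lambda_M|e^{-(N-M)}$ coming from $d(\Lambda_M,\Lambda_N^c)\ge N-M$ and the summability of $(1+r)^{-(\nu+1)}$ over $\bbZ^\nu$. No gaps; your treatment of both the finite-volume and thermodynamic-limit dynamics via the uniform statement of (\ref{lr}) is also what the paper relies on.
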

We also have the following (see Corollary 3.6 (3.80) of \cite{NSY}.)
\begin{lem}\label{locl}
There is a constant $C_4>0$  such that
\begin{align}
\sup_{s\in[0,1]}\lV
\tau_{\Phi(s),\Lambda_n}^{-u}(B)-\tau_{\Phi(s)}^{-u}(B)
\rV
\le C_4 \lv \Lambda_M\rv|u|
e^{|u|v-(n-M)}\lV B\rV,\quad n\ge M,\quad u\in\bbR,\quad B\in\caA_{\Lambda_M}.
\end{align}

\end{lem}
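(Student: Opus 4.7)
The plan is a standard Duhamel interpolation argument that converts the difference of the two dynamics into an integral of a commutator with the ``missing'' interaction terms, and then uses the Lieb--Robinson bound (\ref{lr}) together with the decay of $F_1$ and the finite range of $\Phi$ to extract the exponential factor $e^{-(n-M)}$.

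First, for fixed $s\in[0,1]$, $u\in\bbR$ and $B\in\caA_{\Lambda_M}$, I would write
\begin{equation*}
\tau^{-u}_{\Phi(s),\Lambda_n}(B)-\tau^{-u}_{\Phi(s)}(B)
= \int_0^{-u} dr\, \frac{d}{dr}\lmk \tau^{-u-r}_{\Phi(s),\Lambda_n}\circ\tau^{r}_{\Phi(s)}(B)\rmk
= \int_0^{-u}dr\,\tau^{-u-r}_{\Phi(s),\Lambda_n}\bigl[(\delta_{\Phi(s)}-\delta_{\Phi(s),\Lambda_n})\tau^{r}_{\Phi(s)}(B)\bigr],
\end{equation*}
and use that $\tau^{-u-r}_{\Phi(s),\Lambda_n}$ is an isometry. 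Formally $(\delta_{\Phi(s)}-\delta_{\Phi(s),\Lambda_n})(A)=i\sum_{X:X\not\subset\Lambda_n,\,\diam X\le R}[\Phi(X;s),A]$, and (\ref{lr}) gives for $A=\tau^{r}_{\Phi(s)}(B)$ the term-wise bound
\begin{equation*}
\lV[\Phi(X;s),\tau^{r}_{\Phi(s)}(B)]\rV \le C_1'\,e^{v|r|}\sum_{x\in X,\,y\in\Lambda_M}F_1(d(x,y))\,\lV\Phi(X;s)\rV\,\lV B\rV .
\end{equation*}

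Next, I would sum over those $X\not\subset\Lambda_n$ with $\diam X\le R$. Because each such $X$ contains a point $z\in\Lambda_n^c$ and $X$ has $\ell^\infty$-diameter at most $R$, assumption (iii) of Assumption \ref{assump} and an over-counting of $X\ni z$ by at most a constant depending on $R,\nu$ reduce the total to $C(R,\nu)\lV B\rV\sum_{z\in\Lambda_n^c,\,y\in\Lambda_M}F_1(\max\{0,d(z,y)-R\})$. For $z\in\Lambda_n^c$ and $y\in\Lambda_M$ one has $d(z,y)\ge n-M+1$, so summing in shells $|\{z:d(z,y)=k\}|\lesssim k^{\nu-1}$ against $F_1(k-R)=(1+k-R)^{-(\nu+1)}e^{-(k-R)}$ gives $\sum_{z\in\Lambda_n^c}F_1(d(z,y)-R)\le C\,e^{-(n-M)}$, after which summing over $y\in\Lambda_M$ produces the factor $|\Lambda_M|$. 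Combining with the Lieb--Robinson prefactor and integrating $\int_0^{|u|}e^{vr}\,dr\le|u|\,e^{v|u|}$ yields the desired estimate with a single uniform constant $C_4$.

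The only delicate issue is the formal use of the sum $\sum_X[\Phi(X;s),\cdot]$ applied to the non-local element $\tau^{r}_{\Phi(s)}(B)$; I would handle this by first replacing $\tau^{r}_{\Phi(s)}(B)$ with $\tau^{r}_{\Phi(s),\Lambda_{n'}}(B)$ for $n'\gg n$, where the generators are genuinely bounded derivations and the Duhamel identity holds term-wise, and then passing to the limit $n'\to\infty$: the bound above is uniform in $n'$ by (\ref{lr}), and the strong convergence $\tau^{r}_{\Phi(s),\Lambda_{n'}}(B)\to\tau^{r}_{\Phi(s)}(B)$ transfers the estimate to the infinite-volume dynamics. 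The uniformity in $s\in[0,1]$ is automatic since (iii) of Assumption \ref{assump} and the Lieb--Robinson constants $C_1',v$ are all uniform.
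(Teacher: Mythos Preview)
Your argument is correct: this is exactly the standard Duhamel interpolation combined with the Lieb--Robinson bound, and the counting against the decay of $F_1$ gives the advertised factor $|\Lambda_M|\,e^{-(n-M)}$ (there is a harmless sign slip in your first display --- the integral equals $G(-u)-G(0)$, not $G(0)-G(-u)$ --- but this is immaterial for the norm estimate). The paper itself gives no proof of this lemma at all; it simply quotes Corollary~3.6~(3.80) of \cite{NSY}, and what you have written is essentially the argument behind that reference.
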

It is standard to derive the  following from Lemma \ref{locl} (cf. \cite{BR1}).
\begin{lem}\label{ao}
For any $A\in\caA$,
\begin{align}
\sup_{s\in[0,1]}\lV
\tau_{\Phi(s),\Lambda_n}^{-u}(A)
-\tau_{\Phi(s)}^{-u}(A)
\rV
\to 0,
\end{align}
uniformly in compact $u\in \bbR$.
In particular, for each $A\in\caA$,
$\bbR\times [0,1]\ni (u,s)\to \tau_{\Phi(s)}^{-u}(A)\in \caA$ is 
continuous with respect to the norm $\lV \cdot\rV$.
\end{lem}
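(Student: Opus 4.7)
The plan is to prove both assertions by an $\varepsilon/3$ approximation argument reducing to Lemma \ref{locl}, which already handles strictly local observables.

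For the first assertion, I first note that Lemma \ref{locl} gives the conclusion for any $B\in\caA_{\Lambda_M}$: for fixed $M$, the bound $C_4|\Lambda_M||u|e^{|u|v-(n-M)}\lV B\rV$ tends to $0$ as $n\to\infty$, uniformly in $s\in[0,1]$ and uniformly in $u$ lying in any compact subset of $\bbR$. Given a general $A\in\caA$ and $\varepsilon>0$, I would choose $M\in\nan$ and $A_M\in\caA_{\Lambda_M}$ with $\lV A-A_M\rV<\varepsilon$, using density of $\caA_{\rm loc}$ in $\caA$. Since $\tau_{\Phi(s),\Lambda_n}^{-u}$ and $\tau_{\Phi(s)}^{-u}$ are both isometric, the triangle inequality yields
\begin{align*}
\lV \tau_{\Phi(s),\Lambda_n}^{-u}(A)-\tau_{\Phi(s)}^{-u}(A)\rV
\le 2\varepsilon+\lV \tau_{\Phi(s),\Lambda_n}^{-u}(A_M)-\tau_{\Phi(s)}^{-u}(A_M)\rV,
\end{align*}
and the remaining term is handled by Lemma \ref{locl}, uniformly in $s$ and in $u$ on any fixed compact interval.

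For the continuity assertion, fix $(u_0,s_0)\in\bbR\times[0,1]$ and $A\in\caA$. Pick a compact neighborhood $K$ of $u_0$ in $\bbR$. I would insert a finite volume approximation:
\begin{align*}
\lV \tau_{\Phi(s)}^{-u}(A)-\tau_{\Phi(s_0)}^{-u_0}(A)\rV
\le \lV \tau_{\Phi(s)}^{-u}(A)-\tau_{\Phi(s),\Lambda_n}^{-u}(A)\rV
+\lV \tau_{\Phi(s),\Lambda_n}^{-u}(A)-\tau_{\Phi(s_0),\Lambda_n}^{-u_0}(A)\rV
+\lV \tau_{\Phi(s_0),\Lambda_n}^{-u_0}(A)-\tau_{\Phi(s_0)}^{-u_0}(A)\rV.
\end{align*}
By the first part of the Lemma (already proved), I can choose $n$ so large that the first and third summands are smaller than $\varepsilon$ for all $(u,s)\in K\times[0,1]$. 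It then suffices to show joint continuity in $(u,s)$ of $\tau_{\Phi(s),\Lambda_n}^{-u}(A)$ on $K\times[0,1]$. But in finite volume, $(H_{\Phi(s)})_{\Lambda_n}=\sum_{X\subset\Lambda_n}\Phi(X;s)$ is a norm-continuous function of $s$ by Assumption \ref{assump} (i), hence so is $e^{-iu(H_{\Phi(s)})_{\Lambda_n}}$ jointly in $(u,s)$ by standard functional calculus; conjugation by this unitary therefore depends continuously on $(u,s)$, uniformly on $K\times[0,1]$.

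The main obstacle, which is mild, is simply keeping track that the finite volume approximation in the first step is uniform in $s\in[0,1]$ and in $u\in K$ simultaneously; this is immediate from the explicit form of the bound in Lemma \ref{locl}. Once this is secured, both statements follow.
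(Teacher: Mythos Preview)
Your proposal is correct and is precisely the standard $\varepsilon/3$ density argument the paper has in mind when it says the lemma follows from Lemma \ref{locl} (cf.\ \cite{BR1}); the paper does not spell out the proof at all, so your write-up simply fills in the omitted details in the expected way.
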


\begin{lem}\label{44}
Suppose $f_1,f_2: (0, \infty) \to (0,\infty)$ are continuous decreasing functions with $\lim_{t\to \infty} f_i(t) =0$, for $i=1,2$. Suppose that we have
	\begin{equation}\label{eq:f-decay}
		\begin{split}
\lim_{N\to\infty}\lmk
 \frac{|\Lambda_{\lcm \frac N2\rcm}|  e^{-(N-\lcm \frac N2\rcm)}}{f_2(N)}\rmk=0.
		\end{split}
	\end{equation}
and
\begin{align}
\lim_{N\to\infty}\frac{f_1\lmk \lcm \frac N2\rcm\rmk}{f_2(N)}=0.
\end{align}
Then
\begin{align}
\sup_{s\in[0,1]}\lV
\tau_{\Phi(s),\Lambda_n}^{-u}(A)
-\tau_{\Phi(s)}^{-u}(A)
\rV_{f_2}
\to 0,\quad A\in \caD_{f_1},
\end{align}
uniformly in compact $u\in \bbR$.
In particular, for each $A\in\caD_{f_1}$,
$\bbR\times [0,1]\ni (u,s)\to \tau_{\Phi(s)}^{-u}(A)\in \caD_{f_2}$ is 
continuous with respect to the norm $\lV \cdot\rV_{f_2}$.
\end{lem}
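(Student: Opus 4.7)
The plan is to bound $X_{n,s,u}:=\tau^{-u}_{\Phi(s),\Lambda_n}(A)-\tau^{-u}_{\Phi(s)}(A)$ in $\lV\cdot\rV_{f_2}=\lV\cdot\rV+\sup_N\lV\,\cdot\,-\bbE_N(\cdot)\rV/f_2(N)$. The operator-norm part $\sup_s\lV X_{n,s,u}\rV$ is already controlled by Lemma~\ref{ao}, which gives uniform convergence on compact $u$. For the weighted supremum, my main device is a dyadic split $A=\bbE_M(A)+(A-\bbE_M(A))$ with $M:=\lcm\frac{N}{2}\rcm$: this places $\bbE_M(A)$ into $\caA_{\Lambda_M}$ so that Lemma~\ref{41lem} applies to both $\tau^{-u}_{\Phi(s),\Lambda_n}(\bbE_M A)$ and $\tau^{-u}_{\Phi(s)}(\bbE_M A)$, while the tail obeys $\lV A-\bbE_M A\rV\le\lV A\rV_{f_1}f_1(M)$ thanks to $A\in\caD_{f_1}$.

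Applying Lemma~\ref{41lem} to each of the two ``local'' contributions, and using the triangle inequality and norm-contractivity of $\tau$ and $\bbE_N$ on the tail, gives for every $N$
\[
\lV X_{n,s,u}-\bbE_N(X_{n,s,u})\rV \;\le\; 2C_{1}|\Lambda_M|e^{v|u|-(N-M)}\lV A\rV + 4\lV A\rV_{f_1}f_1(M),
\]
with the right-hand side independent of $n$ and $s$. Setting $M=\lcm\frac{N}{2}\rcm$, dividing by $f_2(N)$, and invoking the two hypotheses of the lemma (together with boundedness of $e^{v|u|}$ on compact $u$-intervals), I get that both quotients tend to $0$ as $N\to\infty$, uniformly in $n$, $s\in[0,1]$, and compact $u$. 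Thus for any $\varepsilon>0$ there is $N_0$ such that the supremum over $N>N_0$ is $<\varepsilon$ uniformly. For $N\le N_0$, monotonicity of $f_2$ bounds each quotient by $2\lV X_{n,s,u}\rV/f_2(N_0)$, which tends to $0$ as $n\to\infty$ by Lemma~\ref{ao}. Combining the two ranges yields $\sup_s\lV X_{n,s,u}\rV_{f_2}\to 0$ uniformly on compact $u$.

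For the ``in particular'' continuity statement, I would first verify that each finite-volume approximation $(u,s)\mapsto\tau^{-u}_{\Phi(s),\Lambda_n}(A)\in\caD_{f_2}$ is continuous in $\lV\cdot\rV_{f_2}$: the norm continuity of the finite-volume dynamics controls the $\lV\cdot\rV$ summand and, after division by $f_2(N_0)$, the part of the supremum with $N\le N_0$, while the same dyadic split (applied to the difference $\tau^{-u_1}_{\Phi(s_1),\Lambda_n}(A)-\tau^{-u_2}_{\Phi(s_2),\Lambda_n}(A)$) handles $N>N_0$ uniformly over compact neighbourhoods in $(u,s)$. The thermodynamic limit $(u,s)\mapsto\tau^{-u}_{\Phi(s)}(A)$ is then continuous in $\lV\cdot\rV_{f_2}$ as a uniform limit on compacta of continuous maps.

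The main obstacle is that the $\sup_N$ in $\lV\cdot\rV_{f_2}$ carries no automatic decay in $N$, so Lemma~\ref{ao} alone cannot control the large-$N$ tail. The essential idea is the dyadic choice $M=\lcm\frac{N}{2}\rcm$, which trades the localization factor $|\Lambda_M|e^{-(N-M)}$ from Lemma~\ref{41lem} against the sub-exponential decay $f_1(M)$ of $A$; the two hypotheses of Lemma~\ref{44} are calibrated precisely to make this balance produce the required $N$-uniform smallness.
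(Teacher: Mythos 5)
Your proof is correct and follows essentially the paper's route: in both arguments the heart is the cut $A=\bbE_{\lcm \frac N2\rcm}(A)+(A-\bbE_{\lcm \frac N2\rcm}(A))$, Lemma \ref{41lem} applied to the localized piece (uniformly in $n$ and $s$, since the Lieb-Robinson constants are $s$-uniform and the lemma holds for every finite $\Lambda$), and the $f_1$-decay of the tail, calibrated exactly against the two hypotheses on $f_1,f_2$. The only deviation is the small-$N$ regime: the paper splits at $N\le\lcm \frac n2\rcm$ and there invokes the quantitative finite-versus-infinite-volume bound of Lemma \ref{locl}, whereas you fix a threshold $N_0=N_0(\varepsilon)$ and absorb $N\le N_0$ via Lemma \ref{ao} together with monotonicity of $f_2$ — an equally valid, slightly more qualitative variant (the paper's choice yields an explicit rate in $n$).
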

\begin{proof}
Let $A\in\caD_{f_1}$.
From Lemma \ref{ao}, we have
\begin{align}
\sup_{s\in[0,1]}\lV
\tau_{\Phi(s),\Lambda_n}^{-u}(A)
-\tau_{\Phi(s)}^{-u}(A)
\rV
\to 0.
\end{align}
Applying Lemma \ref{locl},
 for $N\le [\frac{n}{2}]$, we have
\begin{align}
	\begin{split}
&\lV
\tau_{\Phi(s),\Lambda_n}^{-u}(A)-\tau_{\Phi(s)}^{-u}(A)
-\bbE_N\lmk
\tau_{\Phi(s),\Lambda_n}^{-u}(A)-\tau_{\Phi(s)}^{-u}(A)
\rmk
\rV\\
&\le
\lV
\tau_{\Phi(s),\Lambda_n}^{-u}\lmk \bbE_{[\frac n2]}\lmk A\rmk\rmk
-\tau_{\Phi(s)}^{-u}\lmk \bbE_{[\frac n2]}\lmk A\rmk\rmk
-\bbE_N\lmk
\tau_{\Phi(s),\Lambda_n}^{-u}\lmk
\bbE_{[\frac n2]}\lmk A\rmk\rmk-\tau_{\Phi(s)}^{-u}\lmk
\bbE_{[\frac n2]}\lmk A\rmk\rmk
\rmk
\rV\\
&+4\lV \bbE_{[\frac n2]}\lmk A\rmk-A\rV\\
&\le
2C_4 \lv \Lambda_{[\frac n2]}\rv|u|
e^{|u|v-(n-[\frac n2])}\lV A\rV+4 f_1\lmk \lcm\frac n2\rcm\rmk\lV A\rV_{f_1}
	\end{split}
\end{align}
On the other hand,
from Lemma \ref{41lem}
$N\ge \lcm\frac n2\rcm$,
\begin{align}
	\begin{split}
&\lV
\tau_{\Phi(s),\Lambda_n}^{-u}\lmk \bbE_{[\frac N2]}\lmk A\rmk\rmk
-\bbE_N\lmk
\tau_{\Phi(s),\Lambda_n}^{-u}\lmk
\bbE_{[\frac N2]}\lmk A\rmk\rmk\rmk
\rV
\le  C_1 \norm{A} |\Lambda_{\lcm \frac N2\rcm}|  e^{v|u|-(N-\lcm \frac N2\rcm)},\\
&\lV
\tau_{\Phi(s)}^{-u}\lmk \bbE_{[\frac N2]}\lmk A\rmk\rmk
-\bbE_N\lmk
\tau_{\Phi(s)}^{-u}\lmk
\bbE_{[\frac N2]}\lmk A\rmk\rmk\rmk
\rV
\le  C_1 \norm{A} |\Lambda_{\lcm \frac N2\rcm}|  e^{v|u|-(N-\lcm \frac N2\rcm)}.
	\end{split}
\end{align}
Therefore, for $N\ge \lcm\frac n2\rcm$, we have
\begin{align}
	\begin{split}
&\lV
\tau_{\Phi(s),\Lambda_n}^{-u}(A)-\tau_{\Phi(s)}^{-u}(A)
-\bbE_N\lmk
\tau_{\Phi(s),\Lambda_n}^{-u}(A)-\tau_{\Phi(s)}^{-u}(A)
\rmk
\rV\\
&\le\lV
\tau_{\Phi(s),\Lambda_n}^{-u}\lmk \bbE_{[\frac N2]}\lmk A\rmk\rmk
-\tau_{\Phi(s)}^{-u}\lmk \bbE_{[\frac N2]}\lmk A\rmk\rmk
-\bbE_N\lmk
\tau_{\Phi(s),\Lambda_n}^{-u}\lmk
\bbE_{[\frac N2]}\lmk A\rmk\rmk-\tau_{\Phi(s)}^{-u}\lmk
\bbE_{[\frac N2]}\lmk A\rmk\rmk
\rmk
\rV\\
&+4\lV \bbE_{[\frac N2]}\lmk A\rmk-A\rV\\
&\le
2C_1 \norm{A} |\Lambda_{\lcm \frac N2\rcm}|  e^{v|u|-(N-\lcm \frac N2\rcm)}
+4f_1\lmk\lcm\frac N2\rcm\rmk\lV A\rV_{f_1}
	\end{split}
\end{align}
Hence we obtain
\begin{align}
\sup_{s\in[0,1]}\lV
\tau_{\Phi(s),\Lambda_n}^{-u}(A)
-\tau_{\Phi(s)}^{-u}(A)
\rV_{f_2}
\le
&\max\left\{
\begin{gathered}
2C_4|u|
e^{|u|v}\lV A\rV \frac{\lv \Lambda_{[\frac n2]}\rv e^{-(n-[\frac n2])}}{f_2\lmk\lcm \frac n2\rcm\rmk}
+4 \frac{f_1\lmk [\frac n2]\rmk}{f_2\lmk\lcm \frac n2\rcm\rmk}\lV A\rV_{f_1},
\\
2C_1 \norm{A}\sup_{N\ge \lcm\frac n2\rcm}\lmk
 \frac{|\Lambda_{\lcm \frac N2\rcm}|  e^{v|u|-(N-\lcm \frac N2\rcm)}}{f_2(N)}\rmk
+\sup_{N\ge \lcm\frac n2\rcm}\lmk\frac{4f_1\lmk\lcm\frac N2\rcm \rmk\lV A\rV_{f_1}}{f_2(N)}\rmk
\end{gathered}
\right\}\nonumber\\
&+\sup_{s\in[0,1]}\lV
\tau_{\Phi(s),\Lambda_n}^{-u}(A)
-\tau_{\Phi(s)}^{-u}(A)
\rV,
\end{align}
and 
$\sup_{s\in[0,1]}\lV
\tau_{\Phi(s),\Lambda_n}^{-u}(A)
-\tau_{\Phi(s)}^{-u}(A)
\rV_{f_2}$ converges to $0$ as $n\to\infty$, uniformly in compact $u$.

\end{proof}
\begin{lem}\label{48}
Let $f,f_1:(0,\infty)\to (0,\infty)$
be continuous decreasing functions 
with $\lim_{t\to\infty}f(t)=0$.
Suppose that
\begin{align}\label{eq:weight-req}
	\begin{split}
&\int_{4v|t|\ge 1} dt \omega_\gamma(t) \frac{2|t|}{f_1(4v|t|)}<\infty,  \\
&\sup_{N\in\nan}\lmk\frac{f(N-{\lcm \frac N2\rcm})}{f_1(N)}\rmk<\infty,\\
 &\sup_{N\in\nan}\lmk
\frac{|\Lambda_N| e^{- \frac{
{\lcm \frac N2\rcm}}2}}{f_1(N)}\rmk<\infty.
	\end{split}
\end{align}

Then $\tau_{\Phi(s)}^t\lmk\caD_f\rmk\subset\caD_{f_1}$ and
there is a non-negative non-decreasing function on $\bbR_{\ge 0}$, $b_{f,f_1}(t)$ such that
\begin{align}\label{b1}
\int\; dt\; \omega_\gamma(t)\; |t|\cdot b_{f,f_1}(|t|)<\infty.
\end{align}
\begin{align}\label{b1t}
&
\sup_{n\in\nan}\sup_{s\in[0,1]}\lV
\tau_{\Phi_{n}(s)}^t\lmk
A
\rmk
\rV_{f_1}
,
\sup_{s\in[0,1]}\lV
\tau_{\Phi(s)}^t\lmk
A
\rmk
\rV_{f_1}
\le
b_{f,f_1}(|t|)\lV A\rV_f,\quad A\in\caD_f.
\end{align}
\end{lem}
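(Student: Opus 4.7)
The plan is to bound $\lV\tau_{\Phi(s)}^t(A)-\bbE_N(\tau_{\Phi(s)}^t(A))\rV$ for each $N\in\nan$ by splitting $A$ into a local truncation $\bbE_M(A)$ and its tail, then treating the two parts with the Lieb-Robinson bound and the $f$-decay respectively. Fix $A\in\caD_f$, $s\in[0,1]$, $t\in\bbR$, and set $M:=N-\lcm N/2\rcm$, so that $N-M=\lcm N/2\rcm$. Writing
\begin{align*}
\tau_{\Phi(s)}^t(A)-\bbE_N\tau_{\Phi(s)}^t(A)=\lmk\idd-\bbE_N\rmk\tau_{\Phi(s)}^t\lmk\bbE_M(A)\rmk+\lmk\idd-\bbE_N\rmk\tau_{\Phi(s)}^t\lmk A-\bbE_M(A)\rmk,
\end{align*}
I would bound the first piece by $C_1\lv\Lambda_M\rv e^{v|t|-\lcm N/2\rcm}\lV A\rV$ via Lemma \ref{41lem} applied to the local observable $\bbE_M(A)\in\caA_{\Lambda_M}$, and the second piece by $2\lV A-\bbE_M(A)\rV\le 2f(M)\lV A\rV_f$ directly from the definition of $\lV\cdot\rV_f$. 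The same estimates hold for $\tau_{\Phi_n(s)}^t=\tau_{\Phi(s),\Lambda_n}^t$ uniformly in $n$, since Lemma \ref{41lem} covers both dynamics.

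Next, I split into two regimes in $N$ versus $|t|$ so that each of the three hypotheses of the lemma handles exactly one error term. In the short-time regime $v|t|\le\lcm N/2\rcm/2$, the Lieb-Robinson exponent is at most $-\lcm N/2\rcm/2$, so dividing the local-part bound by $f_1(N)$ gives a quantity controlled by the third hypothesis $\sup_N\lv\Lambda_N\rv e^{-\lcm N/2\rcm/2}/f_1(N)<\infty$, while dividing the tail bound by $f_1(N)$ produces $2f(N-\lcm N/2\rcm)/f_1(N)\,\lV A\rV_f$, controlled by the second hypothesis. In the complementary long-time regime $v|t|>\lcm N/2\rcm/2$, so that $N$ is essentially at most $4v|t|$, I abandon the splitting and use the trivial bound $\lV(\idd-\bbE_N)\tau_{\Phi(s)}^t(A)\rV\le 2\lV A\rV\le 2\lV A\rV_f$; since $f_1$ is decreasing, the ratio divided by $f_1(N)$ is at most a constant multiple of $\lV A\rV_f/f_1(4v|t|)$.

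Taking the supremum over $N$ of all these bounds suggests setting
\begin{align*}
b_{f,f_1}(r):=C+\frac{C'}{f_1(4vr)},\quad r\ge 0,
\end{align*}
for constants $C,C'$ assembled from $C_1$ and the suprema in the three hypotheses, together with $\lV\tau_{\Phi(s)}^t(A)\rV=\lV A\rV\le\lV A\rV_f$ (which is free since $\tau_{\Phi(s)}^t$ is a $*$-automorphism). This $b_{f,f_1}$ is non-negative and non-decreasing because $f_1$ is decreasing. The required integrability (\ref{b1}) then decomposes into $\int|t|\omega_\gamma(t)\,dt<\infty$, which follows from the rapid decay (\ref{omegabound}) of $\omega_\gamma$, plus the first hypothesis of the lemma, which is designed precisely to absorb the $1/f_1(4v|t|)$ contribution.

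The main obstacle is not analytical but book-keeping: one must verify that the threshold $v|t|=\lcm N/2\rcm/2$ cleanly partitions the $(N,t)$-space so that each of the three stated hypotheses controls exactly one of the three resulting error contributions (local Lieb-Robinson leakage, $f$-tail, long-time overflow), and that the chosen $b_{f,f_1}$ is genuinely non-decreasing. No new input beyond Lemma \ref{41lem} and the automorphism property of $\tau_{\Phi(s)}^t$ should be needed.
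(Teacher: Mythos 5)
Your proposal is essentially the paper's own proof: the same splitting $A=\bbE_{M}(A)+(A-\bbE_{M}(A))$ with $M=N-\lcm \frac N2\rcm$, Lemma \ref{41lem} for the localized part, the $f$-tail estimate $2f(M)\lV A\rV_f$ for the remainder, the trivial bound $2\lV A\rV$ in the long-time regime, and $b_{f,f_1}$ assembled from the three suprema together with a term proportional to $1/f_1(4v|t|)$, whose integrability against $|t|\omega_\gamma(t)$ is exactly hypothesis (\ref{eq:weight-req}); the finite-volume case is identical via Lemma \ref{41lem}. The one piece of bookkeeping that does not close as you stated it is the threshold: from $v|t|>\lcm \frac N2\rcm/2$ you only get $N<4v|t|+1$, so $N$ may exceed $4v|t|$ and the asserted bound by a constant multiple of $1/f_1(4v|t|)$ is not justified for a general decreasing $f_1$ (it would require $\sup_x f_1(x)/f_1(x+1)<\infty$). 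The paper avoids this by partitioning at $4v|t|\le N$ versus $4v|t|>N$: in the long-time case $f_1(N)\ge f_1(4v|t|)$ holds exactly (whence also the indicator $\unit_{4v|t|\ge 1}$, since $N\ge 1$), while in the short-time case the Lieb--Robinson exponent is still $v|t|-\lcm\frac N2\rcm\le -\lcm \frac N2\rcm/2+\frac12$, costing only a harmless factor $e^{1/2}$ absorbed by the third hypothesis; with that adjustment your argument coincides with the paper's.
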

\begin{proof}
Let $A\in\caD_f$.
We have to estimate 
\begin{align}\label{tn}
\frac{\lV \tau^t _{\Phi(s)}(A)-\bbE_N\lmk \tau^t _{\Phi(s)}(A)\rmk\rV}{f_1(N)},\quad N\in\nan.
\end{align}
From Lemma \ref{41lem} for $A\in\caD_f$, $N,k\in\nan$ with $k<N$, we obtain
\begin{align}
	\begin{split}
&\lV
\tau^t _{\Phi(s)}(A)-\bbE_N\lmk \tau^t _{\Phi(s)}(A)\rmk
\rV
\le 
\lV
\tau^t _{\Phi(s)}\lmk\bbE_k(A)\rmk-\bbE_N\lmk \tau^t _{\Phi(s)}\lmk\bbE_k(A)\rmk\rmk
\rV
+2\lV A-\lmk\bbE_k(A)\rmk\rV\\
&\le
2 \lV A\rV_f f(k)+C_1 \norm{A} |\Lambda_k| e^{v|t|- (N-k)}.
	\end{split}
\end{align}
For $N\in\nan$ with $4 v\lv t\rv \le N$, we use this bound with
$k:=N-{\lcm \frac N2\rcm}$
to estimate (\ref{tn}).
Then we have
\begin{align}
	\begin{split}
\lV
\tau^t _{\Phi(s)}(A)-\bbE_N\lmk \tau^t _{\Phi(s)}(A)\rmk
\rV
\le
2 \lV A\rV_f \lmk f(N-{\lcm \frac N2\rcm})\rmk+ C_1\norm{A}|\Lambda_N| e^{v|t|- \lcm
{\frac N2}\rcm}\\
\le
2 \lV A\rV_f \lmk f(N-{\lcm \frac N2\rcm})\rmk+ C_1 \norm{A} |\Lambda_N| e^{- \frac{
{\lcm \frac N2\rcm}}2+\frac 12}.
	\end{split}
\end{align}
On the other hand, for $N\in\nan$ with $ 4 v\lv t\rv> N$, we simply have
\begin{align}
\lV
\tau^t_{\Phi(s)}(A)-\bbE_N\lmk \tau^t _{\Phi(s)}(A)\rmk
\rV\le 2\lV A\rV.
\end{align}
Hence we obtain
\begin{align}\label{eq:b1function}
\lV \tau^t _{\Phi(s)}(A)\rV_{f_1}
\le 
\lmk 1+
\max\left\{
\begin{gathered}
2
\sup_{N\in\nan}\lmk\frac{\lmk f(N-{\lcm \frac N2\rcm})\rmk}{f_1(N)}\rmk
+ C_1 \sup_{N\in\nan}\lmk
\frac{|\Lambda_N| e^{- \frac{
{\lcm \frac N2\rcm}}2+\frac 12}}{f_1(N)}\rmk,\\
\frac{2}{f_1\lmk \lmk 4 v\lv t\rv\rmk\rmk}\unit_{ 4v\lv t\rv\ge 1}
\end{gathered}
\right\}\rmk \lV A\rV_f
=:b_{f,f_1}(t) \lV A\rV_f,
\end{align}
for $A\in\caD_f$ and $t\in\bbR$, $s\in[0,1]$.
Here $\unit_{ 4v\lv t\rv\ge 1}$ is the characteristic function for
$\{t\in \bbR\mid 4v\lv t\rv\ge 1\}$.
From the assumptions and (\ref{omeganom}), $b_{f,f_1}(t)$ satisfies the required condition.
The inequality for $\tau^t _{\Phi_{n}(s)}(A)$ can be proven in the same way.
\end{proof}
\begin{lem} \label{410}
Let $f,f_1:(0,\infty)\to (0,\infty)$
be continuous decreasing functions 
with $\lim_{t\to\infty}f(t)=\lim_{t\to\infty} f_1(t)=0$.
Suppose that
\begin{align}
	\begin{split}
&\sup_{N\in\nan}\frac{f(N-\lcm { \frac N2}\rcm)}{f_1(N)}<\infty,\\
&\sup_{N\in\nan}\frac{\lv\Lambda_N\rv e^{-\frac{\lcm { \frac N2}\rcm}2}}{f_1(N)}<\infty,\\
&\sup_{N\in\nan}\frac{W_\gamma\lmk \frac{\lcm { \frac N2}\rcm}{2v}\rmk}{f_1(N)}<\infty.
	\end{split}
\end{align}
(Recall (\ref{Wint}).) For $s\in[0,1]$ and $A\in\caA$, we set
\begin{align}
I_s(A):=
\int dt \;\omega_\gamma(t) \tau_{\Phi(s)}^t(A) .
\end{align}
The integral can be understood as a Bochner integral of $(\caA, \lV\cdot\rV)$.
Then for any $A \in \caD_f$ and $s\in[0,1]$, we have $I_s(A) \in \caD_{f_1}$. 
\end{lem}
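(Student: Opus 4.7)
The plan is to estimate $\lV I_s(A) - \bbE_N(I_s(A))\rV / f_1(N)$ directly, by first interchanging $\bbE_N$ with the Bochner integral and then splitting the time integration into a near regime (where Lieb--Robinson localization is effective) and a far regime (where only the tail $W_\gamma$ matters). Using linearity of $\bbE_N$ and its continuity, write
\begin{align*}
I_s(A)-\bbE_N(I_s(A))=\int dt\,\omega_\gamma(t)\lmk \tau_{\Phi(s)}^t(A)-\bbE_N\lmk\tau_{\Phi(s)}^t(A)\rmk\rmk,
\end{align*}
so that $\lV I_s(A)-\bbE_N(I_s(A))\rV \le \int dt\,\omega_\gamma(t)\,\lV \tau_{\Phi(s)}^t(A)-\bbE_N(\tau_{\Phi(s)}^t(A))\rV$.

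Next, I split the $t$-integral at $|t|=\lcm \frac{N}{2}\rcm/(2v)$. For the near regime $2v|t|\le \lcm \frac{N}{2}\rcm$, I apply exactly the same Lieb--Robinson plus truncation argument used in Lemma \ref{48}: approximate $A$ by $\bbE_k(A)$ with the choice $k:=N-\lcm\frac N2\rcm$, use the $\caD_f$-bound $\lV A-\bbE_k(A)\rV\le f(k)\lV A\rV_f$ on the truncation error, and use Lemma \ref{41lem} on the localized piece. This yields
\begin{align*}
\lV\tau_{\Phi(s)}^t(A)-\bbE_N(\tau_{\Phi(s)}^t(A))\rV
\le 2\lV A\rV_f\,f\!\lmk N-\lcm\tfrac N2\rcm\rmk+C_1\lV A\rV\,|\Lambda_N|\,e^{v|t|-\lcm\frac N2\rcm}
\le 2\lV A\rV_f\,f\!\lmk N-\lcm\tfrac N2\rcm\rmk+C_1\lV A\rV\,|\Lambda_N|\,e^{-\lcm\frac N2\rcm/2},
\end{align*}
where in the last step I used $v|t|\le \lcm\frac N2\rcm/2$. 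For the far regime $2v|t|>\lcm\frac N2\rcm$, I use the trivial bound $2\lV A\rV$, and integrating against $\omega_\gamma$ on this set produces a factor $2W_\gamma(\lcm\frac N2\rcm/(2v))$ thanks to (\ref{Wint}) and the evenness of $\omega_\gamma$.

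Combining the two regimes, and using (\ref{omeganom}) to absorb the $\omega_\gamma$-mass of the near regime, I arrive at
\begin{align*}
\frac{\lV I_s(A)-\bbE_N(I_s(A))\rV}{f_1(N)}
\le \frac{2\lV A\rV_f\,f(N-\lcm\frac N2\rcm)}{f_1(N)}
+\frac{C_1\lV A\rV\,|\Lambda_N|\,e^{-\lcm\frac N2\rcm/2}}{f_1(N)}
+\frac{4\lV A\rV\,W_\gamma\!\lmk \lcm\frac N2\rcm/(2v)\rmk}{f_1(N)}.
\end{align*}
Each of these three terms is bounded uniformly in $N$ by exactly one of the three hypotheses in the statement. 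Taking the supremum over $N$ and adding $\lV I_s(A)\rV\le \lV A\rV$ gives $\lV I_s(A)\rV_{f_1}<\infty$, i.e.\ $I_s(A)\in\caD_{f_1}$.

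The only minor subtlety is justifying the interchange of $\bbE_N$ with the Bochner integral, which follows from the boundedness and linearity of $\bbE_N$ on $(\caA,\lV\cdot\rV)$; otherwise the argument is a direct repetition of the splitting strategy already used in Lemma \ref{48}, with the split point chosen to match precisely the three weighted decay conditions imposed on $f,f_1$.
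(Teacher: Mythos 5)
Your argument is correct and is essentially the paper's own proof: the same truncation radius $k=N-\lcm\frac N2\rcm$, the same time split at $|t|=\lcm\frac N2\rcm/(2v)$ using Lemma \ref{41lem} in the near regime and the tail $W_\gamma$ in the far regime, yielding exactly the three terms matched to the three hypotheses (the paper merely pays the truncation error $2\lV A\rV_f f(N-\lcm\frac N2\rcm)$ once outside the $t$-integral rather than pointwise in $t$, which is immaterial since $\int\omega_\gamma=1$). The only item you leave implicit is the lemma's assertion that $I_s(A)$ is a Bochner integral in $(\caA,\lV\cdot\rV)$, which, as in the paper, follows from the norm-continuity of $t\mapsto\tau_{\Phi(s)}^t(A)$ and $\omega_\gamma\in L^1(\bbR)$.
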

\begin{proof}
That the integral can be understood as a Bochner integral of $(\caA, \lV\cdot\rV)$
is from the continuity of 
$\bbR\ni t\to \tau_{\Phi(s)}^t(A) \in\caA$, Lemma \ref{ao} and
$\omega_\gamma\in L^1(\bbR)$.

From (\ref{pen}), we obtain
\begin{equation}\label{tauloc}
		\begin{split}
& \norm{ \tau^t _{\Phi(s)}(\EE_{k}(A))-\bbE_N
\lmk  \tau^t _{\Phi(s)}(\EE_{k}(A))\rmk } 
 \leq C_{1}\lv \Lambda_k\rv e^{v|t|-(N-k)}\lV A\rV,
		\end{split}
	\end{equation}
for any $A\in\caD_f$, $s\in[0,1]$, $t\in\bbR$, $N,k\in\nan$, with $k\le N$.

For any $A\in\caD_f$, $s\in[0,1]$, $N\in\nan$,  we have
\begin{align}
	\begin{split}
&\lV
I_s(A)-\bbE_N\lmk I_s(A)\rmk
\rV\\
&\le
\lV
I_s\lmk \bbE_{N-\lcm { \frac N2}\rcm}(A)\rmk-\bbE_N\lmk I_s\lmk \bbE_{N-\lcm { \frac N2}\rcm}(A)\rmk\rmk
\rV
+2\lV
A-\bbE_{N-[{ \frac N2}]}(A)
\rV\\
&\le
\int_{|t|\le \frac{\lcm { \frac N2}\rcm}{2v}}dt\omega_\gamma(t) 
 \norm{ \tau^t _{\Phi(s)}(\EE_{N-\lcm { \frac N2}\rcm}(A))-\bbE_N
\lmk  \tau^t _{\Phi(s)}(\EE_{N-\lcm { \frac N2}\rcm}(A))\rmk } \\
&+\int_{|t|\ge \frac{\lcm { \frac N2}\rcm}{2v}}dt\omega_\gamma(t) 
\norm{ \tau^t _{\Phi(s)}(\EE_{N-\lcm { \frac N2}\rcm}(A))-\bbE_N
\lmk  \tau^t _{\Phi(s)}(\EE_{N-\lcm { \frac N2}\rcm}(A))\rmk } 
+2\lV A\rV_f f(N-\lcm { \frac N2}\rcm)\\
&\le
\int_{|t|\le \frac{\lcm { \frac N2}\rcm}{2v}}dt\omega_\gamma(t) 
C_{1}\lv \Lambda_N\rv e^{v|t|-[{ \frac N2}]}\lV A\rV
+\int_{|t|\ge \frac{\lcm { \frac N2}\rcm}{2v}}dt\omega_\gamma(t) 2\lV A\rV
+2\lV A\rV_f f(N-\lcm { \frac N2}\rcm)\\
&\le
C_{1}\lv \Lambda_N\rv e^{-\frac{[{ \frac N2}]}2}\lV A\rV
+ 4\lV A\rV W_\gamma \lmk \frac{\lcm { \frac N2}\rcm}{2v}\rmk
+2\lV A\rV_f f(N-\lcm { \frac N2}\rcm).
	\end{split}
\end{align}
For the first and the fourth inequality, we used (\ref{omeganom}).
We used (\ref{tauloc}), with $k=N-[\frac N2]$, for the third inequality.

Hence we obtain
\begin{align}
	\begin{split}
&\sup_{N\in\nan}
\frac{\lV
I_s(A)-\bbE_N\lmk I_s(A)\rmk
\rV}{f_1(N)}\\
&\le
 C_1 \norm{A} \sup_{N\in\nan}\frac{ |\Lambda_N| e^{- \frac{\lcm { \frac N2}\rcm}{2}}}{f_1(N)}
+ 4\lV A\rV \sup_{N\in\nan}\frac{W_\gamma\lmk \frac{\lcm { \frac N2}\rcm}{2v}\rmk}{f_1(N)}
+2\lV A\rV_f \sup_{N\in\nan}\frac{f(N-\lcm { \frac N2}\rcm)}{f_1(N)}<\infty,
	\end{split}
\end{align}
for any $A\in\caD_f$ and $s\in[0,1]$.
Hence we obtain $I_s(\caD_f)\subset \caD_{f_1}$, for any $s\in[0,1]$.
\end{proof}

\subsection{Estimates on $\alpha_s$}
In the following, we prove estimates on quasi-locality of the automorphisms $\alpha_s$ and $\alpha_{s,\Lambda}$. To do this, we first recall a theorem from \cite{BMNS} on Lieb-Robinson bounds. 

Define $\tilde{h}(x) =   \frac{x}{\ln^2(x)}$ for $x>1$. Define the weight function as:
	\begin{equation*}
		\begin{split}
h(x) = \bigg{ \{ }\begin{array}{l l} \tilde{h}(e^2) & \text{ if } 0 \leq x \leq e^2 \\ \tilde{h}(x) & \text{ otherwise } \end{array}
		\end{split}.
	\end{equation*}

The Lieb-Robinson bound for the automorphisms $\alpha_s$ is given as follows:
there exists a constant $C_2>0$, $\eta_1>0$, $\tilde a>0$ satisfying the following:
setting $\hat h(x):=\eta_1h(\tilde a x)$,  we have
	\begin{equation}\label{lra}
		\begin{split}
\norm{ [\alpha_s(B),A] }, \norm{ [\alpha_{s,\Lambda_{n}}(B),A] } \leq 
\frac{C_2}2 \norm{A} \norm{B}|X|   e^{-\hat h( d(X,Y))}
		\end{split}
	\end{equation} 
for any $A \in \A_X$, $B \in \A_Y$ with $X,Y\in {\mathfrak S}_{\bbZ^\nu}$, and $s\in[0,1]$. 
See Theorem 4.5 of \cite{BMNS} and Corollary 6.14 of \cite{NSY}.
(Note that in \cite{BMNS}, Assumption 4.3 about a spectral gap is assumed but for the proof of (\ref{lra}), this assumption is not used.)
From Corollary 3.6 (3.80) of \cite{NSY}, there is a constant $C_3>0$ such that
\begin{align}\label{faa}
\sup_{s\in[0,1]}\lV
\alpha_{s,\Lambda_n}^{-1}(A)-\alpha_{s}^{-1}(A)
\rV
\le C_3 \lv \Lambda_M\rv
e^{-\hat h\lmk n-M \rmk}\lV A\rV,\quad n\ge M,\quad M\in\nan,\quad \text{and }
A\in\caA_{\Lambda_M}.
\end{align}
From (\ref{lra}), we obtain the following.
\begin{lem}\label{n43}
For any $M,N\in\nan$ with $M<N$, we have
\begin{equation}\label{ail}
{\norm{ \alpha_s^{-1}(A) - \m{E}_{ N} (\alpha_s^{-1}(A)) }} \leq
C_2\lv
\Lambda_M\rv \lV A\rV
{e^{-\hat h\lmk {N-M}\rmk}},\quad A\in\caA_{\Lambda_M}.
\end{equation}
\end{lem}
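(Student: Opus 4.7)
The plan is to reduce the claim to its finite-volume analogue for $\alpha_{s,\Lambda_n}^{-1}$, invoke the Haar-average representation of the tracial conditional expectation, and then apply the Lieb-Robinson bound (\ref{lra}) together with the automorphism property of $\alpha_{s,\Lambda_n}$. Since by (\ref{faa}) one has $\alpha_{s,\Lambda_n}^{-1}(A)\to \alpha_s^{-1}(A)$ in norm as $n\to\infty$, and $\bbE_N$ is a norm contraction, it suffices to produce the estimate for $\alpha_{s,\Lambda_n}^{-1}(A)$ with a constant uniform in $n$ and then pass to the limit.

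Working inside the finite-dimensional matrix algebra $\caA_{\Lambda_n}$ (for $n>N$), the restriction of $\bbE_N$ is the normalized partial trace over $\caA_{\Lambda_n\setminus\Lambda_N}$, which admits the Haar-integral form
\[
\bbE_N(B)=\int_{\caU(\caA_{\Lambda_n\setminus\Lambda_N})} UBU^*\,d\mu(U),\qquad B\in\caA_{\Lambda_n},
\]
where $\mu$ is Haar probability measure on the unitary group of $\caA_{\Lambda_n\setminus\Lambda_N}$. Applying this to $B=\alpha_{s,\Lambda_n}^{-1}(A)$ and using the elementary inequality $\lV UXU^*-X\rV\le \lV[X,U]\rV$ (since $\lV U\rV=1$) yields
\[
\lV\alpha_{s,\Lambda_n}^{-1}(A)-\bbE_N(\alpha_{s,\Lambda_n}^{-1}(A))\rV\le \sup_{U\in\caU(\caA_{\Lambda_n\setminus\Lambda_N})}\lV[\alpha_{s,\Lambda_n}^{-1}(A),U]\rV.
\]

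Next, because $\alpha_{s,\Lambda_n}^{-1}$ is a $*$-automorphism, the identity $[\alpha_{s,\Lambda_n}^{-1}(A),U]=\alpha_{s,\Lambda_n}^{-1}([A,\alpha_{s,\Lambda_n}(U)])$ shows that this commutator has the same norm as $[\alpha_{s,\Lambda_n}(U),A]$. Applying the Lieb-Robinson bound (\ref{lra}) with $X=\Lambda_M$ and $Y=\Lambda_n\setminus\Lambda_N$, and noting $d(\Lambda_M,\Lambda_n\setminus\Lambda_N)\ge N-M+1$ together with the monotonicity of $\hat h$, gives
\[
\lV[\alpha_{s,\Lambda_n}(U),A]\rV\le \tfrac{C_2}{2}\lV A\rV\,|\Lambda_M|\,e^{-\hat h(N-M)}.
\]
Combining the two displays produces the desired estimate for $\alpha_{s,\Lambda_n}^{-1}(A)$ with a constant independent of $n$, and the limit $n\to\infty$ via (\ref{faa}) yields the statement of the lemma with constant $C_2/2\le C_2$.

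The only point requiring care is the Haar-integral representation of $\bbE_N$ restricted to $\caA_{\Lambda_n}$ and the compatibility of the finite-volume approximation with $\bbE_N$; both are routine, using only that $\bbE_N$ is contractive and that $\caA_{\Lambda_n}=\caA_{\Lambda_N}\otimes\caA_{\Lambda_n\setminus\Lambda_N}$ for $n\ge N$. There is no genuinely new analytical obstacle: the lemma is a direct quasi-locality consequence of (\ref{lra}).
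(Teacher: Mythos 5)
Your proof is correct, and its key analytic input is the same as the paper's: the Lieb--Robinson-type bound (\ref{lra}) for the spectral flow, applied with $X=\Lambda_M$ and $Y$ outside $\Lambda_N$, combined with the automorphism identity $\lV[\alpha^{-1}(A),B]\rV=\lV[A,\alpha(B)]\rV$. Where you diverge is in how the commutator estimate is converted into a bound on $\lV \alpha_s^{-1}(A)-\bbE_N(\alpha_s^{-1}(A))\rV$: the paper tests $\alpha_s^{-1}(A)$ against arbitrary $B\in\caA_{\Lambda_N^c}$ (approximated in norm by local observables) and then cites Corollary 4.4 of \cite{NSY} (Proposition \ref{NSY44}), which converts a uniform commutator bound into a conditional-expectation bound at the cost of a factor $2$, yielding the constant $C_2$. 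You instead pass to finite volumes via (\ref{faa}), where $\alpha_{s,\Lambda_n}^{-1}(A)\in\caA_{\Lambda_n}$ and $\bbE_N$ restricted to $\caA_{\Lambda_n}$ is the normalized partial trace, represent that partial trace as a Haar average of unitary conjugations over $\caU(\caA_{\Lambda_n\setminus\Lambda_N})$, and bound $\lV B-\bbE_N(B)\rV\le\sup_U\lV[B,U]\rV$ directly; this in effect reproves the content of Proposition \ref{NSY44} in the finite-dimensional setting, is self-contained, and even gives the marginally better constant $C_2/2$. Note that the finite-volume detour is genuinely needed for your mechanism (the Haar representation of $\bbE_N$ is not directly available on the infinite quasi-local algebra), and it is harmless because your bound is uniform in $n$ while the error from (\ref{faa}) vanishes as $n\to\infty$; the paper's route avoids the detour by working with $\alpha_s$ directly and absorbing the quasi-local approximation into the commutator estimate. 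Both arguments are valid and of comparable length, so the difference is mainly one of which auxiliary fact is taken off the shelf versus proved on the spot.
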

\begin{proof}
If $A\in\caA_{\Lambda_M}$ and $B \in \A _{ \Lambda_{N}^c}$, 
then $B = \lim_{n\to \infty} B_n $ in norm for a sequence of local observables 
$B_n\in \A _{\Lambda_{N}^c}\cap \Aloc$ and:
	\begin{equation}
		\begin{split}
\norm{[B,\alpha_s^{-1}(A)]}=\norm{ [\alpha_s(B),A] } & \leq \limsup_n\lmk
 2\norm{A} \norm{B-B_n} + \frac{C_2}2 \norm{A} |\Lambda_M| 
  \norm{B_n} e^{-\hat h(N-M)}\rmk\\
& =  \frac{C_2}2 |\Lambda_M| \norm{A}  \norm{B} e^{-\hat h(N-M)}.
		\end{split}
	\end{equation}
And so by Corollary 4.4. of \cite{NSY} (Proposition \ref{NSY44}) we conclude (\ref{ail}).
\end{proof}
From this Lemma we immediately obtain the following:
\begin{lem}\label{42}
Suppose $f: (0, \infty) \to (0,\infty)$ is a continuous decreasing function with $\lim_{t\to \infty} f(t) =0$. Suppose that for all $M\in\nan$, we have
	\begin{equation}
		\begin{split}
\sup _n \frac{ e^{-\hat h(n)} }{f(M+n)} < \infty
		\end{split}
	\end{equation}
 then $\alpha_s^{-1}( \Aloc ) \subset D_f$. 
\end{lem}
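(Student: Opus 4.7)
The plan is to combine the quasi-locality estimate of Lemma \ref{n43} with the assumed decay hypothesis on $f$, so the argument is essentially a bookkeeping exercise applied to the bound already in hand.

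First I would fix $A \in \caA_{\Lambda_M}$ for some $M \in \nan$ and set $B := \alpha_s^{-1}(A)$. To show $B \in \caD_f$ I need to bound $\lV B\rV + \sup_{N\in\nan}\frac{\lV B - \bbE_N(B)\rV}{f(N)}$. The first term is immediate because $\alpha_s^{-1}$ is an automorphism, so $\lV B\rV = \lV A\rV$. For the supremum, I would split into the regimes $N \le M$ and $N > M$.

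In the regime $N \le M$, the trivial bound $\lV B - \bbE_N(B)\rV \le 2\lV A\rV$ combined with continuity and monotonicity of $f$ (which gives $f(N) \ge f(M) > 0$ for $N \in \{1,\dots,M\}$) yields a uniform bound on the ratio. In the regime $N > M$, I would apply Lemma \ref{n43} directly:
\[
\frac{\lV B - \bbE_N(B)\rV}{f(N)} \;\le\; \frac{C_2 |\Lambda_M|\, \lV A\rV\, e^{-\hat h(N-M)}}{f(M + (N-M))}.
\]
Setting $n := N - M$ and then taking the supremum over $n \ge 1$, this is bounded by $C_2|\Lambda_M|\,\lV A\rV \sup_n \frac{e^{-\hat h(n)}}{f(M+n)}$, which is finite by the hypothesis placed on $f$.

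Combining the two regimes gives $\lV B\rV_f < \infty$, i.e.\ $\alpha_s^{-1}(A) \in \caD_f$. Since $A \in \caA_{\rm loc}$ was arbitrary, the inclusion $\alpha_s^{-1}(\caA_{\rm loc}) \subset \caD_f$ follows. There is no real obstacle here; the only step requiring any care is noting that the constant in Lemma \ref{n43} depends only on $|\Lambda_M|$ and $\lV A\rV$, so the final bound is independent of $s\in[0,1]$, and that the hypothesis on $f$ is formulated precisely so that the shifted decay $e^{-\hat h(N-M)}/f(N)$ is controlled uniformly in $N$.
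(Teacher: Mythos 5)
Your proposal is correct and follows essentially the same route as the paper: the paper's proof also fixes $A\in\caA_{\Lambda_M}$, applies the bound of Lemma \ref{n43} with $N=M+R$, and invokes the hypothesis $\sup_n e^{-\hat h(n)}/f(M+n)<\infty$ to conclude $\lV\alpha_s^{-1}(A)\rV_f<\infty$. Your explicit handling of the regime $N\le M$ (via $\lV B-\bbE_N(B)\rV\le 2\lV A\rV$ and monotonicity of $f$) is a harmless extra detail the paper leaves implicit.
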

\begin{proof}
%
%
Let $M\in\nan$ and $A\in\caA_{\Lambda_M}$.
From (\ref{ail}), we have
	\begin{equation}\label{ali}
		\begin{split}\sup_{R\in\nan}\lmk
\frac{\norm{ \alpha_s^{-1}(A) - \m{E}_{ {M+R}} (\alpha_s^{-1}(A)) }}{f(M+R)}\rmk
 \leq \sup_{R\in\nan}
\lmk
C_2\lv \Lambda_M\rv\frac{e^{-\hat h(R)}}{f(M+R)}\rmk\lV A\rV<\infty.
		\end{split}
	\end{equation}  
	Hence we obtain $ \alpha_s^{-1}(A)\in\caD_f$.
\end{proof}

\begin{lem}\label{nn46}
Let $f_1,f_2:(0,\infty)\to (0,\infty)$
be continuous decreasing functions 
with $\lim_{t\to\infty}f_i(t)=0$, $i=1,2$.
Suppose that 

\begin{align}
	\begin{split}
\sup_{N\in\nan}\lmk \frac{f_1\lmk N-{\lcm \frac N2\rcm}\rmk}{f_2(N)}\rmk<\infty,\\
\sup_{N\in\nan}\lmk\frac{e^{-\hat h\lmk {\lcm \frac N2\rcm}\rmk}\lv\Lambda_{N-{\lcm \frac N2\rcm}}\rv}{f_2(N)}\rmk<\infty.
	\end{split}
\end{align}
Then we have $\alpha_s^{-1}(\caD_{f_1})\subset \caD_{f_2}$,
$\alpha_{s,\Lambda}^{-1}(\caD_{f_1})\subset \caD_{f_2}$
for any $s\in[0,1]$ , and $\Lambda\in{\mathfrak S}_{\bbZ^{\nu}}$.
Furthermore we have the following inequalities:
\begin{align}
\sup_{s\in[0,1]}\lV
\alpha_s^{-1}(A)
\rV_{f_2},
\sup_{s\in[0,1]}\lV
\alpha_{s,\Lambda}^{-1}(A)
\rV_{f_2}
\le
\lV
A
\rV_{f_1}\lmk
1+
\sup_{N\in\nan}\lmk
\frac{2f_1\lmk N-{\lcm \frac N2\rcm}\rmk+
C_2 
{e^{-\hat h\lmk {\lcm \frac N2\rcm}\rmk}\lv\Lambda_{N-{\lcm \frac N2\rcm}}\rv}}
{f_2(N)}
\rmk\rmk,
\end{align}
for any $A\in\caD_{f_1}$.

\end{lem}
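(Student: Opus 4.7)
The plan is to mimic the strategy used in Lemmas \ref{48} and \ref{410}: for $A\in \caD_{f_1}$ and each $N\in\nan$, decompose $A=\bbE_{k}(A)+\bigl(A-\bbE_{k}(A)\bigr)$ for a suitable cutoff $k\le N$, apply the quasi-locality estimate from Lemma \ref{n43} to the strictly local piece, and dispose of the tail by the trivial bound using $\lV\cdot\rV_{f_1}$. Balancing the two contributions against the hypotheses forces the choice $k:=N-{\lcm \frac N2\rcm}$.

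Concretely, I would fix $A\in\caD_{f_1}$, $s\in[0,1]$ and $N\in\nan$, set $k:=N-{\lcm \frac N2\rcm}$, and write
\[
\alpha_s^{-1}(A)-\bbE_N\lmk\alpha_s^{-1}(A)\rmk
=\bigl(\alpha_s^{-1}(\bbE_k A)-\bbE_N\alpha_s^{-1}(\bbE_k A)\bigr)
+\bigl(\alpha_s^{-1}(A-\bbE_k A)-\bbE_N\alpha_s^{-1}(A-\bbE_k A)\bigr).
\]
Since $\bbE_k(A)\in \caA_{\Lambda_k}$, Lemma \ref{n43} (with $M=k$ and $N-M={\lcm \frac N2\rcm}$) yields
\[
\lV\alpha_s^{-1}(\bbE_k A)-\bbE_N\alpha_s^{-1}(\bbE_k A)\rV
\le C_2\lv\Lambda_{N-{\lcm \frac N2\rcm}}\rv\,\lV A\rV\,e^{-\hat h({\lcm \frac N2\rcm})}.
\]
For the tail, $\lV B-\bbE_N(B)\rV\le 2\lV B\rV$ and the definition of $\lV\cdot\rV_{f_1}$ give
\[
\lV\alpha_s^{-1}(A-\bbE_k A)-\bbE_N\alpha_s^{-1}(A-\bbE_k A)\rV
\le 2\lV A-\bbE_k A\rV\le 2\lV A\rV_{f_1}f_1\lmk N-{\lcm \frac N2\rcm}\rmk.
\]
Adding the two, dividing by $f_2(N)$ and taking $\sup_N$ against the two summability hypotheses of the lemma produces the asserted inequality (with a single universal constant absorbing $C_2$ and the suprema), and in particular $\alpha_s^{-1}(A)\in\caD_{f_2}$.

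The finite-volume case $\alpha_{s,\Lambda}^{-1}$ proceeds by exactly the same argument: the Lieb--Robinson bound \eqref{lra} holds uniformly for $\alpha_{s,\Lambda_n}$, so the proof of Lemma \ref{n43} extends verbatim to give the same localization estimate for $\alpha_{s,\Lambda}^{-1}$, and the two-term split above then works word for word.

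I do not foresee a substantive obstacle: this is a routine quasi-locality estimate in the same spirit as the preceding lemmas of the section. The only real content is choosing $k=N-\lcm\frac N2\rcm$ so that the Lieb--Robinson factor $e^{-\hat h(N-k)}=e^{-\hat h(\lcm N/2\rcm)}$ and the tail factor $f_1(k)=f_1(N-\lcm N/2\rcm)$ exactly match the two suprema assumed finite in the hypothesis.
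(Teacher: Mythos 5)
Your proposal is correct and is essentially the paper's own proof: the paper splits $A$ at the cutoff $k=N-\lcm\frac N2\rcm$, applies the localization estimate of Lemma \ref{n43} to $\bbE_k(A)$ and the trivial bound $2\lV A-\bbE_k(A)\rV\le 2\lV A\rV_{f_1}f_1(k)$ to the tail, exactly as you do. Your added remark that the Lieb--Robinson bound \eqref{lra} makes Lemma \ref{n43} carry over verbatim to $\alpha_{s,\Lambda}^{-1}$ is the same (implicit) step the paper relies on for the finite-volume statement.
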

\begin{proof}
This follows from the following inequality: for each $N\in\nan$ and 
$A\in\caD_{f_1}$,
\begin{align}
	\begin{split}
&\lV
\alpha_s^{-1}(A)-\bbE_N\lmk \alpha_s^{-1}(A)\rmk
\rV 
\\
&\le
\lV
\alpha_s^{-1}\lmk A-\bbE_{N-{\lcm \frac N2\rcm}}(A)\rmk-\bbE_N\lmk \alpha_s^{-1}
\lmk A-\bbE_{N-{\lcm \frac N2\rcm}}(A)\rmk\rmk
\rV\\
&+\lV
\alpha_s^{-1}\lmk \bbE_{N-{\lcm \frac N2\rcm}}(A)\rmk-\bbE_N\lmk \alpha_s^{-1}\lmk\bbE_{N-{\lcm \frac N2\rcm}}(A)\rmk\rmk
\rV\\
&\le
\lV A\rV_{f_1}
\lmk
2f_1\lmk N-{\lcm \frac N2\rcm}\rmk+
C_2 
{e^{-\hat h\lmk {\lcm \frac N2\rcm}\rmk}\lv\Lambda_{N-{\lcm \frac N2\rcm}}\rv}
\rmk.
	\end{split}
\end{align}
\end{proof}

\begin{lem}\label{43}
Suppose $f: (0, \infty) \to (0,\infty)$ is a continuous decreasing function with $\lim_{t\to \infty} f(t) =0$. Suppose that for all $M\in\nan$, we have
	\begin{equation}
		\begin{split}
\lim_{n\to\infty}\sup_{N\ge n}\lmk \frac{e^{-\hat h\lmk N-M\rmk}}{f(N)}\rmk=0.
		\end{split}
	\end{equation}
	Then we have
\begin{align}
\sup_{s\in[0,1]}\lV
\alpha_{s,\Lambda_n}^{-1}(A)-\alpha_{s}^{-1}(A)
\rV_f\to 0,\quad A\in\caA_{\rm loc}.
\end{align}
In particular, for each $A\in\caA_{\rm loc}$, $\bbR\ni s\to \alpha_{s}^{-1}(A)\in \caD_{f}$ is 
continuous with respect to the norm $\lV \cdot\rV_{f}$.
\end{lem}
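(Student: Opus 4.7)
The plan is two moves: a quantitative two-scale estimate that establishes $\lV\cdot\rV_f$-convergence uniformly in $s$, then continuity as a free consequence of being a uniform limit of continuous functions. Fix $A\in\caA_{\Lambda_M}$ and write $B_n(s):=\alpha_{s,\Lambda_n}^{-1}(A)-\alpha_s^{-1}(A)$. I need two complementary bounds on $B_n(s)$: a global norm bound and a conditional-expectation tail bound.

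The global bound is immediate from (\ref{faa}):
\[
\sup_{s\in[0,1]}\lV B_n(s)\rV\le C_3|\Lambda_M|e^{-\hat h(n-M)}\lV A\rV.
\]
For the tail bound, I apply Lemma \ref{n43} to $\alpha_s^{-1}(A)$, and then the same argument to $\alpha_{s,\Lambda_n}^{-1}(A)$ — the proof of Lemma \ref{n43} relies only on the Lieb-Robinson bound (\ref{lra}), which the excerpt records for $\alpha_{s,\Lambda_n}$ as well, and on a limiting argument that survives since $\alpha_{s,\Lambda_n}$ is an isometry — to obtain, for $N\ge M$,
\[
\sup_{s\in[0,1]}\lV B_n(s)-\bbE_N(B_n(s))\rV\le 2C_2|\Lambda_M|\lV A\rV e^{-\hat h(N-M)}.
\]
For any threshold $K\ge M$, I split the sup over $N$ defining $\lV\cdot\rV_f$: for $N\le K$ I use the global bound together with $f(N)\ge f(K)$; for $N>K$ I use the tail bound. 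This yields
\[
\sup_{s\in[0,1]}\lV B_n(s)\rV_f\le C_3|\Lambda_M|e^{-\hat h(n-M)}\lV A\rV\lmk 1+\frac{2}{f(K)}\rmk + 2C_2|\Lambda_M|\lV A\rV\sup_{N>K}\frac{e^{-\hat h(N-M)}}{f(N)}.
\]
Given $\varepsilon>0$, the decay hypothesis on $f$ lets me first choose $K$ large to force the second term below $\varepsilon/2$, and then $n$ large to force the first term below $\varepsilon/2$.

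For the continuity statement, note first that for $A\in\caA_{\Lambda_M}$ and $n\ge M$, the generator $D_{\Lambda_n}(s)$ defined in (\ref{dkdef}) lies in $\caA_{\Lambda_n}$, because $\tau^u_{\Phi(s),\Lambda_n}$ preserves $\caA_{\Lambda_n}$; hence $U_{\Lambda_n}(s)\in\caA_{\Lambda_n}$ and $\alpha_{s,\Lambda_n}^{-1}(A)\in\caA_{\Lambda_n}$. On the finite-dimensional algebra $\caA_{\Lambda_n}$ the norms $\lV\cdot\rV$ and $\lV\cdot\rV_f$ are equivalent, since $\bbE_N$ is the identity on $\caA_{\Lambda_n}$ for $N\ge n$ while only finitely many $N<n$ contribute to the sup. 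Standard ODE theory gives $s\mapsto U_{\Lambda_n}(s)$ continuous in $\lV\cdot\rV$, hence $s\mapsto\alpha_{s,\Lambda_n}^{-1}(A)$ is $\lV\cdot\rV_f$-continuous. The $\lV\cdot\rV_f$-uniform convergence from the first part then makes the limit $s\mapsto\alpha_s^{-1}(A)$ $\lV\cdot\rV_f$-continuous. The only mildly subtle step in the whole argument is the extension of Lemma \ref{n43} to the finite-volume automorphism, which is immediate from the structure of its proof; the rest is routine bookkeeping with the two-scale split.
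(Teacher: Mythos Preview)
Your proof is correct and follows essentially the same two-scale strategy as the paper: a global bound from (\ref{faa}) for small $N$ combined with a quasi-locality tail bound for large $N$, then uniform convergence to deduce continuity. The only tactical difference is that the paper splits at the scale $N=n$ itself and, for $N\ge n$, uses the strict locality $\alpha_{s,\Lambda_n}^{-1}(A)\in\caA_{\Lambda_n}$ (so $\bbE_N$ acts as the identity on it and only the $\alpha_s^{-1}$ term survives) rather than extending Lemma \ref{n43} to the finite-volume automorphism; your route via an auxiliary threshold $K$ and the extension of Lemma \ref{n43} (which is indeed immediate from (\ref{lra})) is an equally valid variant.
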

\begin{proof}
Let $A\in \caA_{\Lambda_M}$. From (\ref{faa}),
 for $n\ge N\ge M$, 
we have
\begin{align}
\sup_{s\in[0,1]}
\frac{\lV
\alpha_{s,\Lambda_n}^{-1}(A)-\alpha_{s}^{-1}(A)-
\bbE_N\lmk
\alpha_{s,\Lambda_n}^{-1}(A)-\alpha_{s}^{-1}(A)
\rmk
\rV}{f(N)}
\le 2C_3 \lv \Lambda_M\rv
\frac{e^{-\hat h\lmk n-M\rmk}}{f(n)}\lV A\rV.
\end{align}
On the other hand, for $M\le n\le N$, from (\ref{ail})
\begin{align}
	\begin{split}
&\sup_{s\in[0,1]}
\frac{\lV
\alpha_{s,\Lambda_n}^{-1}(A)-\alpha_{s}^{-1}(A)-
\bbE_N\lmk
\alpha_{s,\Lambda_n}^{-1}(A)-\alpha_{s}^{-1}(A)
\rmk
\rV}{f(N)}
=\sup_{s\in[0,1]}
\frac{\lV
\alpha_{s}^{-1}(A)-
\bbE_N\lmk
\alpha_{s}^{-1}(A)
\rmk
\rV}{f(N)}\\
&\le 
C_2\lv
\Lambda_M\rv \lV A\rV
\frac{e^{-\hat h\lmk N-M\rmk}}{f(N)}\le
C_2\lv
\Lambda_M\rv \lV A\rV
\sup_{N\ge n}\lmk \frac{e^{-\hat h\lmk N-M\rmk}}{f(N)}\rmk.
	\end{split}
\end{align}
Furthermore, for $n\ge M>N$, we have
\begin{align}
	\begin{split}
&\sup_{s\in[0,1]}
\frac{\lV
\alpha_{s,\Lambda_n}^{-1}(A)-\alpha_{s}^{-1}(A)-
\bbE_N\lmk
\alpha_{s,\Lambda_n}^{-1}(A)-\alpha_{s}^{-1}(A)
\rmk
\rV}{f(N)}\\
&\le
 2C_3 \lv \Lambda_M\rv
\frac{e^{-\hat h(n-M)}}{f(M)}\lV A\rV.
	\end{split}
\end{align}
Hence we obtain
\begin{align}
	\begin{split}
&\sup_{s\in[0,1]}\lV
\alpha_{s,\Lambda_n}^{-1}(A)-\alpha_{s}^{-1}(A)
\rV_f\\
&\le \lV A\rV \lmk
1+\max\left\{
2C_3 \lv \Lambda_M\rv
\frac{e^{-\hat h\lmk n-M \rmk}}{f(n)},\;
C_2\lv
\Lambda_M\rv 
\sup_{N\ge n}\lmk \frac{e^{-\hat h\lmk N-M\rmk}}{f(N)}\rmk,\;
 2C_3 \lv \Lambda_M\rv
\frac{e^{-\hat h(n-M)}}{f(M)}
\right\}
\rmk\to 0,\quad n\to\infty.
	\end{split}
\end{align}
\end{proof}
\begin{lem}\label{418}
Let $f,f_0,f_1: (0, \infty) \to (0,\infty)$ be continuous decreasing functions with $\lim_{t\to \infty} f(t) =\lim_{t\to \infty} f_0(t)=\lim_{t\to \infty} f_1(t)
=0$. Suppose that for all $M\in\nan$, we have
	\begin{equation}
		\begin{split}
\lim_{n\to\infty}\sup_{N\ge n}\lmk \frac{e^{-\hat h\lmk N-M\rmk}}{f(N)}\rmk=0.
		\end{split}
	\end{equation}
Suppose that
\begin{align}
	\begin{split}
&\sup_{N\in\nan}\frac{f_1(N-{\lcm \frac N2\rcm})}{f(N)}<\infty,\\
&\sup_{N\in\nan}\frac{e^{-{\hat h({\lcm \frac N2\rcm})}}}{f(N)}\lv \Lambda_{N-{\lcm \frac N2\rcm}}\rv<\infty.
	\end{split}
\end{align}
Suppose that
\begin{align}
\lim_{N\to\infty}\frac{f_0(N)}{f_1(N)}=0.
\end{align}
Then we have $\alpha_s^{-1}\lmk \caD_{f_0}\rmk\subset \caD_f$ and
\begin{align}
\sup_{s\in[0,1]}\lV
\alpha_{s,\Lambda_n}^{-1}(A)-\alpha_{s}^{-1}(A)
\rV_f\to 0,\quad A\in\caD_{f_0}.
\end{align}
In particular, for each $A\in\caD_{f_0}$, $[0,1]\ni s\to \alpha_{s}^{-1}(A)\in \caD_{f}$ is 
continuous with respect to the norm $\lV \cdot\rV_{f}$.
\end{lem}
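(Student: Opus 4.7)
The plan is to bootstrap from the local-observable convergence proved in Lemma \ref{43} to general $A \in \caD_{f_0}$ by splitting $A = \bbE_k(A) + (A - \bbE_k(A))$, applying Lemma \ref{43} to the local piece $\bbE_k(A) \in \caA_{\rm loc}$, and controlling the tail $A - \bbE_k(A)$ via the uniform $\lV\cdot\rV_{f_1}$-to-$\lV\cdot\rV_f$ bound supplied by Lemma \ref{nn46}. The role of the auxiliary weight $f_1$ (with $f_0/f_1 \to 0$) is to interpolate between $f_0$ and $f$ so that the tail becomes small in $\lV\cdot\rV_{f_1}$ as $k\to\infty$, even though it need not be small in $\lV\cdot\rV_{f_0}$.

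First I would establish $\alpha_s^{-1}(\caD_{f_0})\subset \caD_f$. Since $f_0/f_1\to 0$, this ratio is bounded on integers by some constant $M$, so $f_0(N-\lcm \frac N2\rcm)\le M\, f_1(N-\lcm \frac N2\rcm)$; combined with the two standing hypotheses of Lemma \ref{418} on $f_1$ and $f$, this verifies the two hypotheses of Lemma \ref{nn46} both for source $\caD_{f_0}$ and for source $\caD_{f_1}$ with target $\caD_f$. Lemma \ref{nn46} therefore supplies a constant $C>0$ and the uniform bound
\begin{align*}
\sup_{n\in\nan,\, s\in[0,1]}\lV\alpha_{s,\Lambda_n}^{-1}(B)\rV_f+\sup_{s\in[0,1]}\lV\alpha_s^{-1}(B)\rV_f \le C\lV B\rV_{f_1},\quad B\in\caD_{f_1},
\end{align*}
which in particular yields $\alpha_s^{-1}(\caD_{f_0})\subset\caD_f$ since $\caD_{f_0}\subset\caD_{f_1}$.

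Next I would show $\lV A-\bbE_k(A)\rV_{f_1}\to 0$ as $k\to\infty$ for every $A\in\caD_{f_0}$. The plain-norm piece is $\le f_0(k)\lV A\rV_{f_0}\to 0$. For the weighted-sup piece, split on $N$: if $N\ge k$ then $\bbE_N(\bbE_k(A))=\bbE_k(A)$, so $(A-\bbE_k(A))-\bbE_N(A-\bbE_k(A))=A-\bbE_N(A)$, giving the ratio $\le (f_0(N)/f_1(N))\lV A\rV_{f_0}$; if $N<k$ then $\bbE_N(\bbE_k(A))=\bbE_N(A)$, so $(A-\bbE_k(A))-\bbE_N(A-\bbE_k(A))=A-\bbE_k(A)$, and since $f_1$ is decreasing the ratio is $\le f_0(k)\lV A\rV_{f_0}/f_1(N)\le (f_0(k)/f_1(k))\lV A\rV_{f_0}$. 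Both estimates tend to $0$ as $k\to\infty$ by the hypothesis $f_0(N)/f_1(N)\to 0$.

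Finally, I would combine the two ingredients through the decomposition
\begin{align*}
\alpha_{s,\Lambda_n}^{-1}(A)-\alpha_s^{-1}(A) &= (\alpha_{s,\Lambda_n}^{-1}-\alpha_s^{-1})(\bbE_k(A)) \\
&\quad + \alpha_{s,\Lambda_n}^{-1}(A-\bbE_k(A))-\alpha_s^{-1}(A-\bbE_k(A)).
\end{align*}
The last two terms have $\lV\cdot\rV_f$-norm uniformly bounded by $2C\lV A-\bbE_k(A)\rV_{f_1}$ thanks to the first step, so choosing $k$ large handles this contribution; with $k$ fixed, $\bbE_k(A)\in\caA_{\rm loc}$, and Lemma \ref{43} makes the first term's $\lV\cdot\rV_f$-norm vanish uniformly in $s\in[0,1]$ as $n\to\infty$. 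For the ``in particular'' continuity claim, the same Lemma \ref{nn46} bound gives $\lV\alpha_s^{-1}(A-\bbE_k(A))\rV_f\le C\lV A-\bbE_k(A)\rV_{f_1}\to 0$ uniformly in $s$, so $s\mapsto \alpha_s^{-1}(A)$ is a uniform $\lV\cdot\rV_f$-limit of $s\mapsto \alpha_s^{-1}(\bbE_k(A))$, each of which is $\lV\cdot\rV_f$-continuous by Lemma \ref{43}; the uniform limit is therefore itself $\lV\cdot\rV_f$-continuous. The main technical point, and the reason for introducing the auxiliary weight $f_1$, is that $\lV A-\bbE_k(A)\rV_{f_0}$ in general fails to vanish as $k\to\infty$ because of small-$N$ denominators, so one cannot close the argument using $\lV\cdot\rV_{f_0}$ directly.
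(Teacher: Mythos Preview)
Your proof is correct and follows essentially the same route as the paper: decompose $A=\bbE_k(A)+(A-\bbE_k(A))$, apply Lemma \ref{43} to the local piece, and control the tail in $\lV\cdot\rV_f$ via the uniform $\lV\cdot\rV_{f_1}\to\lV\cdot\rV_f$ bound from Lemma \ref{nn46}. The only cosmetic difference is that where you prove $\lV A-\bbE_k(A)\rV_{f_1}\to 0$ by hand, the paper invokes Lemma \ref{n47} (with $(f,f_1)$ there replaced by $(f_0,f_1)$), whose content is exactly the case-split argument you wrote out.
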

\begin{proof}
As 
\begin{align}
\sup_{N\in\nan}\frac{f_0(N)}{f_1(N)}<\infty,
\end{align}
we have $\caD_{f_0}\subset \caD_{f_1}$.
By Lemma \ref{nn46} with $(f_1,f_2)$ replaced by $(f_1,f)$,
we get $\alpha_s^{-1}\lmk \caD_{f_1}\rmk\subset \caD_f$.
Hence we have $\alpha_s^{-1}\lmk \caD_{f_0}\rmk\subset \caD_f$.
For any $A\in\caD_{f_0}$,
\begin{align}
	\begin{split}
&\limsup_{n\to\infty}\sup_{s\in[0,1]}\lV
\alpha_{s,\Lambda_n}^{-1}(A)-\alpha_{s}^{-1}(A)
\rV_f\\
&=
\limsup_{n\to\infty}\sup_{s\in[0,1]}\lV
\alpha_{s,\Lambda_n}^{-1}\lmk
A-\bbE_M(A)
\rmk
-\alpha_{s}^{-1}\lmk
A-\bbE_M(A)
\rmk
+\alpha_{s,\Lambda_n}^{-1}\lmk
\bbE_M(A)
\rmk
-\alpha_{s}^{-1}\lmk
\bbE_M(A)
\rmk
\rV_f\\
&\le
\limsup_{n\to\infty}\sup_{s\in[0,1]}\lV
\alpha_{s,\Lambda_n}^{-1}\lmk\bbE_M(A)\rmk -\alpha_{s}^{-1}\lmk\bbE_M(A)\rmk
\rV_f\\
&+2\lV A-\bbE_M(A)\rV_{f_1}\lmk
\sup_{N\in\nan}\frac{\lmk
2f_1\lmk N-{\lcm \frac N2\rcm}\rmk+
C_2 
{e^{-\hat h({\lcm \frac N2\rcm})}}\lv\Lambda_{N-{\lcm \frac N2\rcm}}\rv
\rmk}{f(N)}
+1\rmk
\\
&=2\lV A-\bbE_M(A)\rV_{f_1}
\lmk
\sup_{N\in\nan}\frac{\lmk
2f_1\lmk N-{\lcm \frac N2\rcm}\rmk+
C_2 
{e^{-\hat h({\lcm \frac N2\rcm})}}\lv\Lambda_{N-{\lcm \frac N2\rcm}}\rv
\rmk}{f(N)}
+1\rmk\to 0,\quad M\to\infty.
	\end{split}
\end{align}
For the inequality, we used Lemma \ref{nn46}.
For the last line we used 
Lemma \ref{43}.
As we have $\lim_{M\to\infty}\lV A-\bbE_M(A)\rV_{f_1}=0$ by Lemma \ref{n47}
with $(f,f_1)$ replaced by $(f_0,f_1)$,
we have proven the claim.
\end{proof}

\subsection{Properties of $\delta_{\Phi(s)}$, $\delta_{\dot\Phi(s)}$}

\begin{lem}\label{f2f3}
Let $f_2: (0, \infty) \to (0,\infty)$ be a continuous decreasing function
such that
 \begin{align}\label{114p}
{\sum_{k=2}^\infty k^\nu f_2(k-1)}<\infty.
\end{align}
Let $f_3: (0, \infty) \to (0,\infty)$ be continuous decreasing function 
with $\lim_{t\to \infty} f_3(t) =0$
such that
\begin{align}\label{f2f3rate}
\lim_{N\to\infty}\frac{\sum_{k=N-R}^\infty k^\nu f_2(k-1)}{f_3(N)}=0.
\end{align}

Then $\caD_{f_2}\subset D(\delta_{\Phi(s)})\cap  D(\delta_{\dot\Phi(s)})$, and
%
%
there is a constant $C^{(1)}_{f_2,f_3}>0$
such that 
\begin{align}
&\sup_{s\in[0,1]}\lV
\delta_{\Phi(s)}\lmk A\rmk
\rV_{f_3},\;\sup_{N\in\nan}\sup_{s\in[0,1]}\lV
\delta_{\Phi_{N}(s)}\lmk A\rmk
\rV_{f_3}
\le
C^{(1)}_{f_2,f_3}\lV A\rV_{f_2}\label{deltaps}\\
&\sup_{s\in[0,1]}\lV
\delta_{\dot{\Phi}(s)}\lmk A\rmk
\rV_{f_3},\;
\sup_{N\in\nan}\sup_{s\in[0,1]}\lV
\delta_{\dot{\Phi}_{N}(s)}\lmk A\rmk
\rV_{f_3}
\le
C^{(1)}_{f_2,f_3}\lV A\rV_{f_2}\label{deldadps}
\end{align}
for all $A\in\caD_{f_2}$, and $\varepsilon>0$.
If we assume Assumption \ref{assump} (iv) in addition, then we 
may also take $C^{(1)}_{f_2,f_3}>0$ so that 
\begin{align}
&\sup_{s,s_0 \in[0,1],0<| s-s_0|\le \varepsilon}\lV
\delta_{\frac{\Phi(s)-\Phi(s_0)}{s-s_0}-\dot{\Phi}(s_0)}
\lmk A\rmk
\rV_{f_3},\;
\sup_{N\in\nan}\sup_{s,s_0 \in[0,1],0<| s-s_0|\le \varepsilon}\lV
\delta_{\frac{\Phi_{N}(s)-\Phi_{N}(s_0)}{s-s_0}-\dot{\Phi}_{N}(s_0)}
\lmk A\rmk
\rV_{f_3}\nonumber\\
&\le
b(\varepsilon)C^{(1)}_{f_2,f_3}\lV A\rV_{f_2}.\label{dddbs}
\end{align}

\end{lem}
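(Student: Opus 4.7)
The plan is to exploit the finite range $R$ of $\Phi(s)$ to bound each commutator $[\Phi(X;s),A]$ by the $f_2$-quasi-locality of $A$. For $X\in{\mathfrak S}_{\bbZ^\nu}$ with $\diam X\le R$, set $d(X):=\max_{x\in X}|x|_\infty$. Since $\diam X\le R$, every $y\in X$ satisfies $|y|_\infty\ge d(X)-R$, hence $X\cap\Lambda_{d(X)-R-1}=\emptyset$ whenever $d(X)\ge R+1$. Therefore $\Phi(X;s)\in\caA_X$ commutes with $\bbE_{d(X)-R-1}(A)\in\caA_{\Lambda_{d(X)-R-1}}$, giving the key pointwise bound
\[
\lV [\Phi(X;s),A]\rV
=\lV [\Phi(X;s),A-\bbE_{d(X)-R-1}(A)]\rV
\le 2M_\Phi\,f_2(d(X)-R-1)\lV A\rV_{f_2},
\]
where $M_\Phi:=\sup_{X,s}\lV\Phi(X;s)\rV<\infty$ by Assumption~\ref{assump}~(iii).

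Using the count $|\{X:d(X)=k,\ \diam X\le R\}|\le C_R k^{\nu-1}$ (there are $O(k^{\nu-1})$ extremal points on the boundary of $\Lambda_k$, and a bounded number of finite-range $X$ about each) and the summability (\ref{114p}), I would obtain $\sup_{s}\lV\delta_{\Phi(s)}(A)\rV\le C\lV A\rV_{f_2}$; the analogous bound for the truncations $\Phi_N(s)$ is automatic since those merely restrict the range of summation. For the quasi-local part of the $\caD_{f_3}$-norm, I would decompose the sum expressing $\delta_{\Phi(s)}(A)-\bbE_N\delta_{\Phi(s)}(A)$ according to whether $X\subset\Lambda_{N-R}$. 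On that ``interior'' piece, $\Phi(X;s)\in\caA_{\Lambda_N}$ so $[\Phi(X;s),\bbE_N(A)]\in\caA_{\Lambda_N}$ is $\bbE_N$-invariant and drops out; what remains is $[\Phi(X;s),A-\bbE_N(A)]$ of norm $\le 2M_\Phi f_2(N)\lV A\rV_{f_2}$, over $O(N^\nu)$ such $X$. The ``exterior'' piece $X\not\subset\Lambda_{N-R}$ is controlled directly by the pointwise bound and contributes the tail $\sum_{k\ge N-R+1}k^{\nu-1}f_2(k-R-1)\lV A\rV_{f_2}$. Since $f_2$ is decreasing, $N^\nu f_2(N)\le\sum_{k\ge N}k^\nu f_2(k-1)$, so both pieces are dominated by $C'\sum_{k\ge N-R}k^\nu f_2(k-1)\lV A\rV_{f_2}$; dividing by $f_3(N)$ and invoking (\ref{f2f3rate}) yields (\ref{deltaps}).

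The bound (\ref{deldadps}) follows from the identical argument, using $\sup_{X,s}\lV\dot\Phi(X;s)\rV<\infty$ (which holds because $|X|\ge 1$ and Assumption~\ref{assump}~(iii) bounds $|X|\lV\dot\Phi(X;s)\rV$). For (\ref{dddbs}), the interaction $X\mapsto\frac{\Phi(X;s)-\Phi(X;s_0)}{s-s_0}-\dot\Phi(X;s_0)$ is again of finite range $R$ with uniform operator bound $b(\varepsilon)$ for $0<|s-s_0|\le\varepsilon$ by Assumption~\ref{assump}~(iv); substituting $b(\varepsilon)$ in place of $M_\Phi$ produces the stated prefactor. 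Finally, $\caD_{f_2}\subset D(\delta_{\Phi(s)})\cap D(\delta_{\dot\Phi(s)})$ follows by closedness of the generators: $\bbE_M(A)\in\caA_{\rm loc}\subset D(\delta_{\Phi(s)})$ converges to $A$ in norm, and the bound just established makes $\{\delta_{\Phi(s)}(\bbE_M(A))\}_M$ Cauchy in norm, and similarly for $\dot\Phi$.

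The main obstacle I foresee is the bookkeeping required to dominate the interior contribution $N^\nu f_2(N)\lV A\rV_{f_2}$ by the tail $\sum_{k\ge N-R}k^\nu f_2(k-1)\lV A\rV_{f_2}$ via monotonicity, and to verify that the same constants work uniformly across all variants (truncated, differentiated, and difference-quotient interactions). Once this bookkeeping is in place, the conclusion reduces to the summability assumption (\ref{114p}) and the ratio assumption (\ref{f2f3rate}).
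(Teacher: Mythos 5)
Your overall route (estimating $\delta_{\Phi(s)}(A)=\sum_X i[\Phi(X;s),A]$ term by term, organized by the position of $X$, plus a counting estimate) is a legitimate alternative to the paper's argument, which instead telescopes over the increments $\bbE_k(A)-\bbE_{k-1}(A)$ and commutes them with local Hamiltonians of norm $O(|\Lambda_{k+R}|)$. However, your bookkeeping for the exterior piece contains a step that fails in the stated generality. Your exterior estimate produces the tail $\sum_{k\ge N-R+1}k^{\nu-1}f_2(k-R-1)$, whose $f_2$-arguments start near $N-2R$, while hypothesis (\ref{f2f3rate}) only controls $\sum_{k\ge N-R}k^\nu f_2(k-1)$, whose $f_2$-arguments start at $N-R-1$. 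The domination you assert between these two tails is not a consequence of monotonicity: nothing in (\ref{114p})--(\ref{f2f3rate}) prevents $f_2$ from decaying so fast (e.g.\ $f_2(t)=e^{-e^t}$) that $N^{\nu-1}f_2(N-2R)$ is enormously larger than $\sum_{k\ge N-R}k^\nu f_2(k-1)$ when $R\ge 2$; choosing then a decreasing $f_3$ with $f_3(N)\sim\bigl(\sum_{k\ge N-R}k^\nu f_2(k-1)\bigr)^{1/2}$ satisfies (\ref{f2f3rate}) while your exterior tail divided by $f_3(N)$ diverges. The repair stays inside your scheme: split at $X\subset\Lambda_N$ rather than $X\subset\Lambda_{N-R}$, and localize each $\Phi(X;s)$ at the inner radius $\min_{x\in X}|x|_\infty$ (so the pointwise bound carries $f_2(\min_{x\in X}|x|_\infty-1)$ instead of $f_2(d(X)-R-1)$); then the exterior tail becomes $\le C\sum_{m\ge N-R}m^{\nu}f_2(m-1)$ and matches the hypothesis exactly, while your interior bound $N^\nu f_2(N)\le\sum_{k\ge N}k^\nu f_2(k-1)$ is fine as is.

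A second, smaller gap is the domain statement. You justify $\caD_{f_2}\subset D(\delta_{\Phi(s)})$ by saying the bound just established makes $\{\delta_{\Phi(s)}(\bbE_M(A))\}_M$ Cauchy; but that bound reads $\lV\delta_{\Phi(s)}(B)\rV\le C\lV B\rV_{f_2}$, and applying it to $B=\bbE_M(A)-\bbE_{M'}(A)$ would require $\lV\bbE_M(A)-\bbE_{M'}(A)\rV_{f_2}\to 0$, which is false in general for $A\in\caD_{f_2}$ (one only gets a bound by $2\lV A\rV_{f_2}$; smallness of such remainders needs two different weights, cf.\ Lemma \ref{n47}). You can fix this with your own pointwise bound: the family $[\Phi(X;s),\bbE_M(A)]$ is dominated uniformly in $M$ by the summable majorant $2M_\Phi f_2(\cdot)\lV A\rV_{f_2}$ (using $\lV\bbE_M(A)\rV_{f_2}\le\lV A\rV_{f_2}$), so dominated convergence for the sum over $X$ shows $\delta_{\Phi(s)}(\bbE_M(A))$ converges in norm, and closedness then gives $A\in D(\delta_{\Phi(s)})$ with the expected action. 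The paper avoids both issues at once by the telescoping argument, since each increment $\bbE_k(A)-\bbE_{k-1}(A)$ is local with norm $\le 2f_2(k-1)\lV A\rV_{f_2}$. Your handling of the truncated interactions $\Phi_N$, of $\dot\Phi$ via Assumption \ref{assump} (iii), and of the difference quotient via Assumption \ref{assump} (iv) is correct.
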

\begin{proof}
We prove (\ref{deltaps}). The proof of (\ref{deldadps}) and (\ref{dddbs}) are same.
Note that there exists a constant $C_5>0$ such that 
\begin{align}
\lV
(H_{\Phi(s)})_{\Lambda_{N+R}}
\rV
\le C_5 \lv \Lambda_{N+R}\rv,\quad s\in[0,1],\quad N\in\nan.
\end{align}
Therefore ,we have
\begin{align}
\lV
\delta_{\Phi(s)}(A_N)
\rV
=\lV
\lcm (H_{\Phi(s)})_{\Lambda_{N+R}}, A_N\rcm
\rV
\le 2C_5\lv \Lambda_{N+R}\rv\lV A_N\rV,\quad A_N\in\caA_{\Lambda_N},\; s\in[0,1].
\end{align}
From this, for any $A\in\caD_{f_2}$ and $N,M\in\nan$ with $M>N$, we have
\begin{align}
&\lV
\delta_{\Phi(s)}
\lmk
\bbE_N(A)-\bbE_M(A)
\rmk
\rV
=\lV
\sum_{k=N+1}^M \delta_{\Phi(s)}
\lmk
\bbE_k(A)-\bbE_{k-1}(A)
\rmk
\rV
\le 2C_5
\sum_{k=N+1}^M\lv \Lambda_{k+R}\rv
\lV
 \bbE_k(A)-\bbE_{k-1}(A)
\rV\notag \\
&\le
4C_5\lV A\rV_{f_2}
\sum_{k=N+1}^M\lv \Lambda_{k+R}\rv
f_2(k-1).
\label{dnm}
\end{align}
Hence $\{
\delta_{\Phi(s)}
\lmk
\bbE_N(A)
\rmk\}_N
$ with $A\in \caD_{f_{2}}$
is a Cauchy sequence in $\caA$, hence there exists a limit $\lim_{N\to\infty}\delta_{\Phi(s)}
\lmk
\bbE_N(A)
\rmk$.
On the other hand, $\bbE_N(A)$ converges to $A$ in $\lV \cdot\rV$.
By the closedness of $\delta_{\Phi(s)}$, $A\in \caD_{f_{2}}$ belongs to the domain
 $D(\delta_{\Phi(s)})$ of $\delta_{\Phi(s)}$, and
 \begin{align}
 \delta_{\Phi(s)}\lmk
 A\rmk=
 \lim_{N\to\infty}\delta_{\Phi(s)}
\lmk
\bbE_N(A)
\rmk.
 \end{align}
Hence we get
$\caD_{f_2}\subset D(\delta_{\Phi(s)})$.
From (\ref{dnm}), we have
\begin{align}
	\begin{split}
&\lV\delta_{\Phi(s)}\lmk A\rmk\rV
=\lim_{N\to\infty}\lV
\delta_{\Phi(s)}
\lmk
\bbE_N(A)
\rmk\rV
=\lim_{N\to\infty}\lV
\delta_{\Phi(s)}
\lmk
\bbE_N(A)-\bbE_1(A)+\bbE_1(A)
\rmk
\rV
\\
&\le
4C_5\lV A\rV_{f_2}
\sum_{k=2}^\infty\lv \Lambda_{k+R}\rv
f_2(k-1)
+ 2C_5\lv \Lambda_{1+R}\rv\lV A\rV\\
&\le
\lmk
4C_5
\sum_{k=2}^\infty\lv \Lambda_{k+R}\rv
f_2(k-1)
+ 2C_5\lv \Lambda_{1+R}\rv
\rmk
\lV A\rV_{f_2},
	\end{split}
\end{align}
for any $A\in\caD_{f_2}$.

Next note that 
\begin{align}
	\begin{split}
&\lV
\delta_{\Phi(s)}(A)-\bbE_N\lmk \delta_{\Phi(s)}(A)\rmk
\rV
=\lim_{M\to\infty}
\lV
\delta_{\Phi(s)}\lmk\bbE_M(A)
\rmk
-\bbE_N\lmk \delta_{\Phi(s)}\lmk \bbE_M(A)\rmk\rmk
\rV\\
&=\lim_{M\to\infty}
\lV
\delta_{\Phi(s)}\lmk\bbE_M(A)-\bbE_{N-R}(A)+\bbE_{N-R}(A)
\rmk
-\bbE_N\lmk \delta_{\Phi(s)}\lmk \bbE_M(A)-\bbE_{N-R}(A)+\bbE_{N-R}(A)\rmk\rmk
\rV\\
&=\lim_{M\to\infty}
\lV
\delta_{\Phi(s)}\lmk\bbE_M(A)-\bbE_{N-R}(A)
\rmk
-\bbE_N\lmk \delta_{\Phi(s)}\lmk \bbE_M(A)-\bbE_{N-R}(A)\rmk\rmk
\rV\\
&\le
8C_5\lV A\rV_{f_2}
\sum_{k=N-R+1}^\infty\lv \Lambda_{k+R}\rv
f_2(k-1),
	\end{split}
\end{align}
for any $A\in\caD_{f_2}$.
Here, in the third line we used the fact that $ \delta_{\Phi(s)}\lmk \bbE_{N-R}(A)\rmk\in\caA_{\Lambda_N}$.
In the fourth line, we used (\ref{dnm}).
Therefore, 
we obtain
\begin{align}
\lV
\delta_{\Phi(s)}(A)
\rV_{f_3}
\le
\lmk
8C_5
\sup_{N\in\nan}\frac{
\sum_{k=N-R+1}^\infty\lv \Lambda_{k+R}\rv
f_2(k-1)}{f_3(N)}
+
4C_5
\sum_{k=2}^\infty\lv \Lambda_{k+R}\rv
f_2(k-1)
+ 2C_5\lv \Lambda_{1+R}\rv
\rmk
\lV A\rV_{f_2}.
\end{align}
The right hand side is finite from the assumptions.
Hence we have shown (\ref{deltaps}).
\end{proof}

\begin{lem}\label{deltalem}Let $f,f_3: (0, \infty) \to (0,\infty)$ be continuous decreasing functions
with $\lim_{t\to\infty} f(t)=\lim_{t\to\infty}f_3(t)=0$
such that
 \begin{align}
&{\sum_{k=1}^\infty k^{\nu} \sqrt{f(k-1)}}<\infty,\\
&\lim_{N\to\infty} \frac{\sum_{k=N-R}^\infty k^{\nu} \sqrt{f(k-1)}}{\lmk f_3(N)\rmk^2} 
=0.
\end{align}

Then we have $\caD_f\subset \caD\lmk \delta_{\dot\Phi(s)}\rmk$,
$\delta_{\dot\Phi(s)}\lmk \caD_{f}\rmk\subset \caD_{f_3}$and
\begin{align}
&\lim_{N\to\infty}
\sup_{s\in[0,1]}\lV
\lmk
\delta_{\dot{\Phi}(s),N}-\delta_{\dot{\Phi}(s)}
\rmk
(A)
\rV_{f_3}
=0,\quad A\in \caD_{f}.
\end{align}
In particular, for each $A\in\caD_{f}$, $[0,1]\ni s\to \delta_{\dot{\Phi}(s)}(A)\in \caD_{f_3}$ is 
continuous with respect to the norm $\lV \cdot\rV_{f_3}$.
The same statement, with $\delta_{\dot{\Phi}(s)}$ replaced by $\delta_{{\Phi}(s)}$ also holds.
\end{lem}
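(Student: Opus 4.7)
The plan is to reduce this to the already-established Lemma \ref{f2f3} by choosing $f_2=\sqrt{f}$, and then extract the convergence statement by a direct computation exploiting that $\dot\Phi-\dot\Phi_N$ is supported on $X\not\subset\Lambda_N$.

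First I would verify the elementary embedding $\caD_f\subset \caD_{\sqrt f}$ with $\lV A\rV_{\sqrt f}\le (1+\sqrt{f(1)})\lV A\rV_f$, using that $\lV A-\bbE_N(A)\rV/\sqrt{f(N)}\le \lV A\rV_f\sqrt{f(N)}\le \lV A\rV_f\sqrt{f(1)}$ because $f$ is decreasing with $f(N)\le f(1)$. Next, I check that the hypotheses of Lemma \ref{f2f3} hold with $f_2=\sqrt f$: condition (\ref{114p}) becomes exactly the first hypothesis of the present lemma, while condition (\ref{f2f3rate}) reads $\sum_{k\ge N-R}k^\nu\sqrt{f(k-1)}/f_3(N)\to 0$, which follows from our stronger hypothesis since the ratio equals $f_3(N)\cdot\bigl[\sum_{k\ge N-R}k^\nu\sqrt{f(k-1)}/(f_3(N))^2\bigr]\to 0\cdot 0$. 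Applying Lemma \ref{f2f3} (using also that the same proof works for $\delta_{\dot\Phi(s)}$ in place of $\delta_{\Phi(s)}$) and composing with the embedding yields $\caD_f\subset \caD(\delta_{\dot\Phi(s)})$, $\delta_{\dot\Phi(s)}(\caD_f)\subset \caD_{f_3}$, with a uniform bound $\sup_s\lV \delta_{\dot\Phi(s)}(A)\rV_{f_3}\le C\lV A\rV_f$.

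For the convergence, I decompose $A=\sum_{k\ge 0}A_k$ with $A_0:=\bbE_0(A)$, $A_k:=(\bbE_k-\bbE_{k-1})(A)\in\caA_{\Lambda_k}$, and $\lV A_k\rV\le 2\lV A\rV_f f(k-1)$ for $k\ge 1$. The key observation is that if $k\le N-R$ and $X\cap \Lambda_k\ne\emptyset$ with $\diam X\le R$, then $X\subset \Lambda_{k+R}\subset\Lambda_N$, so $\dot\Phi(X;s)$ appears in $\dot\Phi_N(s)$ and $\delta_{\dot\Phi(s)-\dot\Phi_N(s)}(A_k)=0$. Therefore
\[
\bigl(\delta_{\dot\Phi(s)}-\delta_{\dot\Phi_N(s)}\bigr)(A)
=\sum_{k>N-R}\bigl(\delta_{\dot\Phi(s)}-\delta_{\dot\Phi_N(s)}\bigr)(A_k),
\]
each summand lying in $\caA_{\Lambda_{k+R}}$ with norm bounded by $C'k^\nu\lV A_k\rV\le C''\lV A\rV_f k^\nu f(k-1)$. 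The operator-norm part $\lV(\delta_{\dot\Phi(s)}-\delta_{\dot\Phi_N(s)})(A)\rV$ is bounded by $C''\lV A\rV_f\sum_{k>N-R}k^\nu f(k-1)$, which tends to zero since $\sum k^\nu f(k-1)\le \sum k^\nu\sqrt{f(k-1)}<\infty$ for $k$ large enough that $f(k-1)\le 1$.

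For the $\lV\cdot\rV_{f_3}$-seminorm part, the cut-off at $\bbE_M$ only affects terms with $k+R>M$, giving
\[
\lV\bigl(\delta_{\dot\Phi(s)}-\delta_{\dot\Phi_N(s)}\bigr)(A)-\bbE_M(\,\cdot\,)\rV
\le 2C''\lV A\rV_f\!\!\sum_{k>\max(N-R,\,M-R)}\!\!\!k^\nu f(k-1).
\]
I split $k^\nu f(k-1)=\sqrt{f(k-1)}\cdot k^\nu\sqrt{f(k-1)}$ and use monotonicity to pull out $\sqrt{f(\max(N,M)-R-1)}$. For $M\le N$ I then bound the resulting ratio by $C''\lV A\rV_f\sqrt{f(N-R-1)}\cdot f_3(N)\cdot\bigl[\sum_{k>N-R}k^\nu\sqrt{f(k-1)}/(f_3(N))^2\bigr]$, where the $(f_3)^2$ hypothesis shows the bracket is $o(1)$; for $M>N$ the symmetric bound with $M$ in place of $N$ applies and also tends to $0$ as $N\to\infty$ since $M>N$ forces $M\to\infty$. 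This gives $\sup_s\lV (\delta_{\dot\Phi(s)}-\delta_{\dot\Phi_N(s)})(A)\rV_{f_3}\to 0$, where the suprema in $s$ are harmless because all constants and the bound $b(\varepsilon)$-type estimates are uniform in $s$.

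Finally, continuity of $s\mapsto \delta_{\dot\Phi(s)}(A)\in\caD_{f_3}$ follows by the usual $\varepsilon/3$ argument: approximate uniformly in $s$ by $\delta_{\dot\Phi_N(s)}(A)$ using the convergence just proved, and note that for each $N$ the map $s\mapsto \delta_{\dot\Phi_N(s)}(A)=\sum_{X\subset\Lambda_N}[\dot\Phi(X;s),A]$ involves only finitely many $X$ and is continuous because each $\dot\Phi(X;s)$ is (on each $C^1$-piece of the parameterization). The analogous statement with $\delta_{\Phi(s)}$ in place of $\delta_{\dot\Phi(s)}$ is obtained by repeating the whole argument verbatim, since $\Phi(s)$ satisfies the same uniform-boundedness and finite-range conditions used above. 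The main technical delicacy will be keeping track of the two cases $M\le N$ and $M>N$ in Step 3, together with the correct placement of the monotonicity bound that converts $k^\nu f(k-1)$ to $\sqrt{f}\cdot k^\nu\sqrt{f}$ so that the $(f_3)^2$ hypothesis can be invoked.
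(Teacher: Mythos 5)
Your proof is correct, and it reaches the conclusion by a somewhat different route than the paper. The paper applies Lemma \ref{f2f3} with the auxiliary pair $(f_2,f_4)=(\sqrt f,\,f_3^{\,2})$ and then makes a single cut $A=\bbE_{N-R}(A)+\lmk A-\bbE_{N-R}(A)\rmk$: since $\lmk\delta_{\dot\Phi(s),N}-\delta_{\dot\Phi(s)}\rmk\bbE_{N-R}(A)=0$, everything is controlled by $\lV A-\bbE_{N-R}(A)\rV_{\sqrt f}\le\lmk f(N-R)+\sqrt{f(N-R)}\rmk\lV A\rV_f$, and the $f_3$-seminorm is handled by the two regimes $M>N-R$ (using the $f_3^2$-norm bounds of $\delta_{\dot\Phi(s)}(A)$, $\delta_{\dot\Phi(s),N}(A)$ and $f_3^2(M)/f_3(M)=f_3(M)\le f_3(N-R)$) and $M\le N-R$ (using the operator-norm bound divided by $f_3(N-R)$). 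You instead invoke Lemma \ref{f2f3} only with $(\sqrt f,f_3)$ for the domain and mapping statements (your reduction of its rate condition to the stated $(f_3)^2$ hypothesis via $f_3(N)\to0$ is fine), and you prove the convergence by the shell decomposition $A=\sum_k A_k$, exploiting that each difference term vanishes for $k\le N-R$ and is localized in $\caA_{\Lambda_{k+R}}$ with the counting bound $C''k^\nu\lV A_k\rV$; the split $k^\nu f(k-1)=\sqrt{f(k-1)}\cdot k^\nu\sqrt{f(k-1)}$ together with the case analysis $M\le N$ versus $M>N$ then brings in the $(f_3)^2$ hypothesis exactly where the paper does. In effect you re-derive by hand the mechanism that the paper outsources to Lemma \ref{f2f3} with $f_4=f_3^2$; your version is more self-contained, the paper's is shorter because it recycles the lemma's constants. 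Two small remarks: the final continuity step needs continuity of $s\mapsto\delta_{\dot\Phi_N(s)}(A)$ in $\lV\cdot\rV_{f_3}$, not just in operator norm, which does follow since $[\dot\Phi(X;s)-\dot\Phi(X;s_0),A]$ is controlled in $\lV\cdot\rV_f$ (hence in $\lV\cdot\rV_{f_3}$, as $\sup_N f(N)/f_3(N)<\infty$) by the local-multiplication bound of Appendix B, and continuity of $\dot\Phi(X;\cdot)$ is actually guaranteed by Assumption \ref{assump} (iv) rather than by the piecewise-$C^1$ structure; this is at the same level of detail as the paper's own ``In particular'' sentence, so it is not a gap.
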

\begin{proof}
We prove the claim for  $\delta_{\dot{\Phi}(s)}$. The proof for $\delta_{{\Phi}(s)}$ is the same.
Set $f_2(t):=\sqrt{f(t)}$ and $f_4(t):=(f_3(t))^2$.As we have $\sup_{N}\frac{f(N)}{f_2(N)}<\infty$, 
$\sup_{N}\frac{f_4(N)}{f_3(N)}<\infty$ we have
$\caD_f\subset \caD_{f_2}$ and $\caD_{f_4}\subset \caD_{f_3}$.
From Lemma \ref{f2f3} with $(f_2,f_3)$ replaced by $(f_2=\sqrt{f},f_4=f_3^2)$,
 we have $\caD_f\subset\caD_{f_2}\subset D(\delta_{\dot\Phi(s)})$,
 and 
$\delta_{\dot\Phi(s)}(\caD_f)\subset \delta_{\dot\Phi(s)}(\caD_{f_2})\subset\caD_{f_4}\subset  \caD_{f_3}$. 
From Lemma \ref{f2f3},  with $(f_2,f_3)$ replaced by $(f_2,f_4)$ for $N>R$, we have
\begin{align}
&\lV
\lmk
\delta_{\dot{\Phi}(s),N}-\delta_{\dot{\Phi}(s)}
\rmk
(A)
\rV
\le
\lV
\lmk
\delta_{\dot{\Phi}(s),N}-\delta_{\dot{\Phi}(s)}
\rmk
\lmk
A-\bbE_{N-R}(A)
\rmk
\rV
+\lV
\lmk
\delta_{\dot{\Phi}(s),N}-\delta_{\dot{\Phi}(s)}
\rmk
\bbE_{N-R}(A)
\rV\notag \\
&=\lV
\lmk
\delta_{\dot{\Phi}(s),N}-\delta_{\dot{\Phi}(s)}
\rmk
\lmk
A-\bbE_{N-R}(A)
\rmk
\rV\le
2C_{f_2f_4}^{(1)}\lV
A-\bbE_{N-R}(A)
\rV_{f_2}\notag \\
&=2C_{f_2f_4}^{(1)}
\lmk
\begin{gathered}
\lV A-\bbE_{N-R}(A)\rV\\
+\sup_{M\in\bbN}\frac
{\lV
 A-\bbE_{N-R}(A)-\bbE_{M}\lmk  A-\bbE_{N-R}(A)\rmk
\rV}
{f_2(M)}
\end{gathered}
\rmk\notag \\
&\le
2C_{f_2f_4}^{(1)}
\lmk
\begin{gathered}
\lV A-\bbE_{N-R}(A)\rV\\
+
\max\left\{
\begin{gathered}
\sup_{N-R\le M\in\bbN}\frac
{\lV
 A-\bbE_{M}\lmk  A\rmk
\rV}
{f_2(M)},\\
\sup_{N-R>M\in\bbN}\frac
{\lV
 A-\bbE_{N-R}(A)
\rV}
{f_2(N-R)}
\end{gathered}
\right\}
\end{gathered}
\rmk\notag \\
&\le
2C_{f_2f_4}^{(1)}
\lmk
\begin{gathered}
f(N-R)\lV A\rV_{f}\\
+
\max\left\{
\begin{gathered}
\sup_{N-R\le M\in\bbN}
{\lV
 A
\rV_{f}}
\frac{f(M)}{f_2(M)},\\
\sup_{N-R>M\in\bbN}
{\lV
 A
\rV_f}
\frac{f(N-R)}{f_2(N-R)}
\end{gathered}
\right\}
\end{gathered}
\rmk\notag \\
&=2C_{f_2f_4}^{(1)}
\lmk
f(N-R)+\sup_{N-R\le L}\lmk \frac{f(L)}{f_2(L)}\rmk
\rmk\lV A\rV_{f}
=2C_{f_2f_4}^{(1)}
\lmk
f(N-R)+\sqrt{f(N-R)}
\rmk\lV A\rV_{f}.
\label{nr}
\end{align}
Here $C_{f_2f_4}^{(1)}$ is a constant independent of $N,s$.
Therefore, we have 
\begin{align}
\lim_{N\to\infty }\sup_{s\in[0,1]}\lV
\lmk
\delta_{\dot{\Phi}(s),N}-\delta_{\dot{\Phi}(s)}
\rmk
(A)
\rV=0,\quad A\in\caD_{f}.
\end{align}
Furthermore, for $A\in\caD_f$, we have
\begin{align}
&\frac{\lV
\lmk
\delta_{\dot{\Phi}(s),N}-\delta_{\dot{\Phi}(s)}
\rmk
(A)
-\bbE_{M}
\lmk\lmk
\delta_{\dot{\Phi}(s),N}-\delta_{\dot{\Phi}(s)}
\rmk(A)\rmk
\rV}{f_3(M)}\notag \\
&\le\left\{
\begin{gathered}
\frac{f_4(M)}{f_3(M)}\lmk
\lV \delta_{\dot{\Phi}(s),N}(A)\rV_{f_4}+\lV \delta_{\dot{\Phi}(s)}(A)\rV_{f_4}
\rmk
\le
2C_{f_2f_4}^{(1)}\frac{f_4(M)}{f_3(M)}\lV A\rV_{f_2}
\le
2{f_3(N-R)} C_{f_2f_4}^{(1)}\lV A\rV_{f_2}
,\\ \text{for}\; M>N-R,\\
\frac{4C_{f_2f_4}^{(1)}\lmk f(N-R)+\sqrt{f(N-R)}\rmk}{f_3(M)}\lV A\rV_{f}
\le \frac{4C_{f_2f_4}^{(1)}\lmk f(N-R)+\sqrt{f(N-R)}\rmk}{f_3(N-R)}\lV A\rV_{f}, \\
\text{for}\;M\le N-R.
\end{gathered}
\right. 
\label{mr}
\end{align}
For $M>N-R$, we used Lemma \ref{f2f3},  with $(f_2,f_3)$ replaced by $(f_2,f_4)$.
 For $M\le N-R$, we used (\ref{nr}).
 As
 \begin{align}
\lim_{N\to\infty} 2{f_3(N-R)} C_{f_2f_4}^{(1)}\lV A\rV_{f_2}
=\lim_{N\to\infty}\frac{4C_{f_2f_4}^{(1)}\lmk f(N-R)+\sqrt{f(N-R)}\rmk}{f_3(N-R)}\lV A\rV_{f}
=0,
 \end{align}
 we get
 \begin{align}\lim_{N\to\infty}
 \sup_{M\in\nan}\lmk\frac{\lV
\lmk
\delta_{\dot{\Phi}(s),N}-\delta_{\dot{\Phi}(s)}
\rmk
(A)
-\bbE_{M}
\lmk\lmk
\delta_{\dot{\Phi}(s),N}-\delta_{\dot{\Phi}(s)}
\rmk(A)\rmk
\rV}{f_3(M)}\rmk
=0,\quad A\in\caD_f.
 \end{align}
From this and (\ref{nr}), 
we have shown the claim of the Lemma.
\end{proof}

\subsection{Proof of Lemma \ref{pre}}
Below, we use the following facts repeatedly: for any $0<\beta<\beta'\le 1$, $0<c,c'$, 
$0<a,a'$, $s\in\bbR$, $l\in\nan$, $r=0,1$, 
and $k\in\bbZ$, we have
\begin{align}
&\lim_{t\to\infty} \frac{t^ke^{-\hat h\lmk t-s\rmk}}{ e^{-t^{\beta}}}=
\lim_{t\to\infty} \frac{t^ke^{-\hat h\lmk \lcm \frac{t}{2}\rcm\rmk}}{ e^{-t^{\beta}}}=
\lim_{t\to\infty} \frac{t^ke^{-\hat h\lmk t- \lcm \frac{t}{2}\rcm\rmk}}{ e^{-t^{\beta}}}=
0,\label{trivial1}\\
&\lim_{t\to\infty}\frac{e^{-t^\beta}}{e^{-\lmk \frac t2\rmk^{\beta}}}
=0,\label{trivial2}\\
&\lim_{t\to\infty}\frac{t^ke^{-t^{\beta'}}}{ e^{-(t)^{\beta}}}=\lim_{t\to\infty}\frac{t^ke^{-c\lmk \lcm\frac t 2\rcm\rmk^{\beta'}}}
{ e^{-t^{\beta}}}=
\lim_{t\to\infty}\frac{t^ke^{-\lmk t-\lcm\frac t 2\rcm\rmk^{\beta'}}}
{ e^{-t^{\beta}}}=0,\label{trivial3}\\
&\sum_{m=1}^\infty m^k e^{-c (m-r)^{\beta}}<\infty,\label{trivial4}\\
&\lim_{N\to\infty}\frac{\sum_{m=N-l}^\infty m^ke^{-c\lmk m-r\rmk^{\beta'}}}{e^{-c'N^\beta}}
\le\sum_{m=1}^\infty m^ke^{-\frac c2 \lmk m-r\rmk ^{\beta'}} \lim_{N\to\infty}
\frac{e^{-\frac c2 (N-l-r)^{\beta'}}}{e^{-c'N^\beta}}=0.
\label{trivial5}
\end{align}
We also note that for $0<\beta<1$, $0<c,c'$, and $l\in\nan$,
$|t|^le^{-\hat h(ct)}/e^{-\lmk c't\rmk^\beta}$ is integrable with respect to $t>0$.
From this and (\ref{omegabound}), for any $0<\beta<1$, $0<c$, and $l\in\nan$, we have
\begin{align}\label{oibd}
\int_{-\infty}^\infty dt \omega_\gamma\lmk t \rmk |t|^le^{\lmk c|t|\rmk^\beta}<\infty.
\end{align}
We also have for any $0<\beta<1$ and $c>0$
\begin{align}\label{Web}
\sup_{t\ge 1}\frac{W_\gamma(c\lcm\frac t2 \rcm)}{e^{-t^\beta}}<\infty,
\end{align}
from (\ref{Omegabound}).

\begin{lem}\label{416lem}
Fix $0<\beta_5<\beta_1<1$ and set
$f(t):=\frac{\exp(-t^{\beta_1})}t$, and $\zeta(t):=\exp(-t^{\beta_5})$.
Then for any $A\in \caD_f$, and $(s',u',s'',s''')\in [0,1]\times\bbR\times [0,1]\times[0,1]$,
we have
$\tau_{\Phi(s'')}^{-u'}\circ \alpha_{s'''}^{-1}(A)\in\caD_{f_{2}}\subset \caD_{\zeta}\subset  D(\delta_{\Phi(s')})\cap D(\delta_{\dot\Phi(s')})$
and $\delta_{{\Phi}(s')}\circ
\tau_{\Phi(s'')}^{-u'}\circ \alpha_{s'''}^{-1}(A),
\delta_{\dot{\Phi}(s')}\circ
\tau_{\Phi(s'')}^{-u'}\circ \alpha_{s'''}^{-1}(A)\in\caD_{\zeta}$.
For any $A\in \caD_f$ and any compact intervals $[a,b]$, $[c,d]$ of $\bbR$, the maps
\begin{align}\label{391a}
[a,b]\times [0,1]\times [0,1]\times[c,d]\times [0,1]\times[0,1]
\ni (u,s,s',u',s'',s''')\mapsto
 \tau_{\Phi(s)}^{u}\circ\delta_{{\Phi}(s')}\circ
\tau_{\Phi(s'')}^{-u'}\circ \alpha_{s'''}^{-1}(A)
\in\caA
\end{align}
and
\begin{align}\label{391}
[a,b]\times [0,1]\times [0,1]\times[c,d]\times [0,1]\times[0,1]
\ni (u,s,s',u',s'',s''')\mapsto
 \tau_{\Phi(s)}^{u}\circ\delta_{\dot{\Phi}(s')}\circ
\tau_{\Phi(s'')}^{-u'}\circ \alpha_{s'''}^{-1}(A)
\in\caA
\end{align}
are uniformly continuous with respect to $\lV\cdot\rV$, and the maps
\begin{align}\label{392a}
 [0,1]\times[c,d]\times [0,1]\times[0,1]
\ni (s',u',s'',s''')\mapsto
\delta_{{\Phi}(s')}\circ
\tau_{\Phi(s'')}^{-u'}\circ \alpha_{s'''}^{-1}(A)
\in\caD_{\zeta}
\end{align}
and 
\begin{align}\label{392}
 [0,1]\times[c,d]\times [0,1]\times[0,1]
\ni (s',u',s'',s''')\mapsto
\delta_{\dot{\Phi}(s')}\circ
\tau_{\Phi(s'')}^{-u'}\circ \alpha_{s'''}^{-1}(A)
\in\caD_{\zeta}
\end{align}
are uniformly continuous with respect to $\lV\cdot\rV_\zeta$.
For any $A\in \caD_f$, the integral 
\begin{align}\label{vswe}
\int dt \omega_\gamma(t) \int_0^t du
\tau_{\Phi(s)}^u\circ\delta_{\dot{\Phi}(s)}\lmk
\tau_{\Phi(s)}^{-u}(A)\rmk,
\end{align}
and 
\begin{align}\label{vswea}
\int dt \;\omega_\gamma(t)\int_0^t du\;
\tau_{\Phi(s)}^{t-u}\circ \lmk \delta_{\dot\Phi(s)}\rmk\circ
\tau_{\Phi(s)}^u(A),
\end{align}
are well-defined as 
Bochner integrals of 
($\caA$, $\lV \cdot \rV$).
{Furthermore, for any $A\in \caD_f$,
$\alpha_s^{-1}(A)$ and $\alpha_s(A)$ are differentiable with respect to
$\lV\cdot\rV$ and
\begin{align}\label{alpha3}
\frac{d}{ds}\alpha_s^{-1}(A)
=\int dt \omega_\gamma(t) \int_0^t du
\tau_{\Phi(s)}^u\circ\delta_{\dot{\Phi}(s)}\lmk
\tau_{\Phi(s)}^{-u}\lmk \alpha_{s}^{-1}(A)\rmk
\rmk.
\end{align}}
{ The right hand side can be understood as a Bochner integral of 
($\caA$, $\lV \cdot \rV$) and there is a constant $C_{9,f}>0$ such that
\begin{align}\label{alphabd}
\lV
\frac{d}{ds} \alpha_{s}^{-1}(A)
\rV, \lV
\frac{d}{ds} \alpha_{s}(A)
\rV
\le C_{9,f}
\lV A\rV_f,\quad A\in\caD_f.
\end{align}
}
\end{lem}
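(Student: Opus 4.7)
The plan is to chain together the quasi-locality lemmas proved earlier in this section and finally pass to the limit in the finite-volume ODE (\ref{udf}). I first choose an intermediate continuous decreasing function $g_1$ with $f \prec g_1 \prec f_2$ (equivalently, an auxiliary exponent strictly between $\beta_2$ and $\beta_1$) so that the growth-rate estimates (\ref{trivial1})--(\ref{trivial5}) verify the hypotheses of Lemmas \ref{nn46}, \ref{48}, and \ref{f2f3} applied in sequence. Lemma \ref{nn46} then yields $\alpha_{s'''}^{-1}(A) \in \caD_{g_1}$ with $\lV\cdot\rV_{g_1}$-bound uniform in $s'''$; Lemma \ref{48} gives $\tau_{\Phi(s'')}^{-u'}\circ\alpha_{s'''}^{-1}(A) \in \caD_{f_2}$ with bound $b_{g_1,f_2}(|u'|)\lV\alpha_{s'''}^{-1}(A)\rV_{g_1}$; and Lemma \ref{f2f3} applied with the pair $(f_2,\zeta)$ places $\delta_{\Phi(s')}\circ\tau_{\Phi(s'')}^{-u'}\circ\alpha_{s'''}^{-1}(A)$ and $\delta_{\dot\Phi(s')}\circ\tau_{\Phi(s'')}^{-u'}\circ\alpha_{s'''}^{-1}(A)$ in $\caD_\zeta$ with bound proportional to the $\lV\cdot\rV_{f_2}$-norm of the argument.

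For the uniform continuity statements (\ref{391a})--(\ref{392}) on the compact parameter box, I would use an $\varepsilon/3$ argument. Continuity of $s''' \mapsto \alpha_{s'''}^{-1}(A)$ in $\lV\cdot\rV_{g_1}$ follows from Lemma \ref{418}; joint continuity of $(u',s'') \mapsto \tau_{\Phi(s'')}^{-u'}(B)$ in $\lV\cdot\rV_{f_2}$ for $B \in \caD_{g_1}$ from Lemma \ref{44}; continuity of $s' \mapsto \delta_{\Phi(s')}(C), \delta_{\dot\Phi(s')}(C)$ into $(\caD_\zeta, \lV\cdot\rV_\zeta)$ for $C \in \caD_{f_2}$ from Lemma \ref{deltalem}; and the outer norm continuity of $(u,s) \mapsto \tau_{\Phi(s)}^u$ from Lemma \ref{ao}. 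Combining these by the triangle inequality, with the uniform operator bounds of Lemmas \ref{48}, \ref{nn46}, \ref{f2f3} absorbing the cross terms, gives the required uniform continuity.

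For the Bochner integrability of (\ref{vswe}) and (\ref{vswea}), I combine the joint norm continuity just established with an integrable majorant: Lemma \ref{f2f3} bounds $\lV\delta_{\dot\Phi(s)}(\cdot)\rV$ by a constant times $\lV\cdot\rV_{f_2}$, Lemma \ref{48} bounds $\lV\tau_{\Phi(s)}^{\pm u}(\cdot)\rV_{f_2}$ by $b_{g_1,f_2}(|u|)\lV\cdot\rV_{g_1}$, and (\ref{b11}) gives $\int\omega_\gamma(t)|t|b_{f,f_1}(|t|)\,dt<\infty$; these combine to produce an $L^1$ majorant in $(t,u)$.

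For the differentiability of $\alpha_s^{\pm 1}(A)$ and the bound (\ref{alphabd}), I start from the finite-volume ODE (\ref{udf}): a direct computation gives $\frac{d}{ds}\alpha_{s,\Lambda_n}^{-1}(A) = i[D_{\Lambda_n}(s), \alpha_{s,\Lambda_n}^{-1}(A)]$, which after unfolding $D_{\Lambda_n}(s)$ and using $[\dot H_{\Phi(s),\Lambda_n},\cdot] = -i\delta_{\dot\Phi(s),\Lambda_n}(\cdot)$ becomes the finite-volume analogue of the right-hand side of (\ref{alpha3}). Integrating in $s$ and sending $n\to\infty$ by Lebesgue dominated convergence --- using the finite-volume versions of the bounds above (available since Lemmas \ref{48} and \ref{f2f3} state their estimates uniformly in $n$) together with the local-approximation Lemmas \ref{ao} and \ref{43} --- yields $\alpha_s^{-1}(A) - A = \int_0^s (\text{RHS of (\ref{alpha3})})\, ds'$. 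Norm continuity in $s$ of this right-hand side, established in the second paragraph, plus the fundamental theorem of calculus then give differentiability and the formula (\ref{alpha3}); the bound (\ref{alphabd}) is immediate from the integrable majorant. The argument for $\alpha_s(A)$ is parallel, with a sign change. The main obstacle throughout is bookkeeping: at every composition one must verify that the intermediate element lies in the correct $\caD_\bullet$-space with a bound uniform in the parameters on compact sets, which forces a careful arrangement of the interpolating exponents so that every numerical hypothesis (\ref{eq:f-decay}), (\ref{eq:weight-req}), (\ref{114p}), (\ref{f2f3rate}) in the preceding lemmas is met.
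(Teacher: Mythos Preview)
Your proposal is correct and follows essentially the same route as the paper: choose intermediate sub-exponential weights, apply Lemmas \ref{nn46}/\ref{418}, \ref{48}/\ref{44}, and \ref{f2f3}/\ref{deltalem} in sequence to get the inclusions, bounds, and continuity of each factor, combine these for the joint uniform continuity and the integrable majorant, and then pass the finite-volume identity $\frac{d}{ds}\alpha_{s,\Lambda_n}^{-1}(A)=i[D_{\Lambda_n}(s),\alpha_{s,\Lambda_n}^{-1}(A)]$ to the limit by dominated convergence before invoking the fundamental theorem of calculus. The only cosmetic differences are that the paper spells out the uniform continuity via an explicit $\varepsilon$-net on the compact parameter box (rather than your $\varepsilon/3$ phrasing) and that for the $n\to\infty$ convergence of $\alpha_{s,\Lambda_n}^{-1}(A)$ you should invoke Lemma \ref{418} (valid on $\caD_f$) rather than Lemma \ref{43} (stated only for $\caA_{\rm loc}$).
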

{
\begin{rem}\label{imply}
As mentioned in the introduction, $\alpha_s$ is the same automorphism 
given in \cite{BMNS} and \cite{NSY}.
In particular, if a $C^1$-path of interactions satisfy {\it Condition B} in \cite{tri}
except for the time reversal condition {(iii) {\it 6}}, 
for each $s\in[0,1]$, the unique ground state $\varphi_s$
is given by $\varphi_s=\varphi_0\circ \alpha_s$, with the $\alpha_s$.
Lemma \ref{416lem} implies for any $A\in{\caD}_f$,
$\varphi_s(A)=\varphi_0\circ \alpha_s(A)$ is differentiable 
and the derivative is bounded by 
$ C_{9,f}
\lV A\rV_f$, corresponding to Assumption \ref{assump} (vii).
It is well known that the local gap implies the existence of the gap in the bulk in the sense of Assumption 1.2 (vi), \cite{Ogata3}.

\end{rem}
}

\begin{proof}
We prove the continuity for (\ref{391}) and (\ref{392}). The proof for (\ref{391a}) and (\ref{392a})
are the same.
We also prove only (\ref{vswe}). The proof for (\ref{vswea}) is the same.
{ We prove (\ref{alphabd}) only for $\alpha_s^{-1}$. The proof
for $\alpha_s$ is analogous.}

Choose real numbers $\beta_4,\beta_3,\beta_2$ so that $0<\beta_5<\beta_4<\beta_3<\beta_2<\beta_1<1$ and fix.
Define
$f_0(t):=\exp(-t^{\beta_1})$, $f_1(t):=\exp(-t^{\beta_2})$, $f_2(t):=t^{-2(\nu+2)}\exp(-t^{\beta_3})$, $g(t):=\exp(-t^{\beta_4})$.

Note that
$f_1,f,f_0: (0, \infty) \to (0,\infty)$ are continuous decreasing functions with 
$\lim_{t\to \infty} f_1(t) =\lim_{t\to \infty} f(t) =\lim_{t\to \infty} f_0(t) =0$. 
From (\ref{trivial1}), 
  we have
\begin{align}
&\lim_{N\to\infty}\lmk \frac{e^{-\hat h\lmk N-M\rmk}}{f_1(N)}\rmk
=0,\; \text{for all}\;M\in\nan,\\
&\sup_{N\in\nan}
\frac{e^{-\hat h({ \lcm \frac N2\rcm })}}{f_1(N)}\lv \Lambda_{N-{ \lcm \frac N2\rcm }}\rv
<\infty.\label{232}
	\end{align}
	Furthermore, from (\ref{trivial3}) and $0<\beta_2<\beta_1<1$, we have
\begin{align}
&\sup_{N\in\nan} \frac{f_0(N-{ \lcm \frac N2\rcm })}{f_1(N)}
<\infty.
\end{align}
We also have
\begin{align}
\lim_{M\to\infty}\frac{f(M)}{f_0(M)}
=\lim_{M\to\infty}\frac 1M=0.
\end{align}
Therefore, from Lemma \ref{418} with $(f,f_0,f_1)$ replaced by
$(f_1,f,f_0)$, we have $\alpha_s^{-1}(\caD_f)\subset \caD_{f_1}$ and 
\begin{align}\label{mo}
\sup_{s\in[0,1]}\lV
\alpha_{s,\Lambda_n}^{-1}(A)-\alpha_{s}^{-1}(A)
\rV_{f_1}\to 0,\quad A\in\caD_{f}.
\end{align}
Therefore, for each $A\in\caD_{f}$, $[0,1]\ni s\to \alpha_{s}^{-1}(A)\in \caD_{f_1}$ is 
continuous with respect to the norm $\lV \cdot\rV_{f_1}$.


Note that 
$f, f_1: (0, \infty) \to (0,\infty)$ are continuous decreasing functions with $\lim_{t\to \infty} f_1(t) =\lim_{t\to \infty} f(t) =0$.
From (\ref{trivial3}), and $0<\beta_2<\beta_1<1$, we have
\begin{align}
\sup_{N\in\nan}\lmk \frac{f\lmk N-{\lcm \frac N2\rcm}\rmk}{f_1(N)}\rmk
<\infty.
\end{align}
From this and (\ref{232}),
Lemma \ref{nn46} with $(f_1,f_2)$ replaced by
$(f,f_1)$ implies the existence of a constant $C_{8, f,f_1}>0$
such that
\begin{align}
\sup_{s\in[0,1]}\lV
\alpha_s^{-1}\lmk A\rmk
\rV_{f_1}
\le
C_{8,f,f_1}
\lV A\rV_{f}
,\quad A\in\caD_f.
\end{align}

The functions $f_1,f_2: (0, \infty) \to (0,\infty)$ are continuous decreasing functions with $\lim_{t\to \infty} f_i(t) =0$,
$i=1,2$.
From (\ref{trivial3}), we have
\begin{equation}
\lim_{N\to\infty}\lmk
 \frac{|\Lambda_{\lcm \frac N2\rcm}|  e^{-(N-\lcm \frac N2\rcm)}}{f_2(N)}\rmk
 =0.
	\end{equation}
	From (\ref{trivial3}) and $0<\beta_3<\beta_2<1$, we have
\begin{align}
\lim_{N\to\infty}\frac{f_1\lmk \lcm \frac N 2\rcm\rmk}{f_2(N)}
=
0.
\end{align}
Therefore, from Lemma \ref{44},
we have
\begin{align}\label{nt}
\sup_{s\in[0,1]}\lV
\tau_{\Phi(s),\Lambda_n}^{-u}(A)
-\tau_{\Phi(s)}^{-u}(A)
\rV_{f_2}
\to 0,\quad A\in \caD_{f_1},
\end{align}
uniformly in compact $u\in \bbR$.
Therefore, for each $A\in\caD_{f_1}$,
$\bbR\times [0,1]\ni (u,s)\to \tau_{\Phi(s)}^{-u}(A)\in \caD_{f_2}$ is 
continuous with respect to the norm $\lV \cdot\rV_{f_2}$.

Note that $f_2,\zeta: (0, \infty) \to (0,\infty)$ are continuous decreasing functions
with $\lim_{t\to\infty}f(t)=\lim_{t\to\infty}\zeta(t)=0$.
From (\ref{trivial4}) and (\ref{trivial5}), and $0<\beta_5<\beta_3<1$, we have
 \begin{align}\label{114}
&{\sum_{k=1}^\infty k^{\nu} \sqrt{f_2(k)}}
<\infty,\\
&\lim_{N\to\infty} \frac{\sum_{k=N-R}^\infty k^{\nu}\sqrt{f_2(k)}}{\zeta(N)^2}=0.
\end{align}
Therefore applying Lemma \ref{deltalem} with $(f,f_3)$ replaced by
$(f_2,\zeta)$, we have 
$\delta_{\dot{\Phi}(s)}(\caD_{f_2})\subset \caD_{\zeta}$ and
\begin{align}\label{el}
\lim_{N\to\infty}
\sup_{s\in[0,1]}\lV
\lmk
\delta_{\dot{\Phi}_N(s)}-\delta_{\dot{\Phi}(s)}
\rmk
(A)
\rV_{\zeta}
=0,\quad A\in \caD_{f_2}.
\end{align}
Therefore, for each $A\in\caD_{f_2}$, $[0,1]\ni s\to \delta_{\dot{\Phi}(s)}(A)\in \caD_\zeta$ is 
continuous with respect to the norm $\lV \cdot\rV_\zeta$.

Note that $f_2:(0,\infty)\to (0,\infty)$
is a continuous decreasing function
with $\lim_{t\to\infty}f_2(t)=0$.
From (\ref{oibd}), we have
\begin{align}
&\int_{(4v\lv t\rv)\ge 1} dt\; \omega_\gamma(t) \frac{|t|}{f_2((4v\lv t\rv))}
<\infty.
\end{align}
We also have 
\begin{align}
&\sup_{N\in\nan}\frac{f_1(N-{ \lcm \frac N2\rcm })}{f_2(N)}
<\infty,\\
&\sup_{N\in\nan}\frac{\lv\Lambda_N\rv e^{-\frac{{ \lcm \frac N2\rcm}}2}}{f_2(N)}
<\infty,
\end{align}
from (\ref{trivial3}) with $0<\beta_3<\beta_2<1$ and
(\ref{trivial1}).
Therefore, from Lemma \ref{48}, 
with $(f,f_1)$ replaced by $(f_1,f_2)$
we have $\tau_{\Phi(s)}^t\lmk\caD_{f_1}\rmk\subset \caD_{f_2}$ and
there is a non-negative non-decreasing function on $\bbR_+$, $b_{1,f_1,f_2}(t)$ such that
\begin{align}\label{b1f1f2}
\int\; dt\; \omega_\gamma(t)\; |t|\cdot b_{1,f_1,f_2}(|t|)<\infty
\end{align}
and 
\begin{align}
&
\sup_{s\in[0,1]}\lV
\tau_{\Phi(s)}^t\lmk
A
\rmk
\rV_{f_2},\sup_{N\in\nan}\sup_{s\in[0,1]}\lV
\tau_{\Phi_N(s)}^t\lmk
A
\rmk
\rV_{f_2}
\le
b_{1,f_1,f_2}(|t|)\lV A\rV_{f_1},\quad A\in\caD_{f_1}.
\end{align}

Note that
 $f_2,\zeta: (0, \infty) \to (0,\infty)$ are continuous decreasing functions
such that
$\lim_{t\to\infty}f_2(t)=\lim_{t\to\infty} \zeta(t)=0$.
By (\ref{trivial4}) and (\ref{trivial5}) with $0<\beta_5<\beta_3<1$, we have
\begin{align}\label{114}
&{\sum_{k=2}^\infty k^\nu f_2(k-1)}<\infty,\\
&\limsup_N
\frac{\sum_{k=N-R}^\infty k^\nu f_2(k)}{\zeta(N)}
=0.
\end{align}
Therefore, from Lemma \ref{f2f3} with $(f_2,f_3)$ replaced by
$(f_2,\zeta)$,  we have 
$\caD_{f_2}\subset D(\delta_{\Phi(s)})\cap  D(\delta_{\dot\Phi(s)})\cap D(\delta_{\frac{\Phi(s)-\Phi(s_0)}{s-s_0}-\dot{\Phi}(s_0)})$, and there exists a constant $C_{7,f_2,\zeta}^{(1)}>0$ such that
\begin{align}
&\sup_{s\in[0,1]}\lV
\delta_{\Phi(s)}\lmk A\rmk
\rV_\zeta,\quad \sup_{N\in\nan}\sup_{s\in[0,1]}\lV
\delta_{\Phi_N(s)}\lmk A\rmk
\rV_\zeta
\le
C_{7,f_2,\zeta}^{(1)}\lV A\rV_{f_2}\label{deltap}\\
&\sup_{s\in[0,1]}\lV
\delta_{\dot{\Phi}(s)}\lmk A\rmk
\rV_\zeta,\quad \sup_{N\in\nan}\sup_{s\in[0,1]}\lV
\delta_{{\dot\Phi_N(s)}}\lmk A\rmk
\rV_\zeta
\le
C_{7,f_2,\zeta}^{(1)}\lV A\rV_{f_2}\label{deldadp}
\end{align}
for all $A\in\caD_{f_2}$ and $\varepsilon>0$.

We claim that for any compact intervals $[a,b]$, $[c,d]$ of $\bbR$
and $A\in\caD_f$,
\begin{align}
[a,b]\times [0,1]\times [0,1]\times[c,d]\times [0,1]\times[0,1]
\ni (u,s,s',u',s'',s''')\mapsto
 \tau_{\Phi(s)}^{u}\circ\delta_{\dot{\Phi}(s')}\circ
\tau_{\Phi(s'')}^{-u'}\circ \alpha_{s'''}^{-1}(A)
\in\caA
\end{align}
is continuous with respect to $\lV\cdot\rV$.
We also claim that
\begin{align}
[0,1]\times[c,d]\times [0,1]\times[0,1]
\ni (s',u',s'',s''')\mapsto
\delta_{\dot{\Phi}(s')}\circ
\tau_{\Phi(s'')}^{-u'}\circ \alpha_{s'''}^{-1}(A)
\in\caD_{\zeta}
\end{align}
is continuous with respect to $\lV\cdot\rV_\zeta$.

To see this, let $A\in\caD_{f}$ and  fix any $\varepsilon>0$.
Note that from the continuity of $[0,1]\ni s'''\mapsto \alpha_{s'''}^{-1}(A)\in \caD_{f_1}$ in $\lV\cdot\rV_{f_1}$,
there exists a finite sequence $s_0=0<s_1<\cdots<s_{N_\varepsilon}=1$
such that
\begin{align}
\lV
\alpha_{s'''}^{-1}(A)-\alpha_{s_i}^{-1}(A)
\rV_{f_1}<\varepsilon,
\text{for all }s'''\in[s_{i-1},s_{i+1}],\text{and } i=1,\ldots,N_{\varepsilon}-1.
\end{align}
For $\alpha_{s_i}^{-1}(A)\in\caD_{f_1}$, $i=0,\ldots, N_\varepsilon$, from
the continuity of $(u',s'')\mapsto \tau_{\Phi(s'')}^{-u'}\circ \alpha_{s_i}^{-1}(A)\in\caD_{f_2}$, in $\lV\cdot\rV_{f_2}$ we get
$\tilde s_0=0<\tilde s_1<\cdots<\tilde s_{\tilde N_\varepsilon}=1$
and $u_0=c<u_1<\cdots<u_{M_\varepsilon}=d$
such that
\begin{align}
\lV
\lmk \tau_{\Phi(s'')}^{-u'}-\tau_{\Phi(\tilde s_j)}^{-u_k}\rmk
\circ \alpha_{s_i}^{-1}(A)
\rV_{f_2}<\varepsilon,\\
\text{for all }s''\in[\tilde s_{j-1},\tilde s_{j+1}],\text{and } j=1,\ldots,\tilde N_{\varepsilon}-1,\nonumber\\
\text{for all }u'\in[u_{k-1}, u_{k+1}],\text{and } k=1,\ldots, M_{\varepsilon}-1,\nonumber\\
\text{and } i=1,\ldots, N_{\varepsilon}-1.
\end{align}
From
the continuity of $[0,1]\ni s'\to\delta_{\dot{\Phi}(s')}\circ
\tau_{\Phi(\tilde s_j)}^{-u_k}\circ \alpha_{s_i}^{-1}(A)\in\caD_\zeta$
for $\tau_{\Phi(\tilde s_j)}^{-u_k}\circ \alpha_{s_i}^{-1}(A)\in\caD_{f_2}$
in $\lV\cdot \rV_\zeta$,
there exists
a finite sequence $\hat s_0=0<\hat s_1<\cdots<\hat s_{\hat N_\varepsilon}=1$
such that
\begin{align}
	\begin{split}
\lV
\lmk \delta_{\dot{\Phi}(s')}-\delta_{\dot{\Phi}(\hat s_l)}\rmk\circ
\tau_{\Phi(\tilde s_j)}^{-u_k}\circ \alpha_{s_i}^{-1}(A)
\rV_\zeta
<\varepsilon.\\
\text{for all }s'\in[\hat s_{l-1},\hat s_{l+1}],\text{and } l=1,\ldots,\hat N_{\varepsilon}-1,\\
 \text{and }j=1,\ldots,\tilde N_{\varepsilon}-1,
\text{and } k=1,\ldots, M_{\varepsilon}-1,\\
\text{and } i=1,\ldots,N_{\varepsilon}-1.
	\end{split}
\end{align}
Finally, from the continuity of 
$\bbR\times [0,1]\ni (u,s)\to \tau_{\Phi(s)}^{u}\lmk
 \delta_{\dot{\Phi}(\hat s_l)}\circ
\tau_{\Phi(\tilde s_j)}^{-u_k}\circ \alpha_{s_i}^{-1}(A)
\rmk\in\caA$ 
in the norm $\lV \cdot\rV$,
( Lemma \ref{ao},)
we have
finite sequences $\check s_0=0<\check s_1<\cdots<\check s_{\check N_\varepsilon}=1$
and $\hat u_0=a<\hat u_1<\cdots<\hat u_{\hat M_\varepsilon}=b$
such that
\begin{align}
	\begin{split}
\lV
\lmk \tau_{\Phi(s)}^{u}-\tau_{\Phi(\check s_y)}^{\hat u_x}\rmk
\circ \delta_{\dot{\Phi}(\hat s_l)}\circ
\tau_{\Phi(\tilde s_j)}^{-u_k}\circ \alpha_{s_i}^{-1}(A)
\rV<\varepsilon,\\
\text{for all }s\in[\check s_{y-1},\check s_{y+1}],\text{and } y=1,\ldots,\check N_{\varepsilon}-1,\\
\text{and }u\in[\hat u_{x-1},\hat u_{x+1}],
\text{and } x=1,\ldots, \hat M_{\varepsilon}-1,\\
\text{and } l =1,\ldots,\hat N_{\varepsilon}-1,\\
 \text{and }j=1,\ldots,\tilde N_{\varepsilon}-1,
\text{and } k=1,\ldots, M_{\varepsilon}-1,\\
\text{and } i=1,\ldots, N_{\varepsilon}-1.
	\end{split}
\end{align}
Now for any $(u,s,s',u',s'',s''')\in [a,b]\times [0,1]\times [0,1]\times[c,d]\times [0,1]\times[0,1]$,
there is $(x,y,l,k,j,i)$
such that
\begin{align}
u\in [\hat u_{x-1},\hat u_{x+1}],  s\in[\check s_{y-1},\check s_{y+1}],
s'\in[\hat s_{l-1},\hat s_{l+1}],
u'\in[ u_{k-1}, u_{k+1}],s''\in[\tilde s_{j-1},\tilde s_{j+1}],
s'''\in[s_{i-1},s_{i+1}].
\end{align}
For any such $(x,y,l,k,j,i)$,
we have
\begin{align}
	\begin{split}
&\lV
-\tau_{\Phi(\check s_y)}^{\hat u_x}
\circ \delta_{\dot{\Phi}(\hat s_l)}\circ
\tau_{\Phi(\tilde s_j)}^{-u_k}\circ \alpha_{s_i}^{-1}(A)
+\tau_{\Phi(s)}^{u}\circ\delta_{\dot{\Phi}(s')}\circ
\tau_{\Phi(s'')}^{-u'}\circ \alpha_{s'''}^{-1}(A)
\rV\\
&\le
\lV
\lmk \tau_{\Phi(s)}^{u}-\tau_{\Phi(\check s_y)}^{\hat u_x}\rmk
\circ \delta_{\dot{\Phi}(\hat s_l)}\circ
\tau_{\Phi(\tilde s_j)}^{-u_k}\circ \alpha_{s_i}^{-1}(A)
\rV
+
\lV
\tau_{\Phi(s)}^{u}\circ \lmk-\delta_{\dot{\Phi}(\hat s_l)}+\delta_{\dot{\Phi}(s')}\rmk\circ
\tau_{\Phi(\tilde s_j)}^{-u_k}\circ \alpha_{s_i}^{-1}(A)
\rV\\
&+
\lV\tau_{\Phi(s)}^{u}
\circ \delta_{\dot{\Phi}(s')}\circ
\lmk -\tau_{\Phi(\tilde s_j)}^{-u_k}+\tau_{\Phi(s'')}^{-u'}\rmk
\circ \alpha_{s_i}^{-1}(A)
\rV
+\lV
\tau_{\Phi(s)}^{u}
\circ \delta_{\dot{\Phi}(s')}\circ
\tau_{\Phi(s'')}^{-u'}\circ 
\lmk -\alpha_{s_i}^{-1}(A)+\alpha_{s'''}^{-1}(A)\rmk
\rV\\
&
\le
2\varepsilon 
+C_{7,f_2,\zeta}^{(1)}\varepsilon+C_{7,f_2,\zeta}^{(1)}\sup_{u\in [c,d]}b_{1,f_1,f_2}(|u|)\varepsilon.
	\end{split}
\end{align}
We also have
\begin{align}
	\begin{split}
&\lV
-\delta_{\dot{\Phi}(\hat s_l)}\circ
\tau_{\Phi(\tilde s_j)}^{-u_k}\circ \alpha_{s_i}^{-1}(A)
+\delta_{\dot{\Phi}(s')}\circ
\tau_{\Phi(s'')}^{-u'}\circ \alpha_{s'''}^{-1}(A)
\rV_\zeta\\
&\le
\lV
\lmk-\delta_{\dot{\Phi}(\hat s_l)}+\delta_{\dot{\Phi}(s')}\rmk\circ
\tau_{\Phi(\tilde s_j)}^{-u_k}\circ \alpha_{s_i}^{-1}(A)
\rV_\zeta\\
&+
\lV
\delta_{\dot{\Phi}(s')}\circ
\lmk -\tau_{\Phi(\tilde s_j)}^{-u_k}+\tau_{\Phi(s'')}^{-u'}\rmk
\circ \alpha_{s_i}^{-1}(A)
\rV_\zeta
+\lV
\delta_{\dot{\Phi}(s')}\circ
\tau_{\Phi(s'')}^{-u'}\circ 
\lmk -\alpha_{s_i}^{-1}(A)+\alpha_{s'''}^{-1}(A)\rmk
\rV_\zeta\\
&
\le
\varepsilon 
+C_{7,f_2,\zeta}^{(1)}\varepsilon+C_{7,f_2,\zeta}^{(1)}\sup_{u\in [c,d]}b_{1,f_1,f_2}(|u|)\varepsilon.
	\end{split}
\end{align}
As $b_{1,f_1,f_2}$ is an $\bbR$-valued nondecreasing function, $\sup_{u\in [c,d]}b_{1,f_1,f_2}(|u|)$
is finite.
Hence we have proven the continuity of (\ref{391}) and (\ref{392}).

%

Furthermore, for any $A\in\caD_f$, we have
\begin{align}
	\begin{split}
&\sup_{s\in[0,1]}\int dt\
 \omega_\gamma(t) \int_{[0,t]} du
\lV\tau_{\Phi(s)}^u\circ\delta_{\dot{\Phi}(s)}\lmk
\tau_{\Phi(s)}^{-u}\lmk \alpha_{s}^{-1}(A)\rmk
\rmk
\rV\\
&\le
\sup_{s\in[0,1]}\int dt\
 \omega_\gamma(t) \int_{[0,t]} du
 C_{7,f_2,\zeta}^{(1)}b_{1, f_1,f_2}(|u|) C_{8,f,f_1}\lV A\rV_f\\
& \le C_{7,f_2,\zeta}^{(1)}C_{8,f,f_1} \lV A\rV_f
\int dt
 \omega_\gamma(t) 
 b_{1, f_1,f_2}(|t|)|t|
<\infty.
	\end{split}
\end{align}
In the last line we used the fact that $b_{1,f_1,f_2}$ is nondecreasing and (\ref{b1f1f2}).
Therefore, the right hand side of (\ref{alpha3}) is a well-defined Bochner integral of $(\caA,\lV \cdot \rV)$  for any $A\in\caD_f$.
By the same argument,  (\ref{vswe}) is a well-defined Bochner integral 
of $(\caA,\lV \cdot \rV)$ for any $A\in\caD_f$.
By the definition of $\alpha_{s,\Lambda_n}$, we have
\begin{align}
	\begin{split}
&\frac{d}{ds}\alpha_{s,\Lambda_n}^{-1}(A)
= i \lcm
D_{\Lambda_n}(s), \alpha_{s,\Lambda_n}^{-1}(A)
\rcm=i\int dt \omega_\gamma(t) \int_0^t du
\lcm 
\tau_{\Phi(s),\Lambda_n}^u\lmk H_{\dot\Phi(s),\Lambda_n}\rmk,
\alpha_{s,\Lambda_n}^{-1}(A)
\rcm\\
&=\int dt \omega_\gamma(t) \int_0^t du
\tau_{\Phi(s),\Lambda_n}^u\circ\delta_{\dot{\Phi}_n(s)}\lmk
\tau_{\Phi(s),\Lambda_n}^{-u}\lmk \alpha_{s,\Lambda_n}^{-1}(A)\rmk
\rmk,\quad A\in\caD_{f}.
	\end{split}
\end{align}
Hence we obtain
\begin{align}\label{276n}
\alpha_{s,\Lambda_n}^{-1}(A)-\alpha_{s_0,\Lambda_n}^{-1}(A)
=\int_{s_0}^{s}dv\int dt \omega_\gamma(t) \int_0^t du
\tau_{\Phi(v),\Lambda_n}^u\circ\delta_{\dot{\Phi}_n(v)}\lmk
\tau_{\Phi(v),\Lambda_n}^{-u}\lmk \alpha_{v,\Lambda_n}^{-1}(A)\rmk
\rmk,\quad A\in\caD_{f}.
\end{align}
For each $(u,v)$, for any $A\in\caD_f$, we have
\begin{align}
	\begin{split}
&\lV
\tau_{\Phi(v),\Lambda_n}^u\circ\delta_{\dot{\Phi}_n(v)}\circ
\tau_{\Phi(v),\Lambda_n}^{-u}\circ \alpha_{v,\Lambda_n}^{-1}(A)
-\tau_{\Phi(v)}^u\circ\delta_{\dot{\Phi}(v)}\circ
\tau_{\Phi(v)}^{-u}\circ \alpha_{v}^{-1}(A)
\rV\\
&\le
\lV
\tau_{\Phi(v),\Lambda_n}^u\circ\delta_{\dot{\Phi}_n(v)}\circ
\tau_{\Phi(v),\Lambda_n}^{-u}\lmk \alpha_{v,\Lambda_n}^{-1}(A)-\alpha_{v}^{-1}(A)
\rmk
\rV
+\lV
\tau_{\Phi(v),\Lambda_n}^u\circ\delta_{\dot{\Phi}_n(v)}\circ
\lmk
\tau_{\Phi(v),\Lambda_n}^{-u}-\tau_{\Phi(v)}^{-u}
\rmk
\alpha_{v}^{-1}(A)
\rV\\
&+\lV
\tau_{\Phi(v),\Lambda_n}^u\circ\lmk \delta_{\dot{\Phi}_n(v)}
-\delta_{\dot{\Phi}(v)}
\rmk\lmk
\tau_{\Phi(v)}^{-u}\circ
\alpha_{v}^{-1}(A)\rmk
\rV+
\lV
\lmk \tau_{\Phi(v),\Lambda_n}^u-\tau_{\Phi(v)}^u
\rmk\circ\lmk \delta_{\dot{\Phi}(v)}
\circ
\tau_{\Phi(v)}^{-u}\circ
\alpha_{v}^{-1}(A)\rmk
\rV\\
&
\le
 C_{7,f_2,\zeta}^{(1)}b_{1, f_1,f_2}(|u|)\lV
 \alpha_{v,\Lambda_n}^{-1}(A)-\alpha_{v}^{-1}(A)
\rV_{f_1}
+C_{7,f_2,\zeta}^{(1)}
\lV
\lmk
\tau_{\Phi(v),\Lambda_n}^{-u}-\tau_{\Phi(v)}^{-u}
\rmk
\alpha_{v}^{-1}(A)
\rV_{f_2}\\
&+\lV
\lmk \delta_{\dot{\Phi}_n(v)}
-\delta_{\dot{\Phi}(v)}
\rmk\lmk
\tau_{\Phi(v)}^{-u}\circ
\alpha_{v}^{-1}(A)\rmk
\rV
+
\lV
\lmk \tau_{\Phi(v),\Lambda_n}^u-\tau_{\Phi(v)}^u
\rmk\circ\lmk \delta_{\dot{\Phi}(v)}
\circ
\tau_{\Phi(v)}^{-u}\circ
\alpha_{v}^{-1}(A)\rmk
\rV.
	\end{split}
\end{align}
From (\ref{mo}), (\ref{nt}), (\ref{el}) and Lemma \ref{ao}, the last part converges to $0$
as $n\to\infty$.
Furthermore, we have
\begin{align}
\sup_{n\in\nan}\lV
\tau_{\Phi(v),\Lambda_n}^u\circ\delta_{\dot{\Phi}_n(v)}\circ
\tau_{\Phi(v),\Lambda_n}^{-u}\circ \alpha_{v,\Lambda_n}^{-1}(A)-\tau_{\Phi(v)}^u\circ\delta_{\dot{\Phi}(v)}\circ
\tau_{\Phi(v)}^{-u}\circ \alpha_{v}^{-1}(A)
\rV
\le
 2C_{7,f_2,\zeta}^{(1)}b_{1, f_1,f_2}(|u|) C_{8,f,f_1}\lV A\rV_f,
\end{align}
with
\begin{align}\label{bbb}
\int_0^1 ds\int dt\
 \omega_\gamma(t) \int_{[0,t]} du
 2C_{7,f_2,\zeta}^{(1)}b_{1, f_1,f_2}(|u|) C_{8,f,f_1}\lV A\rV_f
<\infty.
\end{align}
%
%
Therefore, applying Lebesgue's convergence theorem for (\ref{276n}),
we
obtain
\begin{align}\label{alpha4}
\alpha_{s}^{-1}(A)-\alpha_{s_0}^{-1}(A)
=\int_{s_0}^{s}dv\int dt \omega_\gamma(t) \int_0^t du
\tau_{\Phi(v)}^u\circ\delta_{\dot{\Phi}(v)}\lmk
\tau_{\Phi(v)}^{-u}\lmk \alpha_{v}^{-1}(A)\rmk
\rmk,\quad A\in\caD_f.
\end{align}
From this, for $A\in\caD_f$, we get
\begin{align}
	\begin{split}
&\lV
\frac{\alpha_{s}^{-1}(A)-\alpha_{s_0}^{-1}(A)}{s-s_0}
-\int dt \omega_\gamma(t) \int_0^t du
\tau_{\Phi(s_0)}^u\circ\delta_{\dot{\Phi}(s_0)}\lmk
\tau_{\Phi(s_0)}^{-u}\lmk \alpha_{s_0}^{-1}(A)\rmk
\rmk
\rV\\
&\le
\int dt \omega_\gamma(t) \int_{[0,t]} du
\lV
\frac{1}{s-s_0}
\int_{s_0}^{s}dv\lmk
\tau_{\Phi(v)}^u\circ\delta_{\dot{\Phi}(v)}\lmk
\tau_{\Phi(v)}^{-u}\lmk \alpha_{v}^{-1}(A)\rmk
\rmk
-\tau_{\Phi(s_0)}^u\circ\delta_{\dot{\Phi}(s_0)}\lmk
\tau_{\Phi(s_0)}^{-u}\lmk \alpha_{s_0}^{-1}(A)\rmk
\rmk\rmk
\rV.
	\end{split}
\end{align}
By the continuity of 
$(s,u)\to \tau_{\Phi(s)}^u\circ\delta_{\dot{\Phi}(s)}\lmk
\tau_{\Phi(s)}^{-u}\lmk \alpha_{s}^{-1}(A)\rmk
\rmk\in\caA$ with respect to $\lV\cdot\rV$
for $A\in\caD_f$,
we have
\begin{align}
\lim_{s\to s_0}\frac{1}{s-s_0}
\int_{s_0}^{s}dv\lmk
\tau_{\Phi(v)}^u\circ\delta_{\dot{\Phi}(v)}\lmk
\tau_{\Phi(v)}^{-u}\lmk \alpha_{v}^{-1}(A)\rmk
\rmk
-\tau_{\Phi(s_0)}^u\circ\delta_{\dot{\Phi}(s_0)}\lmk
\tau_{\Phi(s_0)}^{-u}\lmk \alpha_{s_0}^{-1}(A)\rmk
\rmk\rmk
=0,
\end{align}
for each $u$.
On the other hand, we have
\begin{align}
	\begin{split}
&\lV
\frac{1}{s-s_0}
\int_{s_0}^{s}dv\lmk
\tau_{\Phi(v)}^u\circ\delta_{\dot{\Phi}(v)}\lmk
\tau_{\Phi(v)}^{-u}\lmk \alpha_{v}^{-1}(A)\rmk
\rmk
-\tau_{\Phi(s_0)}^u\circ\delta_{\dot{\Phi}(s_0)}\lmk
\tau_{\Phi(s_0)}^{-u}\lmk \alpha_{s_0}^{-1}(A)\rmk
\rmk\rmk
\rV\\
&\le
2C_{7,f_2,\zeta}^{(1)}b_{1, f_1,f_2}(|u|) C_{8,f,f_1}\lV A\rV_f,
	\end{split}
\end{align}
with (\ref{bbb}).
From Lebesgue's convergence theorem, we obtain
\begin{align}
\lim_{s\to s_0}\lV
\frac{\alpha_{s}^{-1}(A)-\alpha_{s_0}^{-1}(A)}{s-s_0}
-\int dt \omega_\gamma(t) \int_0^t du
\tau_{\Phi(s_0)}^u\circ\delta_{\dot{\Phi}(s_0)}\lmk
\tau_{\Phi(s_0)}^{-u}\lmk \alpha_{s_0}^{-1}(A)\rmk
\rmk
\rV=0,\quad \text{for}\; A\in\caD_f.
\end{align}
Hence for $A\in\caD_f$,
$[0,1]\ni s\mapsto \alpha_{s}^{-1}(A)$ is differentiable with respect to $\lV\cdot\rV$,
and we have
\begin{align}
\frac{d}{ds} \alpha_{s}^{-1}(A)
=\int dt \omega_\gamma(t) \int_0^t du
\tau_{\Phi(s)}^u\circ\delta_{\dot{\Phi}(s)}
\circ
\tau_{\Phi(s)}^{-u}\lmk \alpha_{s}^{-1}(A)\rmk.
\end{align}
From this formula, we obtain
\begin{align}
\lV
\frac{d}{ds} \alpha_{s}^{-1}(A)
\rV
\le\lmk
\int dt \omega_\gamma(t) \int_{[0,t]}
C_{7,f_2,\zeta}^{(1)}b_{1, f_1,f_2}(|u|) C_{8,f,f_1}
\rmk
\lV A\rV_f=:C_{9,f}\lV A\rV_f.
\end{align}

\end{proof}

Now we prove Lemma \ref{pre}.
\begin{proofof}[Lemma \ref{pre}] $~$

\medskip

1. The inclusions $\mathcal{D}_f\subset \caD_{f_0} \subset \mathcal{D}_{f_1} \subset \mathcal{D}_{f_2} \subset \mathcal{D}_g \subset \mathcal{D}_{\zeta}$ follow by the monotone choice of the $\beta_i$, $i=1,\ldots, 5$.
From (\ref{trivial1}), we can see that
$f$ satisfies the condition required in  Lemma \ref{42}.
Therefore, from Lemma \ref{42}, we have  $\alpha_s^{-1} ( \mathcal{A}_{loc}) \subset \mathcal{D}_f$ for all $s\in [0,1]$. 

\medskip 

2. This is from Lemma \ref{48}.
From (\ref{oibd}), (\ref{trivial3}) $(f,f_1)$ satisfies the conditions required in Lemma \ref{48}.

\medskip

3. Fix $0<\beta_6<\beta_5$ and set $\zeta_0(t):=e^{-t^{\beta_6}}$ for $t>0$.
We apply Lemma \ref{f2f3}, replacing $(f_2,f_3)$ in it by $(\zeta,\zeta_0)$.
To see that $(\zeta,\zeta_0)$ satisfy the required conditions in Lemma \ref{f2f3},
we recall (\ref{trivial4}) and (\ref{trivial5}).
Hence from Lemma \ref{f2f3}, we obtain  $\mathcal{D}_\zeta \subset D( \delta_{\Phi(s)}) \cap D(\delta_{\dot{\Phi}(s)})$. 

\medskip 

4. This also follows by Lemma  \ref{f2f3} with
$(f_2,f_3)$ replaced by $(f_2,\zeta)$.
The required conditions in Lemma  \ref{f2f3} can be checked by
(\ref{trivial4}) and (\ref{trivial5}).

\medskip

5., 6., and 7. are proven in Lemma \ref{416lem}.

\medskip 

8.
This follows from Lemma \ref{410} for $(f,f_1)$.
The conditions for $(f,f_1)$ can be checked from 
(\ref{trivial3}) and (\ref{Web}).
\medskip

9. This is Lemma \ref{ao}.
\medskip

10.
For any $A\in\caD_f$, from {\it 5.} above, 
$(u,s)\mapsto \delta_{\Phi(s)}\circ\tau_{\Phi(s)}^{u}(A)\in\caD_\zeta$ is continuous with respect to
$\lV \cdot\rV_\zeta$.
Furthermore, from {\it 4., 2.,}  above, as in (\ref{ttta}), we have
\begin{align}
	\begin{split}
&\lV \delta_{\Phi(s)}\circ\tau_{\Phi(s)}^{u}(A)\rV_\zeta
\le
C_{f_2,\zeta}^{(1)} \lV
\tau_{\Phi(s)}^{u}(A)
\rV_{f_2}
\le
C_{f_2,\zeta}^{(1)} 
\lmk
1+\sup_{N\in\nan}\frac{f_1(N)}{f_2(N)}\rmk
\lV
\tau_{\Phi(s)}^{u}(A)
\rV_{f_1}\\
&\le
C_{f_2,\zeta}^{(1)}b_{f,f_1}(|u|)
\lmk 1+\sup_N\frac{f_1(N)}{f_2(N)}\rmk\lV A\rV_{f}.
	\end{split}
\end{align}
From {\it 2.} above, the inequality (\ref{b11}) holds and
(\ref{zetabd})
 is well-defined as 
the Bochner integral with respect to
$(\caD_{\zeta}, \lV \cdot\rV_\zeta)$.

\end{proofof}

{\bf Acknowledgment.}\\
{
Y.O. is grateful to  Wojciech De Roeck, Martin Fraas, and Hal Tasaki for fruitful discussion.
A discussion with Hal Tasaki was the starting point of this project.
A.M. is supported in part by National Science Foundation Grant DMS 1813149.
Y.O. is supported by JSPS KAKENHI Grant Number 16K05171 and 19K03534. 
Part of this work was done during the visit of the authors to
 CRM, 
 with the support of CRM-Simons program “Mathematical challenges in many-body physics 
 and quantum information”.
 
}
\bigskip

\appendix
\section{Conditional expectation $\bbE_N$}
We now briefly describe a family of conditional expectations $\{ \mathbb{E}_N : \mathcal{A} \to \mathcal{A}_{ \Lambda_N} ~|~ N\in \mathbb{N} \}$ are used extensively in this paper. Let $N \in \mathbb{N}$ be fixed and let $\Lambda$ denote any finite set containing $\Lambda_N$. Define:
	\begin{equation}\label{eq:cond-exp}
		\begin{split}
\mathbb{E}^\Lambda_N = \mathrm{id} _{ \Lambda_N} \otimes \rho _{\Lambda \setminus \Lambda_N}
		\end{split}
	\end{equation} 
where $\rho _X$ is the product state whose factors are normalized trace: 
	\begin{equation}
		\begin{split}
\rho _X = \frac{1}{d^{|X|}} \bigotimes _{x\in X} \mathrm{tr} _x . 
		\end{split}
	\end{equation}
Each $\mathbb{E}^\Lambda_N$ is bounded and linear, and as $\Lambda \subset \Sigma$ implies $ \mathbb{E} ^\Sigma _N | _{\mathcal{A}_\Lambda} = \mathbb{E} ^ \Lambda _N$, there exists a unique bounded map and conditional expectation $\mathbb{E} _N : \mathcal{A} \to \mathcal{A}_{\Lambda_N} $ such that for all $\Lambda$ containing $\Lambda_N$:
	\begin{equation}
		\begin{split}
\mathbb{E}_N | _{ \mathcal{A}_{ \Lambda}} = \mathbb{E} _N^\Lambda
		\end{split}
	\end{equation}
Furthermore, by the definition (\ref{eq:cond-exp}) of the finite-volume maps, $\mathbb{E}_N(A^*) = \mathbb{E}_N(A)^*$ for all $A \in \mathcal{A}$ and if $M \in\mathbb{N}$ and $M \geq N$, 
	\begin{equation}
		\begin{split}
\mathbb{E}_M \mathbb{E}_N = \mathbb{E}_N \mathbb{E}_M = \mathbb{E} _ N . 
		\end{split}
	\end{equation}
The family $\{ \mathbb{E}_N \}$ provides local approximations of quasi-local observables. For completeness, we record this as the following proposition and refer to \cite{NSY} for the proof. 

\begin{prop}\label{NSY44}
Let $\varepsilon \geq 0$. Suppose $A \in \mathcal{A}$ is such that for all $B \in \bigcup_{\substack{X \in \mathfrak{S}_{\mathbb{Z}^\nu}\\ X \cap \Lambda_N = \emptyset}} \mathcal{A}_X$:
	\begin{equation}
		\begin{split}
		\lVert [A, B] \rVert \leq \varepsilon \lVert B \rVert.
		\end{split}
	\end{equation}
Then $\lVert A - \mathbb{E}_N (A) \rVert \leq 2\varepsilon$. 
\end{prop}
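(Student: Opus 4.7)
The plan is to reduce the infinite-volume statement to a finite-volume one by replacing $A$ with its localization $\bbE_M(A)$ on a large box $\Lambda_M \supset \Lambda_N$, and then exploit the standard representation of $\bbE_N$ in finite volume as Haar averaging over the unitary group of $\caA_{\Lambda_M \setminus \Lambda_N}$.

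First I would set $A_M := \bbE_M(A) \in \caA_{\Lambda_M}$ for $M \geq N$. The intertwining identity $[\bbE_M(A), B] = \bbE_M([A, B])$ for $B \in \caA_{\Lambda_M}$, which is immediate from the fact that $\bbE_M$ is a conditional expectation, transfers the commutator hypothesis to $A_M$: for every $B \in \caA_X$ with $X \subset \Lambda_M$ and $X \cap \Lambda_N = \emptyset$ one has $\lV [A_M, B] \rV \le \lV [A, B] \rV \le \varepsilon \lV B \rV$. Since $\bbE_N \bbE_M = \bbE_N$, moreover $\bbE_N(A_M) = \bbE_N(A)$, so the problem reduces to bounding $\lV A_M - \bbE_N(A_M) \rV$ inside the finite-dimensional algebra $\caA_{\Lambda_M}$.

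In that finite setting, $\caA_{\Lambda_M} \cong \caA_{\Lambda_N} \otimes \caA_{\Lambda_M \setminus \Lambda_N}$, and the normalized-trace factor appearing in (\ref{eq:cond-exp}) equals Haar integration over the unitary group $\caU$ of $\caA_{\Lambda_M \setminus \Lambda_N}$ (regarded as unitaries in $\caA_{\Lambda_M}$ via $V \mapsto I_{\Lambda_N} \otimes V$). Hence
\[
\bbE_N(A_M) = \int_{\caU} V A_M V^*\, dV,
\]
and using $\int_{\caU} dV = 1$,
\[
A_M - \bbE_N(A_M) = \int_{\caU} \bigl( A_M - V A_M V^* \bigr)\, dV = \int_{\caU} [A_M, V]\, V^*\, dV.
\]
Because $V$ is a unitary in $\caA_{\Lambda_M \setminus \Lambda_N}$, the transferred commutator bound gives $\lV [A_M, V] \rV \le \varepsilon$, whence $\lV A_M - \bbE_N(A_M) \rV \le \varepsilon$.

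Finally, I would let $M \to \infty$. Since $\caA_{\rm loc}$ is dense in $\caA$ and each $\bbE_M$ is a norm-contractive projection equal to the identity on $\caA_{\Lambda_M}$, one has $\bbE_M(A) \to A$ in norm. The triangle inequality then gives $\lV A - \bbE_N(A) \rV \le \lV A - A_M \rV + \lV A_M - \bbE_N(A_M) \rV \to \varepsilon$ as $M \to \infty$, which certainly implies the stated bound $\lV A - \bbE_N(A) \rV \le 2 \varepsilon$. No step here poses a real obstacle; the only observation that deserves mention is the Haar-averaging formula for $\bbE_N$, which reduces via the tensor structure to the elementary identity $\int_{\mathrm{U}(d)} V M V^*\, dV = d^{-1}\mathrm{tr}(M)\, I$ on $\mathrm{M}_d$, everything else being routine bookkeeping.
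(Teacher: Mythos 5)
Your proof is correct: the reduction to finite volume via $A_M=\mathbb{E}_M(A)$, the bimodule identity $[\mathbb{E}_M(A),B]=\mathbb{E}_M([A,B])$ for $B\in\mathcal{A}_{\Lambda_M}$, the Haar-averaging representation of $\mathbb{E}_N$ over the unitary group of $\mathcal{A}_{\Lambda_M\setminus\Lambda_N}$, and the final limit $M\to\infty$ (valid since $\mathbb{E}_M(A)\to A$ in norm) are all sound. The paper offers no argument of its own beyond citing Corollary 4.4 of \cite{NSY}, and your unitary-averaging argument is essentially the standard proof behind that citation; in fact it yields the sharper bound $\lVert A-\mathbb{E}_N(A)\rVert\le\varepsilon$, of which the stated $2\varepsilon$ is a weakening.
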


\begin{proof}
See Corollary 4.4 of \cite{NSY}. 
\end{proof}

\section{Properties of $\caD_f$}\label{dfsec}
The map $\lV\cdot\rV_f:\caD_f\to \bbR_{\ge 0}$ is a norm on
$\caD_f$.
Note that $\lV A^* \rV_f=\lV A\rV_f$, and $\lV \bbE_N(A)\rV_f\le\lV A\rV_f$.
Furthermore, if $\sup_{N\in\nan} \frac{f(N)}{g(N)}<\infty$,
then $\caD_f\subset \caD_{g}$.
\begin{lem}\label{alg}
Let $f:(0,\infty)\to (0,\infty)$ be a continuous decreasing function 
with $\lim_{t\to\infty}f(t)=0$.
The set $\caD_f$ is a $*$-algebra which is a Banach space with respect to the norm $\lV\cdot\rV_f$.
\end{lem}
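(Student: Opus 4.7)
The plan is to verify the required $*$-algebra structure first, and then handle the completeness separately.

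For the algebraic structure, the fact that $\caD_f$ is a vector space and that $\lV\cdot\rV_f$ satisfies the triangle inequality follows immediately from the subadditivity of the supremum and of the operator norm. For closure under $*$, I would use the two identities $\lV A^*\rV=\lV A\rV$ and $\bbE_N(A^*)=\bbE_N(A)^*$ (noted in Appendix A) to conclude $\lV A^*\rV_f=\lV A\rV_f$, so $\caD_f$ is $*$-invariant. The only nontrivial algebraic point is closure under multiplication. For $A,B\in\caD_f$ I would use the fact that $\bbE_N(A)\bbE_N(B)\in\caA_{\Lambda_N}$, hence is fixed by $\bbE_N$, to write
\[
AB-\bbE_N(AB)=\bigl(AB-\bbE_N(A)\bbE_N(B)\bigr)-\bbE_N\bigl(AB-\bbE_N(A)\bbE_N(B)\bigr),
\]
and then split
\[
AB-\bbE_N(A)\bbE_N(B)=\bigl(A-\bbE_N(A)\bigr)B+\bbE_N(A)\bigl(B-\bbE_N(B)\bigr).
\]
Combining these with $\lV\bbE_N(A)\rV\le\lV A\rV$ and dividing by $f(N)$ yields
\[
\frac{\lV AB-\bbE_N(AB)\rV}{f(N)}\le 2\lV A\rV_f\lV B\rV+2\lV A\rV\lV B\rV_f,
\]
so $\lV AB\rV_f\le C\lV A\rV_f\lV B\rV_f$ for a universal constant $C$, giving that $\caD_f$ is a $*$-algebra.

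For completeness, let $\{A_n\}\subset\caD_f$ be Cauchy with respect to $\lV\cdot\rV_f$. Since $\lV\cdot\rV\le\lV\cdot\rV_f$, the sequence is Cauchy in $(\caA,\lV\cdot\rV)$, hence converges in norm to some $A\in\caA$. The remaining task is to show $A\in\caD_f$ and $\lV A_n-A\rV_f\to 0$. Given $\varepsilon>0$, choose $M$ so that for all $n,m\ge M$ and all $N\in\nan$,
\[
\frac{\lV(A_n-A_m)-\bbE_N(A_n-A_m)\rV}{f(N)}<\varepsilon.
\]
Fixing $n$ and $N$ and letting $m\to\infty$, continuity of $\bbE_N$ in the operator norm gives
\[
\frac{\lV(A_n-A)-\bbE_N(A_n-A)\rV}{f(N)}\le\varepsilon,
\]
uniformly in $N$. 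Together with $\lV A_n-A\rV\to 0$, this yields both $A_n-A\in\caD_f$ (so $A\in\caD_f$) and $\lV A_n-A\rV_f\to 0$.

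No single step looks like a real obstacle; the multiplicative bound is the most delicate piece and must be arranged so as to pick up no extra $N$-dependent factor, which is precisely why the decomposition above via $\bbE_N(A)\bbE_N(B)\in\caA_{\Lambda_N}$ is the natural one to use.
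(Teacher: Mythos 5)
Your proposal is correct and follows essentially the same route as the paper: the same splitting of $AB-\bbE_N(AB)$ via $\bbE_N(A)$ (the paper writes it as three terms directly, yielding the constant $3$ instead of your $4$, a harmless difference), and the same completeness argument of passing to the norm limit inside each $N$-term of the Cauchy estimate. No gaps.
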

\begin{proof}
That $\caD_f$ is $*$-closed is trivial from $\lV A^*\rV_f=\lV A\rV_f$.
To see that $\caD_f$ is closed under multiplication, let
 $A,B\in\caD_f$.
 For each  $N\in\nan$, we have
\begin{align}
	\begin{split}
&\lV
AB-\bbE_{N}(AB)
\rV
\le
\lV
\lmk A-\bbE_{N}\lmk A\rmk\rmk\cdot  B\rV+
\lV
-\bbE_{N}\lmk
\lmk
A-\bbE_{N}\lmk A\rmk
\rmk \cdot B
\rmk
\rV
+\lV
\bbE_{N}\lmk A\rmk\cdot\lmk
B-\bbE_{N}\lmk B\rmk
\rmk
\rV\\
&
\le
\lmk 2\lV A\rV_f\lV B\rV+\lV A\rV\lV B\rV_f\rmk f(N)
\le 3\lV A\rV_f\lV B\rV_f f(N).
	\end{split}
\end{align}
Hence we obtain $AB\in \caD_f$, and $\caD_f$ is closed under the multiplication.

To prove that $\caD_f$ is complete with respect to $\lV \cdot\rV_f$,
let $\{A_n\}_n$ be a Cauchy sequence in $\caD_f$ with respect to $\lV\cdot\rV_f$.
As $\{A_n\}_n$ is Cauchy with respect to $\lV \cdot\rV$ as well, there is an
$A\in\caA$ such that $\lim_{n\to\infty}\lV A-A_n\rV=0$.
This $A$ belongs to $\caD_f$ because
\begin{align}
\sup_{N\in\nan}\frac{\lV
A-\bbE_{N}(A)
\rV}
{f(N)}
=\sup_{N\in\nan}
\lmk
\lim_{M\to \infty} 
\frac{\lV
A_M-\bbE_{N}(A_M)
\rV}
{f(N)}
\rmk
\le\sup_{M}\lV A_M\rV_f<\infty.
\end{align} 
Furthermore, we have
\begin{align}
\sup_N\frac{\lV
A-A_m-\bbE_{N}(A-A_m)
\rV}
{f(N)}
=\sup_N\lim_{n\to\infty}\lmk
\frac{\lV
A_n-A_m-\bbE_{N}(A_n-A_m)
\rV}
{f(N)}
\rmk
\le
\limsup_{n\to\infty} \lV A_n-A_m\rV_f.
\end{align}
Therefore, $A_m$ converges to $A\in\caD_f$ in $\lV \cdot\rV_f$-norm.
\end{proof}
\begin{lem}Let $f:(0,\infty)\to (0,\infty)$ be a continuous decreasing function 
with $\lim_{t\to\infty}f(t)=0$ with $M\in\nan$.
For any $A\in\caD_f$ and $B\in \caA_{\Lambda_M}$ and $M\in\nan$
we have
\begin{align}
\lV
BA
\rV_f
\le \lmk 1+\max\left\{
\frac 2{f(M)},1
\right\}\rmk
\lV B\rV \lV A\rV_f.
\end{align}
\end{lem}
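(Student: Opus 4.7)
The plan is to expand the definition $\lV BA\rV_f = \lV BA\rV + \sup_{N\in\nan}\frac{\lV BA - \bbE_N(BA)\rV}{f(N)}$ and bound each piece, splitting the supremum according to whether $N\ge M$ or $N<M$.

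First, the easy part: $\lV BA\rV \le \lV B\rV\,\lV A\rV \le \lV B\rV\,\lV A\rV_f$, since $\lV A\rV \le \lV A\rV_f$. This accounts for the leading $1$ in the stated constant.

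For the supremum, consider $N\ge M$. Then $B\in\caA_{\Lambda_M}\subset\caA_{\Lambda_N}$, and since $\bbE_N$ is an $\caA_{\Lambda_N}$-bimodule map (this follows from the tensor-product definition (\ref{eq:cond-exp}) of $\bbE_N$), we have $\bbE_N(BA)=B\,\bbE_N(A)$. Hence
\[
\frac{\lV BA-\bbE_N(BA)\rV}{f(N)}
=\frac{\lV B\bigl(A-\bbE_N(A)\bigr)\rV}{f(N)}
\le \lV B\rV\,\lV A\rV_f,
\]
which contributes the factor $1$ in the maximum.

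For $N<M$, we simply use $\lV BA-\bbE_N(BA)\rV\le 2\lV BA\rV\le 2\lV B\rV\,\lV A\rV_f$, and since $f$ is decreasing, $1/f(N)\le 1/f(M)$, giving the bound $\frac{2}{f(M)}\lV B\rV\lV A\rV_f$. Taking the max over the two cases yields $\sup_N\frac{\lV BA-\bbE_N(BA)\rV}{f(N)}\le \max\{1,\tfrac{2}{f(M)}\}\,\lV B\rV\,\lV A\rV_f$, and adding the $\lV BA\rV$ contribution produces exactly the claimed inequality. There is no real obstacle here — the only subtlety is remembering that $f$ is decreasing so the bound $1/f(N)\le 1/f(M)$ runs the right way in the $N<M$ case, and that the conditional expectation factors out $B$ when $N\ge M$.
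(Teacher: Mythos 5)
Your proposal is correct and follows essentially the same argument as the paper: the same case split on $N$ relative to $M$, using $\bbE_N(BA)=B\,\bbE_N(A)$ for large $N$ and the trivial bound $2\lV B\rV\lV A\rV$ with the monotonicity of $f$ for small $N$. (Whether the boundary case $N=M$ is placed in one branch or the other is immaterial.)
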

\begin{proof}
This follows from the following inequality:
\begin{align}
	\begin{split}
&\lV
BA-\bbE_N(BA)
\rV\\
&\le \left\{
\begin{gathered}
2\lV B \rV \lV A\rV,\quad N\le M,\\
\lV
B\lmk A-\bbE_N(A)\rmk
\rV,\quad N>M.
\end{gathered}
\right.
\\
&\le \left\{
\begin{gathered}
2\lV B \rV \lV A\rV,\quad N\le M,\\
\lV B\rV \lV A\rV_f f(N)\quad N>M.
\end{gathered}
\right.
	\end{split}
\end{align}
\end{proof}
\begin{lem}\label{n47}
Let $f,f_1:(0,\infty)\to(0,\infty)$ be continuous decreasing functions.
Suppose that
and
\begin{align}
\lim_{N\to\infty}\frac{f(N)}{f_1(N)}=0.
\end{align}

Then we have
\begin{align}\label{215}
\lim_{M\to \infty}\lV A- \bbE_{M}(A)\rV_{f_1}=0,\quad A\in\caD_{f}.
\end{align}
\end{lem}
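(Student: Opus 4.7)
The plan is to unpack the definition
\[
\lV A-\bbE_M(A)\rV_{f_1}=\lV A-\bbE_M(A)\rV+\sup_{N\in\nan}\frac{\lV (A-\bbE_M(A))-\bbE_N(A-\bbE_M(A))\rV}{f_1(N)}
\]
and show that both summands tend to $0$ as $M\to\infty$. The essential tool is the tower property $\bbE_N\bbE_M=\bbE_{\min(N,M)}$ (recorded in Appendix A), which collapses the inner expression into a form that can be fed directly into the hypothesis $f(N)/f_1(N)\to 0$ together with the assumption $A\in\caD_f$.

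For the norm summand, the bound $\lV A-\bbE_M(A)\rV\le \lV A\rV_f\, f(M)$ is immediate from the definition of $\caD_f$, and the right-hand side vanishes as $M\to\infty$ by the standing requirement $\lim_{t\to\infty}f(t)=0$ built into that definition.

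For the supremum summand, I would split the sup at $N=M$. In the regime $N\ge M$, the tower property gives $\bbE_N(A-\bbE_M(A))=\bbE_N(A)-\bbE_M(A)$, so the numerator collapses to $\lV A-\bbE_N(A)\rV$, yielding the pointwise bound $\lV A\rV_f\, f(N)/f_1(N)$; taking the supremum over $N\ge M$ and invoking $f(N)/f_1(N)\to 0$ shows this tail vanishes. In the regime $N<M$, the tower property instead gives $\bbE_N(A-\bbE_M(A))=0$, so the numerator is exactly $\lV A-\bbE_M(A)\rV\le \lV A\rV_f\, f(M)$; using $1/f_1(N)\le 1/f_1(M)$ for $N<M$ (from monotonicity of $f_1$) bounds the ratio uniformly by $\lV A\rV_f\, f(M)/f_1(M)$, which again tends to $0$ by hypothesis.

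I do not expect a real obstacle: the whole argument reduces to a case analysis on $N\lessgtr M$ combined with the elementary identity $\bbE_N\bbE_M=\bbE_{\min(N,M)}$, and the decay hypothesis $f/f_1\to 0$ is precisely what is needed to close both cases. The only bookkeeping point worth noting is that, in the $N<M$ branch, the inequality $1/f_1(N)\le 1/f_1(M)$ goes the right way exactly because $f_1$ is decreasing.
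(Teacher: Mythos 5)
Your proof is correct and follows essentially the same route as the paper's: the same case split at $N\lessgtr M$ using $\bbE_N\bbE_M=\bbE_{\min(N,M)}$, the same monotonicity bound $1/f_1(N)\le 1/f_1(M)$ in the $N<M$ branch, and the same use of $f/f_1\to 0$ (the paper handles the plain-norm term by density of local observables rather than via $f(M)\to 0$, an immaterial difference).
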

\begin{proof}
Let $A\in\caD_f$.
By the definition of $\caA$, we have
$\lim_{M\to \infty}\lV A- \bbE_{M}(A)\rV=0$.
We note that for $N\in\nan$,
\begin{align}
	\begin{split}
&\frac{\lV A- \bbE_{M}(A)-\bbE_N\lmk
A- \bbE_{M}(A)
\rmk\rV}{f_1(N)}
=\left\{
\begin{gathered}
\frac{\lV A-\bbE_N\lmk
A
\rmk\rV}{f_1(N)},\quad M\le N,\\
\frac{\lV A- \bbE_{M}(A)\rV}{f_1(N)}
,\quad M> N,
\end{gathered}
\right.\\
&=
\left\{
\begin{gathered}
\frac{\lV A-\bbE_N\lmk
A
\rmk\rV}{f(N)}\frac{f(N)}{f_1(N)},\quad M\le N,\\
\frac{\lV A- \bbE_{M}(A)\rV}{f(M)}\frac{f(M)}{f_1(N)}
,\quad M> N\\
\end{gathered}
\right.\\
&\le \lV A\rV_{f}
\left\{
\begin{gathered}
\frac{f(N)}{f_1(N)},\quad M\le N,\\
\frac{f(M)}{f_1(M)}
,\quad M> N\\
\end{gathered}
\right.\\
&\le\lV A\rV_{f}
\sup_{M\le L\in\nan}\lmk \frac{f(L)}{f_1(L)}\rmk
 \to 0,\; M\to\infty.
 	\end{split}
\end{align}
Hence we obtain (\ref{215}).
\end{proof}

\end{document}